\newif\ifextended%
\crefname{algorithm}{Alg.}{Algs.}
\Crefname{algorithm}{Algorithm}{Algorithms}
\crefname{appendix}{App.}{App.}
\Crefname{appendix}{Appendix}{Appendices}
\crefname{corollary}{Corol.}{Corolls.}
\Crefname{corollary}{Corollary}{Corollaries}
\crefname{conjecture}{Conjecture}{Conjectures}
\Crefname{conjecture}{Conjecture}{Conjectures}
\crefname{definition}{Def.}{Defs.}
\Crefname{definition}{Definition}{Definition}
\crefname{figure}{Fig.}{Figs.}
\Crefname{figure}{Figure}{Figures}
\crefname{lemma}{Lemma}{Lemmas}
\Crefname{lemma}{Lemma}{Lemmas}
\crefname{problem}{Prob.}{Probs.}
\Crefname{problem}{Problem}{Problems}
\crefname{proposition}{Prop.}{Props.}
\Crefname{proposition}{Proposition}{Propositions}
\Crefname{section}{Section}{Sections}
\crefname{section}{Sect.}{Sect.}
\crefname{subsection}{Sect.}{Sect.}
\Crefname{subsection}{Section}{Sections}
\crefname{subsubsection}{Sect.}{Sect.}
\Crefname{subsubsection}{Section}{Sections}
\crefname{table}{Table}{Tables}
\Crefname{table}{Table}{Tables}
\crefname{theorem}{Thm.}{Thms.}
\Crefname{theorem}{Theorem}{Theorems}
\theoremstyle{acmdefinition}
\newtheorem{problem}{Problem}
\DeclarePairedDelimiter\abs{\lvert}{\rvert}
\newcommand\algonameplain{RePBubLik}
\newcommand\algoname{\textsc{\algonameplain}}
\newcommand\algonameplus{\textsc{\algonameplain}+}
\DeclareMathOperator\argmax{\mathrm{arg}\max}
\newcommand\badletter{\mathcal{P}}
\newcommand\bad[1]{\badletter(#1)} % chktex 36
\newcommand\badc[2]{\badletter_{#1}(#2)} % chktex 36
\newcommand\badthres{r}
\newcommand\bias[1]{\rho(#1)}
\newcommand{\bubblechange}[5]{\Delta(#1,#2,#3,#4,#5)}
\newcommand\bubble[3]{\mathsf{B}^{#3}_{#1}\left(#2 \right)}
\newcommand\card[1]{\abs{#1}}
\newcommand{\centr}[3]{\mathsf{r}^{#1}\left(#2; #3 \right)}
\newcommand{\centrdict}{\mathcal{R}}
\newcommand{\cP}{\mathbb{P}}
\newcommand\expect[2]{\mathbb{E}_{#1} \left[#2 \right]}
\newcommand\Gnew{G_{\mathrm{new}}}
\newcommand\good[1]{\mathcal{Z}(#1)} % chktex 36
\newcommand\goodthres{b}
\newcommand\optnode{\mathrm{opt}}
\newcommand\pol[2]{\mathcal{P}_{#1} \left(#2 \right)}
\newcommand\titlefirstpart{\algoname: Reducing the Polarized Bubble Radius}
\newcommand\titlesecondpart{with Link Insertions}
\newcommand{\overunderset}[3]{\binrel@{#3}%
\binrel@@{\mathop{\kern\z@#3}\limits^{#1}_{#2}}}
\newcommand\towithin[4]{#1 \overunderset{#3}{#4}{\rightsquigarrow} #2}
\newcommand\nottowithin[4]{#1 \overunderset{#3}{#4}{\not\rightsquigarrow} #2}
\newcommand\trans{M}
\newcommand\ttime[3]{T^{#1}_{#2} \left(#3 \right)}
\newcommand{\vcolor}[1]{\mathsf{c}(#1)} % chktex 36
\newcommand{\ovcolor}[1]{\bar{\mathsf{c}}(#1)} % chktex 36
\newcommand\visitattime[2]{\Psi_{#1}(#2)} % chktex 36
\begin{document}
\fancyhead{} % From ACM instructions (weird...)
\title[\titlefirstpart\ \titlesecondpart]{%
  \texorpdfstring{\titlefirstpart\\\titlesecondpart}{%
\titlefirstpart\ \titlesecondpart}}
\ifextended%
\subtitle{Extended version}
\fi

\author{Shahrzad Haddadan}
\affiliation{%
  \department{Dept.~of Computer Science \& Data Science Initiative}
  \institution{Brown University}
  \streetaddress{115 Waterman St.}
  \city{Providence}
  \state{RI}
  \postcode{02912}
  \country{USA}
}
\email{shahrzad\_haddadan@brown.edu}

\author{Cristina Menghini}
\affiliation{%
  \department{DIAG}
  \institution{Sapienza University}
  \streetaddress{Via Ariosto 25}
  \city{Rome}
  \postcode{00185}
  \country{Italy}
}
\email{menghini@diag.uniroma1.it}

\author{Matteo Riondato}
\orcid{0000-0003-2523-4420} % chktex 8
\affiliation{%
  \department{Dept.~of Computer Science}
  \institution{Amherst College}
  \streetaddress{AC \#2232 Amherst College}
  \city{Amherst}
  \state{MA}
  \postcode{01002}
  \country{USA}
}
\email{mriondato@amherst.edu}

\author{Eli Upfal}
\affiliation{%
  \department{Dept.~of Computer Science}
  \institution{Brown University}
  \streetaddress{115 Waterman St.}
  \city{Providence}
  \state{RI}
  \postcode{02912}
  \country{USA}
}
\email{eli@cs.brown.edu}

\begin{teaserfigure}
  %\textit{Quod semper in re publica tenendum est, ne plurimum valeant\\
  %plurimi.} --- Cicero, \textit{De Re Publica}, L.II, 39
  \textit{Democracy begins in conversation} --- John Dewey (attr.)
  \Description{Not a figure, just a quote.}
\end{teaserfigure}

%! TEX root = repbublik.tex
\begin{abstract}

 The topology of the hyperlink graph among pages expressing different
 opinions may influence the exposure of readers to diverse content. Structural
 bias may trap a reader in a ``polarized'' bubble with no access to
 other opinions. We model readers' behavior as random walks. A node is in a
 ``polarized'' bubble if the expected length of a random walk from it to a page
 of different opinion is large. The structural bias of a graph is the sum of the
 radii of highly-polarized bubbles. We study the problem of decreasing the
 structural bias through edge insertions. ``Healing'' all nodes with high
 polarized bubble radius is hard to approximate within a logarithmic factor, so we
 focus on finding the best $k$ edges to insert to maximally reduce the
 structural bias. We present \algoname, an algorithm that leverages a variant of
 the random walk closeness centrality to select the edges to insert.  \algoname\
 obtains, under mild conditions, a constant-factor approximation. It reduces the
 structural bias faster than existing edge-recommendation methods, including
 some designed to reduce the polarization of a graph.

%  The topology of a graph of web pages expressing different opinions on a topic creates a structural bias that prevents users from being exposed to diverse content. That is, a random walk starting from a node of one color may be trapped in a ``polarized'' bubble of nodes of the same color with high probability. We study the problem of decreasing the structural bias of the network, defined as the sum of the radii of the polarized bubbles, through edge insertions. We show that the problem of ``healing'' all nodes with high polarized bubble radius is hard to even approximate within a constant factor, and we pose the problem of finding the best $k$ edges to insert to maximize the reduction in the structural bias. We present \algoname, a constant-factor approximation algorithm, that leverages a task-specific variant of the random walk closeness centrality to select the edges to insert. Our experimental evaluation of \algoname\ shows that it is capable of reducing the structural bias faster than existing edge-recommendation methods, including some designed to reduce the polarization of a graph.
\end{abstract}

\begin{CCSXML}
<ccs2012>
<concept>
<concept_id>10003752.10010061.10010065</concept_id>
<concept_desc>Theory of computation~Random walks and Markov chains</concept_desc>
<concept_significance>300</concept_significance>
</concept>
<concept>
<concept_id>10002951.10003260.10003282.10003292</concept_id>
<concept_desc>Information systems~Social networks</concept_desc>
<concept_significance>300</concept_significance>
</concept>
<concept>
<concept_id>10003752.10003809.10003635</concept_id>
<concept_desc>Theory of computation~Graph algorithms analysis</concept_desc>
<concept_significance>500</concept_significance>
</concept>
<concept>
<concept_id>10002951.10003260.10003261.10003270</concept_id>
<concept_desc>Information systems~Social recommendation</concept_desc>
<concept_significance>500</concept_significance>
</concept>
</ccs2012>
\end{CCSXML}

\ccsdesc[300]{Theory of computation~Random walks and Markov chains}
\ccsdesc[300]{Information systems~Social networks}
\ccsdesc[500]{Theory of computation~Graph algorithms analysis}
\ccsdesc[500]{Information systems~Social recommendation}

% Separate the keywords with commas.
\keywords{Bias, Fairness, Polarization}

\maketitle

%! TEX root = repbublic.tex
\section{Introduction}\label{sec:intro}
The World Wide Web  often contains thousands or even  millions of pages on every
topic, covering the whole spectrum of opinions. Exposure to \emph{diverse
content} is necessary to obtain a complete picture about a topic.
% 20201103 Matteo: Commenting out to stay focused on the web.
%On social media platforms, this exposure depends on the
%algorithms governing the user feeds and the friendship, event, and content
%recommendations~dcitep{garimella2018quantifying,guerra2013measure,mejova2014controversy}.
%On the Web, it
This exposure depends on the hyperlinks connecting the pages to each other. It
can be argued that enabling easier access to diverse content improves society as
it creates a more informed and less polarized general
public~\citep{benhabib1996toward}. Indeed politicians have strongly promoted and
even requested that audiences are exposed to varied content~\citep{LeFebvre17}.

%There are technical and human limitations that hinder navigation between content
%expressing different opinions, thus preventing individuals from getting the desirable diverse exposure. In other words,
The fact that diverse information is easily \emph{available} does not imply that
\emph{exploring} such diverse information is easy. Rather, echo chambers and
polarization on (social) media and
blogs~\citep{adamic2005political,conover2011political,flaxman2016filter} keep
the user in a \emph{homogeneous bubble}, exposing them only to agreeable
information~\citep{bakshy2015exposure}, and leading to conflicts between users
in different bubbles~\citep{KumarHLJ18,CossardDFMKMPS20}.

% 20201103 Matteo: Commenting out to stay focused on the web.
%The goals of commercial social media platforms are often at odds with the goal
%of exposing users to diverse content~\citep{Pariser11}.
%Here, the content that a user can easily access is determined by algorithms that
%try to maximize engagement on the basis of \textit{selective exposure theory}.
% 20200816 Matteo: Commenting out because it doesn't add much and takes space.
%This theory
%states that people are more likely to engage with information that reinforce
%their personal beliefs (i.e., biases), and counter-factual
%information~\citep{lazarsfeld1944people} with the involuntary goal of minimizing
%the uncomfortable feeling of cognitive dissonance~\citep{festinger1957theory},
%and maximizing the comfortable feeling of reading belief-reinforcing
%information~\citep{kastenmuller2010selective}. Similarly, political goals are
%often also at odds with enabling diverse exploration.
A web user can freely click on any hyperlink on the page they are currently
visiting, but the choice of which hyperlinks to include in the page is with the
website owner or editor, who, if not careful, may stop the user from being
exposed to diverse opinions. In other words, the hyperlink topology of a website
may suffer from \emph{structural bias} that traps the user in a bubble of
one-sided content without them knowing~\citep{horta2020youtube,menghini2020wikipedias}.
For example, structural bias on topic-induced networks, such as Wikipedia
topic-induced subgraphs, prevents users from building a well-rounded knowledge
about the topic. On query-/user-induced recommendation networks such as those on
Amazon and YouTube, structural bias hinders the discovery of diversified
content, reducing
serendipity~\citep{mouzhi2010serendipity,kotkov2016chalserendipity,anagnostopoulos2019principal}.
Structural bias thus limits the user's freedom while navigating the Web.

Socially-minded website editors would try to \emph{minimize  such structural
bias} by adding appropriate links to pages in the website. As an editor can only
modify a few pages and add only a few links to each edited page, they need to
carefully choose what page $p$ to edit, and what pages to link to from $p$.  The
goal of our work is to develop an algorithm that can give editors
recommendations for what links to add in order to reduce the structural bias.
There are key technical challenges that must be solved in order to give
effective link recommendations: \textit{1.} quantify the structural bias of a
page $p$, i.e., how hard it is to reach pages of a different opinion from $p$;
and \textit{2.} decide which of these pages should be linked from $p$. Existing
approaches to link recommendation, such as those based on vertex similarity,
fall short at this task because they are oblivious to the network structural
bias. Their recommendations often \emph{increase} the bias, rather than
decreasing it~\citep{MuscoMT18}, especially for highly controversial topics such
as political blogs (see also our experimental results in \cref{sec:exper}).
Thus, there is a need for a radically different approach to suggest links that
decrease the structural bias.

\paragraph{Contributions.} We study the problem of reducing the structural bias
of a graph by adding edges, and propose an algorithm, \algoname, to suggest such
edges. Our contributions are the following.

\begin{itemize}
  \item We consider directed graphs with vertices of two colors, representing a
    network of webpages on the same topic, with the two colors identifying the
    two opposite opinions on the topic, and edges representing links
    between pages. We define the \emph{(Polarized) Bubble Radius} (BR) of a
    vertex $p$ as a novel measure to quantify the structural bias of $p$ (see
    \cref{def:bubbleradius}), based on a task-specific variant of the hitting
    time for random walks, which models the navigation of a user on the web
    \citep{rwwithbackbutton,rwwithrestart}. The BR is the expected number of
    steps to go from $p$ to a page of different opinion, and can be easily
    estimated with a sampling-based approach with probabilistic guarantees
    (\cref{lem:bubbleapprox}), which enables us to tackle the first of the key
    challenges. % mentioned above.
  \item We define the \emph{structural bias} of a graph $G$ as the sum of the
    BRs of vertices with high BR (Eq.~\ref{eq:bias}). Completely removing the
    bias is APX-hard by reduction from set cover (see \cref{lem:reduction}).
    We therefore state the \emph{$k$-edge structural bias decrease} problem as
    the task of finding the set of $k$ pairs of vertices of different color
    such that adding the edge between the vertices in each pair would
    \emph{maximally} decrease the structural bias, over all possible sets of $k$
    pairs (see \cref{prob:kinsertions,thm:main}). This problem
    connects two areas: link recommendation and polarization reduction.
  \item We present \algoname, an efficient approximation algorithm for
    the $k$-edge structural bias decrease problem, that recommends the addition
    of $k$ edges between vertices of different color. Under mild conditions, the
    resulting decrease of the structural bias is within a constant factor of the
    optimal. Website editors have limited control on the probability that a
    newly added edge will be traversed by the users, so our algorithm makes no
    assumption or impose any restriction on it, as this probability is
    essentially external. At the core of \algoname\ is an analysis of the
    submodularity of the objective function (see~\cref{lem:submodular}),
    combined with the use of a task-specific variant of random-walk closeness
    \citep{rwclosenessWhiteSmyth}, a well-established centrality measure.
    \algoname\ requires good estimations of the random walk closeness, so we
    also give an approximation algorithm for this quantity (see
    \cref{lem:centr}).
  \item We evaluate \algoname\ on eight real datasets. We compare it to
    baselines and existing methods for edge recommendation either designed with
    the goal of reducing the controversy of a
    graphs~\citep{garimella2017reducing} or with the more general purpose of
    completing the network's link structure~\citep{grover2016node2vec}. Our
    algorithm leads to a faster reduction of the average BR (i.e., requiring
    fewer edge insertions) than existing contributions.
\end{itemize}

\ifextended%
The proofs of our theoretical results are in \cref{sec:appendix}.
\else
Due to space limitations, many of our proofs are in  the appendix of the extended online version~\citemissing.
\fi

%! TEX root = repbublik.tex
\section{Related work}\label{sec:related}
Polarization has long been studied in political
science~\citep{Sunstein02,Isenberg86}, and the recent diffusion of (micro-) blog
and social media platforms brought the issue to the attention of the broad
computer science community. Many works focused on showing the existence of
polarization on these
platforms~\citep{MoralesBLB15,adamic2005political,CossardDFMKMPS20,conover2011political,flaxman2016filter},
and on modeling, quantifying, and reducing
polarization~\citep{garimella2018quantifying,garimella2017reducing,MuscoMT18,ChitraM20,MakatosTT17,BeckerCDAG20,GarimellaGPT17,MatakosTG20,AslayMGG18,Akoglu14,GarimellaDFMGM18,MakatosTT17,NelimarkkaLS18,LiaoF14a,LiaoF14b,MunsonLR13},
or the glass  ceiling effect~\citep{StoicaC19,StoicaRC18,StoicaHC20}.
%Additionally, computer scientists in Human-Computer Interaction have proposed to
%directly help the user by encouraging behavior that leads to less polarization
%through browser widgets and better app and website design
%\citep{NelimarkkaLS18,LiaoF14a,LiaoF14b,MunsonLR13}.
The literature is rich, to
the point that times seems ripe for an in-depth survey on the topic. Due to
space limitations, we discuss here only the relationship between our
work and the most relevant algorithmic contributions to polarization
reduction~\citep{garimella2017reducing,ChitraM20,MuscoMT18,AslayMGG18,BeckerCDAG20,GarimellaGPT17,MatakosTG20,menghini2019wikipol,menghini2020wikipedias,StoicaRC18}.

A first important difference of our work with respect to most previous
contributions is that they consider a network of \emph{users}, with edges
representing notions such as friendship or endorsement (e.g.,
retweets)~\citep{garimella2017reducing,ChitraM20,MuscoMT18,AslayMGG18,BeckerCDAG20,GarimellaGPT17,MatakosTG20,StoicaRC18}.
We focus instead on networks of \emph{content}, such as web pages linked to each
other, or products that are connected when similar. This deep difference makes
our contribution quite orthogonal to the ones in these previous works: we focus
on the polarization that is introduced by the topology of the network, rather
than on the polarizing effect of content on users or on the effect of users on
each other. We believe both aspects are important, but the structural bias we
focus on has only been subject to few
studies~\citep{menghini2019wikipol,menghini2020wikipedias}. These works, relying
on the notion of weighted reciprocity, propose a static and dynamic analysis of
structural bias on Wikipedia. The measure of structural bias we use is not
tailored to a specific website.

A second relevant difference from many previous works is that we consider the
``opinion'' of a page (i.e., a vertex) to be fixed, as it depends on its
content, while many previous contributions consider different models of user
opinion dynamics~\citep{MosselT17,DasGM14} to study the evolution of such
opinions as the users are exposed to different content or recommended different
friendships. The problem of recommending changes to the content of a page to
modify the opinion expressed in it is interesting but outside the scope of our
work. Instead, we focus on recommending the addition of links between pages, to
reduce the structural bias.

An interesting line of work studies how to reduce polarization in the content
seen by the users, by adapting information diffusion approaches through better
selection of the seed set for
cascades~\citep{AslayMGG18,BeckerCDAG20,GarimellaGPT17,MatakosTG20,StoicaHC20},
or by directly acting on recommendation systems~\citep{RastegarpahahGC2019}.
These methods can not be adapted to the problem we study, as they do not act on
the graph of content, but on that of users.

The most similar methods to ours are those that act on the structure of the
graph~\citep{ChitraM20,MuscoMT18,garimella2017reducing,StoicaRC18}, although as
we mentioned, they consider a network of users, not of content.
\citet{MuscoMT18} propose a network-design approach: they aim to find the best
set of edges between vertices such that the resulting graph would minimize both
disagreement and polarization. Rather than a ``design-from-scratch'' approach,
which seems mostly of theoretical relevance, we consider instead a practical
incremental approach that suggests modifications to an existing network.
%\citep{ChitraM20} study how to modify edges and edge weights \itodo{check} to
%minimize disagreement between each user and their neighbors, where the opinion
%of a user depends on the opinions of their neighbors. This goal is quite
%different than ours, as we are, in a way, trying to \emph{increase the
%connectivity} between pages expressing different opinions, to allow a user to
%more easily navigate across pages covering the whole spectrum of opinions.
%\citet{StoicaRC18} propose a method for adding edges based on random walks of
%length 2, but their method more likely connects nodes of the same opinion, thus
%increasing the polarity, rather than decreasing it.
Like us, \citet{garimella2017reducing} consider a graph polarization measure
based on random walks~\citep{garimella2018quantifying}. This measure essentially
quantifies the probability that a user of one opinion is exposed to content from
a user of a different opinion, thanks to a chain of retweets (represented by the
random walks). The measure is based on a variant of personalized PageRank for
sets of users with different opinions. The task requires to recommend new edges,
i.e., retweets, to increase this probability. Our measure of structural bias is
instead defined on the basis of
the (Polarized) Bubble Radius (BR) (\cref{def:bubbleradius}), which is a
vertex-dependent measure that represents the expected number of steps, for a
user starting at the page represented by vertex $v$, to reach, with a random
walk, a vertex with color different from $v$, representing a page expressing a
different opinion. Our measure is appropriate for our task of suggesting new
edges to make it easier for user to reach pages of different opinions. In
\cref{sec:exper} we compare our approach to that of
\citet{garimella2017reducing}.

An important line of work in graph analysis and mining looked at manipulating
the topology to modify different interesting characteristic quantities of
the graph, such as shortest paths and related
measures~\citep{parotsidis2015selecting,papagelis2011suggesting,demaine2010minimizing,perumal2013minimizing},
various forms of
centrality~\citep{parotsidis2016centrality,bergamini2018improving,d2019coverage,wkas2020manipulability,medya2018group,mahmoody2016scalable,angriman2020group},
and more~\citep{arrigo2016edge,arrigo2016updating,chan2014make,tong2012gelling,zeng2012manipulating}.
Despite the fact that we consider a specific centrality to choose the source of
the added edges, these methods cannot be used to solve our task of interest.

Another body of work related to ours are those which estimate graph properties using random walks \cite{Berarw,ChiHad2018rw,peres2018rw,avegragerw}. The studied properties are not defined based on random walks, rather random walks are used as a tool to estimate them. Here  based on random walks, we define a new property for networks: \emph{the structural bias},  and we use random walks to estimate it.  
%! TEX root = repbublik.tex
\section{Preliminaries}\label{sec:prelims}
Let $G=(V,E)$ be a directed weighted graph with $\card{V} = n$ vertices, such
that no vertex $v \in V$ has only incoming edges and no outgoing edges. $V$ is
partitioned in two disjoint sets $R$ and $B$ (i.e., $R \cap B = \emptyset$ and
$R \cup B = V$), called ``red'' or ``blue'' vertices, respectively. We denote
the color of a vertex $v$ by $\vcolor{v}$ and its opposite color by
$\ovcolor{v}$. The sets of all \emph{other} vertices of the same color as $v$ is
denoted as $C_v$ and the sets of all vertices of color different than $v$ is
denoted as $\bar{C}_v$.

The edge weights are \emph{transition probabilities}, as follows. Let $\trans$
be a $n \times n$ right-stochastic \emph{transition matrix} associated to $G$,
i.e., a matrix such that each entry $m_{i,j}$ is a probability, with $m_{i,j}=0$
if $(i,j) \notin E$, and such that $\sum_{j=1}^n m_{i,j}=1$.

We are interested in random walks on the graph $G$ using the transition matrix
$\trans$. Intuitively, a random walk starting at a vertex $v$ explores the graph
by choosing at each step an outgoing edge from the current vertex, with
probability equal to the weight of such edge, independently from previous
choices. Let $S \subseteq V$ and $v \in V$. Let $T_v(S)$ be the random variable
indicating the first instant when a random walk from $v$ hits (i.e., reaches) any
vertex in $S$. The quantity $\expect{G}{T_v(S)}$ is known as the \emph{hitting
time of $S$ from $v$}, where the expectation is over the space of all random
walks on $G$ starting from $v$, with transition probabilities given by $\trans$.

Variants of random walks, such as random walks with restarts or with back
button, are widespread models for network
exploration~\citep{rwwithbackbutton,rwwithrestart}. It is realistic to assume
that there is an upper bound $t$, which we call the \emph{exploration
factor}, on the length of a walk performed by the users. For example, we can
assume that there is an upper limit on the number of pages that a user will
visit one after the other in a browsing session. The value of the parameter $t$
can be derived, for example, from traces of visits. In most practical cases, $t$
is likely to be bounded by a polylogarithmic quantity in the number of nodes, if
not a constant.

For a random walk starting from $v \in V$, given a set $S \subseteq V$, we
define the random variable $\ttime{t}{v}{S}$ as $\min \{t, T_v(S)\}$.
This variable is more appropriate for measuring the length of browsing sessions,
which have bounded length, than the unbounded length classically used when
discussing random walks.

For a graph $Z$, any vertex $u$, and any set $S$ of vertices, let
$\towithin{u}{S}{\mathsf{cond}}{Z}$, denote the event that a random walk in $Z$
from $u$ hits a vertex in $S$ without first visiting any vertex in $\bar{C}_u$
and while satisfying the condition $\mathsf{cond}$ on the number of
steps needed to hit  $S$. For example, $\towithin{u}{S}{< t}{Z}$ is the event
that a random walk in $Z$ from $u$ hits a vertex in $S$ in \emph{less than $t$
steps}, without first visiting any vertex in $\bar{C}_u$. We denote the
complementary event as $\nottowithin{u}{S}{\mathsf{cond}}{Z}$.

\subsection{Random-Walk Closeness Centrality}\label{sec:prelims:rwcc}
We adapt the definition of the standard random-walk closeness
centrality~\citep{rwclosenessWhiteSmyth} to bounded random walks so that the
contribution to the centrality of $v$ by vertices that do not reach $v$ in less
than $t'$ steps (in expectation) is zero, for any $t'$.

\textbf{Random-walk closeness centrality (bounded form).} For a vertex $v \in
V$, and any $t'$, the \emph{$t'$-bounded} Random Walk Closeness Centrality
(RWCC) measure with respect to subset $S \subseteq V$ is
\begin{align*}
  \centr{t'}{v}{S} &\doteq \frac{1}{\card{S}} \sum_{w \in S} \left( t' -
  \expect{G}{\ttime{t'}{w}{v}} \right) \\
  & = \frac{1}{\card{S}} \sum_{w \in S}
  \sum_{i=1}^{t'} (t'- i) \cP\left( \towithin{w}{v}{=i}{G} \right) \enspace.
\end{align*}
Computing the exact RWCC is expensive. To estimate $\centr{t'}{v}{S}$, we
pick $z$ vertices ${\{w_i\}}_{i=1}^r$ u.a.r.\ from $S$, and run some $\kappa$
random walks to obtain an estimate $\bar{h}_{w_{i}}$ of
$\expect{G}{T^{t'}_{w_i}(v)}$ for each $w_i$. The quantity  $\bar{r}(v) \doteq
t' - \nicefrac{1}{z} \sum_{i=1}^z \bar{h}_{w_{i}}$ is a good approximation of
$\centr{t'}{v}{S}$.

\begin{restatable}{lemma}{lemcentr}\label{lem:centr}
  Let $z \geq {(\nicefrac{t'}{2\varepsilon})}^2 \delta^{-1}$. Then
  \[
    \cP\left(\card{\bar{r}(v) - \centr{t'}{v}{S}} \geq \varepsilon \right) \leq
    \delta \enspace.
  \]
\end{restatable}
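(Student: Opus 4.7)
The plan is to reduce the statement to a textbook application of Chebyshev's inequality by viewing the whole estimator as an average of i.i.d.\ bounded random variables and bounding their variance uniformly in the walk dynamics.

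First I would define, for each $i \in \{1,\dots,z\}$, the random variable $X_i \doteq t' - \ttime{t'}{w_i}{v}$, where $w_i$ is drawn uniformly at random from $S$ and the random walk used to compute $\ttime{t'}{w_i}{v}$ is taken independently of everything else. (Averaging $\kappa$ walks per sampled vertex only decreases the conditional variance, so it suffices to prove the claim with $\kappa=1$; this is why the statement of the lemma contains only $z$.) By construction $\bar{r}(v) = \frac{1}{z}\sum_{i=1}^z X_i$. Conditioning on $w_i$ and using linearity of expectation together with the definition of $\centr{t'}{v}{S}$ yields
\[
\expect{}{X_i} \;=\; \frac{1}{\card{S}}\sum_{w\in S}\bigl(t' - \expect{G}{\ttime{t'}{w}{v}}\bigr) \;=\; \centr{t'}{v}{S},
\]
so $\bar{r}(v)$ is an unbiased estimator of the quantity of interest.

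Next I would bound $\variance{X_i}$. Because $\ttime{t'}{w_i}{v} = \min\{t', T_{w_i}(v)\} \in [0,t']$, we have $X_i \in [0,t']$. Any random variable supported on an interval of length $t'$ has variance at most $(t')^2/4$ (attained by a two-point distribution placing mass $1/2$ on each endpoint). Hence $\variance{X_i} \leq (t')^2/4$. Since the pairs $(w_i,\text{walk}_i)$ are independent across $i$,
\[
\variance{\bar{r}(v)} \;=\; \frac{1}{z^2}\sum_{i=1}^z \variance{X_i} \;\leq\; \frac{(t')^2}{4z}.
\]

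Finally I would apply Chebyshev's inequality:
\[
\cP\bigl(\card{\bar{r}(v) - \centr{t'}{v}{S}} \geq \varepsilon\bigr) \;\leq\; \frac{\variance{\bar{r}(v)}}{\varepsilon^2} \;\leq\; \frac{(t')^2}{4z\varepsilon^2}.
\]
The hypothesis $z \geq {(t'/(2\varepsilon))}^2\delta^{-1}$ makes the right-hand side at most $\delta$, giving the claim. There is no real obstacle here; the only subtlety worth stating explicitly is the variance reduction step that justifies analyzing just the coarser estimator with one walk per vertex, since the lemma's sample-complexity bound is uniform in $\kappa$. A Hoeffding-style bound would give a $\log(1/\delta)$ dependence, but Chebyshev is the right tool to recover exactly the $\delta^{-1}$ appearing in the lemma.
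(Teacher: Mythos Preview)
Your proposal is correct and follows essentially the same route as the paper: both arguments apply Chebyshev's inequality to the sample average, use independence of the $z$ samples, and bound each term's variance by $(t')^2/4$ via the elementary range bound (the paper cites this as Popoviciu's inequality). Your extra remark that $\kappa=1$ already suffices is a helpful clarification the paper leaves implicit.
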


%! TEX root = repbublik.tex
\section{Bubble Radius and Structural Bias}\label{sec:bubblebias}
We introduce the \emph{(Polarized) Bubble Radius} to quantify how \emph{likely}
users starting their random walk on a vertex $v \in V$ of one color, are to hit
a vertex of the other color in at most $t$ steps.

\begin{definition}\label{def:bubbleradius}
  The \emph{(Polarized) Bubble Radius (BR) $\bubble{G}{v}{t}$ of $v$ with
  exploration parameter $t$} is
  \[
    \bubble{G}{v}{t} \doteq \expect{G}{\ttime{t}{v}{\bar{C}_v}} \enspace.
  \]
\end{definition}

A random walk starting at a vertex $v$ with high BR is unlikely to hit a vertex
in $\bar{C}_v$ in  fewer-than-or-exactly $t$ steps. The following
lemma formalizes this idea on common models for web browsing (random walks with
restarts or with back button~\citep{rwwithbackbutton,rwwithrestart}).

\begin{restatable}{lemma}{lemrestart}\label{lem:restart}
  Let $r \in \mathbb{N}$, and consider a user who starts their random walk at
  $v \in V$ and may either restart their walk from $v$ or hit the back button up
  to $r$ times. Let $\mathcal{T}_v$ be the random variable denoting the number
  of steps such user takes to hit a vertex in $\bar{C}_v$. If
  $\bubble{G}{v}{t} \geq t (1 - \nicefrac{1}{8r})$, then
  $\cP\left(\mathcal{T}_v\leq t/2 \right) \leq 1/4$. If instead
  $\bubble{G}{v}{t} \leq b$ for some $b > 0$, then $\cP\left(\mathcal{T}_v > 4br
  \right) \leq 1/4$.
\end{restatable}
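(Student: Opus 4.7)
The plan is to derive both bounds via Markov's inequality applied to the capped hitting time $\ttime{t}{v}{\bar{C}_v}$, whose expectation equals $\bubble{G}{v}{t}$ by definition of the bubble radius, combined with a union bound over the at most $r$ walk segments the user performs when they exhaust their restart or back-button budget.

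For the first inequality I would introduce the non-negative random variable $X \doteq t - \ttime{t}{v}{\bar{C}_v}$, which has expectation $t - \bubble{G}{v}{t} \leq t/(8r)$ by hypothesis. Markov's inequality applied to $X$ then gives
\[
\cP\left(\ttime{t}{v}{\bar{C}_v} \leq t/2\right) = \cP\left(X \geq t/2\right) \leq \frac{t/(8r)}{t/2} = \frac{1}{4r}.
\]
If $\mathcal{T}_v \leq t/2$, then at least one of the user's (at most) $r$ walk segments starting from $v$ must have hit $\bar{C}_v$ within $t/2 < t$ of its own steps, so a union bound over the $r$ segments yields $\cP(\mathcal{T}_v \leq t/2) \leq r \cdot 1/(4r) = 1/4$.

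For the second inequality I would decompose the user's trajectory into at most $r$ walk segments, each terminating either by hitting $\bar{C}_v$ or at step $t$ (at which point a restart occurs); the length of each is bounded in distribution by $\ttime{t}{v}{\bar{C}_v}$. By linearity of expectation, $\expect{G}{\mathcal{T}_v} \leq r \cdot \bubble{G}{v}{t} \leq rb$, and a second application of Markov's inequality delivers $\cP(\mathcal{T}_v > 4br) \leq rb/(4br) = 1/4$.

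The main technical obstacle is handling the back-button mode, since segments initiated by a back step are neither independent nor fresh walks from $v$, which is exactly what the union-bound step in the first claim and the segment-length decomposition in the second claim rely on. I would address this with a coupling argument: each back step still consumes one unit of $\mathcal{T}_v$ and can only return the walker to a vertex in $C_v$ visited in some previous step (since the walker stops upon first hitting $\bar{C}_v$), so the back-button walk can be compared step-wise with a pure-restart walk, allowing both Markov bounds to transfer at the cost of at most constant-factor slack in the thresholds.
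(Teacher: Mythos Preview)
Your proposal is essentially the paper's own argument: the paper proves a ``Markov inequality for bounded random variables'' (equivalent to your trick of applying Markov to $X = t - \ttime{t}{v}{\bar{C}_v}$) to get the $1/(4r)$ bound on a single walk, then union-bounds over $r$ independent walkers whose traces cover the user's trajectory, and for the second claim it invokes Markov on $\mathcal{T}_v$ directly (with the implicit bound $\expect{}{\mathcal{T}_v}\le rb$ that you spell out). Your explicit concern about the back-button mode and the coupling sketch is in fact more careful than the paper, which simply asserts that the user's partial-walk traces are contained in the union of $r$ independent walkers' traces without distinguishing restarts from back steps.
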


Given $t$, it is easy to estimate $\bubble{G}{v}{t}$ for each vertex $v
\in V$ by sampling random walks from $v$. The following result, whose proof uses
the Hoeffding's bound and the union bound, shows the trade-off between the
number of sampled random walks and the accuracy in estimating the BR of $v$.

\begin{restatable}{lemma}{lembubbleapprox}\label{lem:bubbleapprox}
  For each $v \in V$, let $w^{(v)}_1,w^{(v)}_2,\dotsc, w^{(v)}_r$ be $r$ random
  walks from $v$ and stopped either when they hit a vertex of color
  $\ovcolor{v}$ or when they run for $t$ steps, whichever happens first. For
  $i=1,\dotsc,r$, let $b^{(v)}_i$ be the length of random walk $w^{(v)}_i$. Let
  \[
    \bar{B}(v) \doteq \frac{1}{r} \sum_{i=1}^r b^{(v)}_i \enspace.
  \]
  Let $\varepsilon, \delta \in (0,1)$. If $ r \ge
  \frac{t^2}{\varepsilon^2} \ln \frac{2n}{\delta},
  $ then
  \[
    \cP \left( \exists v \in V\ \text{s.t.}\ \abs{\bubble{G}{v}{t} -
    \bar{B}(v)} > \varepsilon \right) < \delta,
  \]
  where the probability is over the choice of the random walks.
\end{restatable}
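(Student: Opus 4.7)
The plan is to apply Hoeffding's inequality to each vertex separately and then combine the resulting bounds with a union bound. The key structural observations are: (i) for each fixed $v$, the $r$ random walks $w^{(v)}_1,\dotsc,w^{(v)}_r$ are sampled independently, so the lengths $b^{(v)}_1,\dotsc,b^{(v)}_r$ are i.i.d.\ random variables; (ii) by construction each walk is truncated at $t$ steps, so $b^{(v)}_i\in[0,t]$ almost surely; and (iii) the common expectation satisfies $\expect{G}{b^{(v)}_i}=\expect{G}{\ttime{t}{v}{\bar C_v}}=\bubble{G}{v}{t}$, directly from the definition of $\ttime{t}{v}{\cdot}$ as $\min\{t,T_v(\bar C_v)\}$ and of the BR in \cref{def:bubbleradius}.

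Next, I would invoke Hoeffding's inequality for bounded i.i.d.\ variables to obtain, for each fixed $v\in V$,
\[
\cP\left(\abs{\bar B(v)-\bubble{G}{v}{t}}>\varepsilon\right) \;\le\; 2\exp\!\left(-\frac{2r\varepsilon^2}{t^2}\right).
\]
Plugging in the hypothesis $r\ge\frac{t^2}{\varepsilon^2}\ln\frac{2n}{\delta}$ makes the exponent at most $-2\ln(2n/\delta)$, so the right-hand side is at most $2\bigl(\delta/(2n)\bigr)^2 \le \delta/n$ (a slack factor; the statement's constant is generous).

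Finally, a union bound over the $n$ vertices of $V$ yields
\[
\cP\left(\exists v\in V:\ \abs{\bar B(v)-\bubble{G}{v}{t}}>\varepsilon\right) \;\le\; n\cdot\frac{\delta}{n} \;=\; \delta,
\]
which is precisely the claimed bound. The argument is essentially routine once independence, boundedness, and the expectation identity are verified; no step is a real obstacle. The only point worth being careful about is ensuring that the independence used for Hoeffding is \emph{per-vertex}: the lemma only assumes the $r$ walks from the same source $v$ are independent, which is what Hoeffding needs, and the union bound handles correlations (or lack thereof) across different source vertices without requiring any joint independence assumption.
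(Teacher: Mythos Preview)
Your proposal is correct and follows exactly the approach the paper indicates: the paper does not spell out a detailed proof of this lemma but states that it ``uses the Hoeffding's bound and the union bound,'' which is precisely what you do. The identification of the per-vertex i.i.d.\ structure, the boundedness in $[0,t]$, and the expectation identity $\expect{G}{b^{(v)}_i}=\bubble{G}{v}{t}$ are all on target, and your observation that the stated sample-size constant is generous (the Hoeffding exponent is $-2r\varepsilon^2/t^2$, so half the stated $r$ would already suffice) is accurate.
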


In the rest of the work, we assume for simplicity to have access to the
\emph{exact} BR of every vertex. The above result makes this assumption
reasonable because computing approximations of extremely high quality is
relatively inexpensive.

\paragraph{The structural bias}%\label{sec:bias}
On the basis of the BR, we define two sets of vertices: \emph{cosmopolitan} and
\emph{parochial}. Given two reals $\goodthres$ and $\badthres$ with
$1\le \goodthres < \badthres \le t$, the set $\good{G}$ of \emph{cosmopolitan}
vertices contains all and only the vertices in $G$ with BR \emph{at most}
$\goodthres$, and the set $\bad{G}$ of \emph{parochial} vertices contains all
and only the vertices in $G$ with BR at least $\badthres$. For ease of
notation, we do not include $\goodthres$ and $\badthres$ in the notation for
$\good{G}$ and $\bad{G}$. In the rest of this work, we assume for simplicity
$\goodthres=2$ and $\badthres=t/2$, but this assumption can be easily removed.
$\good{G}$ and $\bad{G}$ are \emph{disjoint}, but they do not necessarily form a
partitioning of $V$. We will often consider the partitioning of $\bad{G}$ by
color, i.e., the two sets $\badc{R}{G}$ and $\badc{B}{G}$, containing  the
parochial vertices of color $R$ or $B$ respectively.

\begin{definition}
  The \emph{structural bias} $\bias{G}$ of $G$ is the sum of the BRs of the
  parochial nodes of $G$, i.e.,
  \begin{equation}\label{eq:bias}
    \bias{G} \doteq \sum_{v \in \bad{G}} \bubble{G}{v}{t} \enspace.
  \end{equation}
\end{definition}
It is reasonable to consider only the parochial nodes in the definition of
structural bias because they are the ones such that a random walk from them is
very unlikely to hit any vertex of color different than the starting vertex
(see also \cref{lem:restart}).

Our goal in this work is to find a set of edges with extrema of different color
whose addition to $G$ would decrease the structural bias of the network. It is
reasonable to only consider edge with extrema of different color, as they are
always preferable (i.e., will result in a higher decrease of the structural
bias) than edges with monochromatic extrema: the addition of the new edge can
only have positive impact on the parochial vertices of the same color as the
source, and has no impact on the parochial vertices of the other color. If we
could add \emph{any number} of such edges to $G$, it would be easy to bring the
structural bias of $G$ to zero, as there would be no parochial nodes left. This
assumption is not realistic: the number of links that a website editor can add
to a single page and to the whole graph is limited by many factors, such as the
fact that a human-readable page cannot have too many links, and the fact that
the editor can only spend a limited time on this activity. Nevertheless, ideally
one would want to solve the following problem.

\begin{problem}\label{prob:zerobias}
  Given a color $C \in \{R, B \}$, find the  smallest set $A$ of pairs of
  distinct edges $(v, w) \notin E$ with $\vcolor{v} = C$ and $\vcolor{w} \neq C$
  such that, for the graph $\Gnew = (V, E \cup A) $ it holds
  $\badc{C}{\Gnew}=\emptyset$.
\end{problem}

\begin{restatable}{lemma}{lemreduction}\label{lem:reduction}
  \Cref{prob:zerobias} is NP-hard and APX-hard.
\end{restatable}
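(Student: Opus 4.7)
The plan is to prove the lemma by reducing from Set Cover, which is NP-hard and APX-hard. The reduction is cleanest already for its special case Vertex Cover, which is sufficient for both hardness conclusions; a finer chain-based gadget extends the construction to general Set Cover instances and preserves the sharper $(1-o(1))\ln n$ inapproximability hinted at by the paper's wording ``reduction from set cover''.

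Given a Vertex Cover instance $(V', E')$, I construct $G = (V, E)$ as follows. For each $u \in V'$ I add a red \emph{anchor} $s_u$ and a blue \emph{terminal} $b_u$; for each edge $e = \{u,w\} \in E'$ a red \emph{element} $v_e$; and one global blue dummy $b^{\star}$. Initial transitions: each $s_u$ has self-loop probability $1 - \epsilon_s$ and moves to $b^{\star}$ with probability $\epsilon_s$; each $v_e$ has self-loop probability $1 - \epsilon_v$ and moves to $s_u, s_w$ each with probability $\epsilon_v/2$; blue vertices self-loop. I pick $\epsilon_s = \epsilon_v = \Theta(1/t)$ (for concreteness, $10/(3t)$) so that the initial bubble radii satisfy $\bubble{G}{s_u}{t} < \badthres$ (anchors are cosmopolitan) and $\bubble{G}{v_e}{t} \geq \badthres$ (elements are parochial), with $\badthres = t/2$. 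Adding any cross-color edge $(s_u, b)$ \emph{activates} $u$: $s_u$ gains a new outgoing path to blue, so $\bubble{\Gnew}{s_u}{t}$ drops to $O(1)$. A direct bounded-hitting-time argument, analogous to that supporting \Cref{lem:bubbleapprox}, shows that activating either endpoint of $e = \{u, w\}$ brings $\bubble{\Gnew}{v_e}{t}$ strictly below $\badthres$, while leaving both unactivated keeps $v_e$ parochial. Consequently $\badc{R}{\Gnew} = \emptyset$ iff $A_{\text{act}} \doteq \{u \in V' : (s_u, b) \in A \text{ for some } b \in B\}$ is a vertex cover of $(V', E')$.

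A short exchange argument then shows that the minimum $|A|$ is attained by a pure-activation solution. Any edge $(v_e, b) \in A$ with $e = \{u, w\}$ is redundant if $u$ or $w$ is already activated and can be dropped; otherwise, $(v_e, b)$ is swapped for $(s_u, b_u)$ (which is neither in $E$ nor in $A$), keeping $|A|$ the same and additionally healing every other $v_{e'}$ incident to $u$ via the new activation. Iterating, any optimum reduces to a pure-activation set, so the minimum $|A|$ equals the minimum vertex cover of $(V', E')$. Since this reduction is polynomial-time and approximation-gap-preserving, both NP-hardness and APX-hardness of Vertex Cover (and hence Set Cover) transfer to \Cref{prob:zerobias}.

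The main obstacle is the parameter calibration: one must choose $\epsilon_s, \epsilon_v$, the transition probability of each newly added edge, and $\badthres$ so that the \emph{bounded} expected hitting time $\expect{G}{\ttime{t}{v}{\bar{C}_v}}$---which differs from the unbounded hitting time because of the cap at $t$---lands on the correct side of $\badthres$ in all three regimes (initial, one-endpoint activated, both-endpoint activated). This is routine but requires some care with the second-moment bookkeeping for the geometric-sum tails; once that is verified, the reduction and exchange argument proceed exactly as sketched above.
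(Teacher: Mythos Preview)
Your reduction is sound in spirit and reaches the same conclusion, but the gadget you build is genuinely different from the paper's, and the paper's choice buys a lot of cleanliness that your version has to pay for with the parameter calibration you flag as the ``main obstacle''.

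The paper also reduces from (general) Set Cover. For each set $S_j$ it introduces a vertex $s_j$ and a \emph{deterministic path} of length $\lceil t/2\rceil-1$ from $s_j$ to the unique blue vertex $g$; every universe element $u_i$ points (uniformly) to the $s_j$'s of the sets containing it. All transitions are $0/1$, so every bubble radius is an integer computed exactly: $\bubble{G}{s_j}{t}=\lceil t/2\rceil-1$ and $\bubble{G}{u_i}{t}=\lceil t/2\rceil$, putting the $u_i$'s precisely on the parochial threshold. Because of this, \emph{any} positive-weight inserted edge from $u_i$, from some $s_j$ with $u_i\in S_j$, or from a path vertex of such $s_j$, strictly lowers $\bubble{}{u_i}{t}$ below $t/2$; no tuning of $\epsilon$'s or of the oracle's edge weight is needed. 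The exchange step is then the same as yours: push every inserted edge back to its ``set'' vertex, obtaining a set cover of the same size.

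Your self-loop gadget trades this exactness for asymptotic estimates of $\expect{}{\min(\mathrm{Geom}(\epsilon_v)+\mathrm{Geom}(\epsilon_s),t)}$. That can be made to work (your $10/(3t)$ choice does land on the right sides for large $t$), but you must also fix the weight the oracle assigns to inserted edges; ``drops to $O(1)$'' is only true if that weight is bounded away from $0$. The paper's path construction sidesteps this entirely.

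One small logical slip: your claimed equivalence ``$\badc{R}{\Gnew}=\emptyset$ iff $A_{\mathrm{act}}$ is a vertex cover'' is false as stated, since adding $(v_e,b)$ directly heals $v_e$ with $A_{\mathrm{act}}=\emptyset$. The exchange argument you give immediately after is what actually establishes that the optimum is attained by a pure-activation solution; just reorder the presentation so the iff is asserted only for pure-activation $A$. Finally, reducing from Vertex Cover is enough for both NP- and APX-hardness, though the paper's direct Set-Cover reduction also yields the stronger logarithmic inapproximability the introduction alludes to.
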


\section{Reducing the BR with insertions}\label{sec:algo}
Since \cref{prob:zerobias} is hard to even approximate (\cref{lem:reduction}),
we seek to answer a close relative (\cref{prob:kinsertions}). We first introduce
a set of measures to capture the change in the BRs of the (original) parochial
nodes of $G$ after edge insertions. Let $\Gnew$ be obtained from $G$ by
inserting a set $\Sigma$ of directed edges between nodes of different colors,
with each inserted edge $e=(v,w)$ having weight $m_e$ (also denoted as
$m_{vw}$). For a set $U$ of vertices, we define the \emph{gain} of $U$ due to
$\Sigma$ as
\[
  \bubblechange{G}{U}{{\Sigma}}{{\{m_{e}\}}_{e \in \Sigma }}{t'} \doteq\
  \frac{1}{\card{U}} \sum_{u \in U} \left( \bubble{G}{u}{t'} -
  \bubble{\Gnew}{u}{t'} \right) \enspace.
\]
When adding an edge to the graph, we also have to decide its weight. It seems
excessive to assume complete freedom in choosing the weight. We make the
assumption that the weight $m_{vw}$ of an edge $(v,w)$ that we would like to add
is given to us by an oracle which computes $m_{vw}$ only as a function of $v$
and of information \emph{local} to $v$ (e.g., its out-degree) obtained from $G$
and potentially a set of other edges (and their weights) that we want to add
from $v$. The weight $m_{vw}$ is the probability that a random walk arriving at
$v$ will move to $w$ in the next step. When adding $(v,w)$ with weight $m_{vw}$,
the other edges outgoing from $v$ have their weights multiplied by $1-m_{vw}$ to
ensure that the sum of the weights of the edges leaving $v$ is $1$.

The problem we want to solve then is the following.

\begin{problem}\label{prob:kinsertions}
  In graph $G$, let $C$ be either blue or red. Find a set
  ${\Sigma}={\{(v_i, w_i)\}}_{i=1}^k$ of $k$ edges whose source vertices all
  have color $C$ and all destination vertices have the other color, that
  maximizes $\bubblechange{G}{\pol{C}{G}}{\Sigma}{{\{m_{e}\}}_{e \in
  {\Sigma}}}{t}$.
\end{problem}

\algoname\ (\cref{alg:repbublik}) is our algorithm to approximate
\cref{prob:kinsertions}. Before describing it in detail, we give an intuition of
its workings, and present the theoretical results that guided its design.
Specifically, since our objective function is \emph{monotonic and submodular}
(\cref{lem:submodular}), we can greedily choose the edges to be added one by
one. Due to our oracle assumption on the weights, any vertex of color different
than the source can be picked as the target of the added edge, so the problem
essentially \emph{reduces to finding the sources for the edges to be added}.
\Cref{lem:gain} quantifies the gain when picking each source according to a
specific measure depending on the bounded RWCC and on the oracle-given weight
that only depends on the source. In \cref{lem:opt} we show that under mild
conditions this choice is constantly close to an optimal choice. The following
theorem states the approximation qualities of \algoname.

\begin{theorem}\label{thm:main}
  Let $\Sigma$ be the output of \algoname\ and $\mathsf{OPT}$ be the optimal
  solution to \cref{prob:kinsertions}. Let $\Delta_\Sigma =
  \bubblechange{G}{\pol{C}{G}}{\Sigma}{{\{m_e\}}_{e \in \Sigma}}{t}$
  %and $\Delta_{\mathsf{OPT}} =
  %\bubblechange{G}{\pol{C}{G}}{\mathsf{OPT}}{{\{m_\}}_{e\in
  %OPT}}{t}$.
  Then
  \[
    %\Delta_{\mathsf{OPT}}
    \bubblechange{G}{\pol{C}{G}}{\mathsf{OPT}}{{\{m_e\}}_{e \in \mathsf{OPT}}}{t}
    \leq\left( 4 \gamma(G)+1\right) \left( 1 + \frac{1}{e} \right)
    \Delta_{\Sigma},
  \]
  where $\gamma(G)$ is the maximum over all $u \in V$ of sum of the
  probabilities, for $i=0,\dotsc,t-1$, that a random walk starting at $u$ visits
  $u$ at step $i$ without first visiting a vertex in $\bar{C}_u$ (see
  also~\eqref{eq:fprob}), % spacesaver
  %expected number of
  %%times a random walk in $G$ starting from $u$ returns to $u$ in less than $t$
  %steps without visiting a vertex in $\bar{C}_u$,
  which is a constant for many graphs.
\end{theorem}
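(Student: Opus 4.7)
The plan is to reduce the claim to the classical greedy approximation guarantee for monotone submodular maximization under a cardinality constraint, and then compose it with a quantitative bound on the per-step approximation loss incurred by \algoname, which does not maximize the true marginal gain at each step but only a proxy based on bounded random-walk closeness.

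First I would invoke \cref{lem:submodular} to conclude that the function $f(\Sigma)\doteq\bubblechange{G}{\pol{C}{G}}{\Sigma}{{\{m_e\}}_{e\in\Sigma}}{t}$ is non-negative, monotone non-decreasing, and submodular in $\Sigma$. Standard Nemhauser--Wolsey--Fisher telescoping, applied to any algorithm whose per-step choice has marginal gain at least a $1/\alpha$ fraction of the best available marginal gain, yields $\Delta_{\mathsf{OPT}}\leq\alpha\,(1+1/e)\,\Delta_{\Sigma}$ after unrolling the recurrence $\Delta_{\mathsf{OPT}}-f(S_j)\leq(1-1/(\alpha k))(\Delta_{\mathsf{OPT}}-f(S_{j-1}))$ over the $k$ insertion steps, where $\Delta_{\mathsf{OPT}}$ is the gain at the optimal set. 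The remaining task is therefore to identify the appropriate value of $\alpha$ for the specific per-step rule used by \algoname.

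Next, \cref{lem:gain} would be used to express the true marginal gain of inserting an edge $(v,w)$ into the current graph in terms of the oracle-given weight $m_{vw}$ and the bounded RWCC $\centr{t}{v}{\pol{C}{G}}$ of the source $v$ with respect to the parochial set. Because \algoname\ selects the source that maximizes exactly this RWCC-based proxy, its per-step choice is an exact maximizer of the proxy; \cref{lem:opt} then bounds the ratio between this proxy-maximizing choice and the true per-step maximizer by $\alpha=4\gamma(G)+1$, where $\gamma(G)$ absorbs the inflation caused by random walks that revisit their starting vertex before ever exiting $C_v$. Substituting $\alpha=4\gamma(G)+1$ into the submodular greedy bound derived in the first step gives precisely the inequality in the statement.

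The main obstacle I anticipate is the coupling implicit in \cref{lem:gain}: one must decompose a bounded random walk from a parochial vertex $u$ in $\Gnew$ into either a prefix that reaches the new source $v$ without first hitting $\bar{C}_u$ followed by a single traversal of the new edge $(v,w)$, or an excursion that never uses the new edge, and then sum the resulting hitting-time savings over $u\in\pol{C}{G}$ to recover $\centr{t}{v}{\pol{C}{G}}$. The slack introduced by walks that, upon arriving at $v$, loop back to $v$ through monochromatic paths before entering $\bar{C}_v$ is exactly what produces the factor $\gamma(G)$; controlling this slack by the sum of early self-return probabilities in the definition of $\gamma(G)$, rather than by a crude worst case, is the key technical difficulty and the step on which the tightness of the constant hinges.
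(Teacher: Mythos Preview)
Your proposal is correct and follows essentially the same approach as the paper's own proof: invoke \cref{lem:submodular} for monotonicity and submodularity, appeal to the Nemhauser--Wolsey--Fisher analysis of (approximate) greedy under a cardinality constraint, and use \cref{lem:gain,lem:opt} to supply the per-step approximation ratio $\alpha=4\gamma(G)+1$, then multiply the two factors. The paper's proof is terser (it simply cites the three lemmas and composes the constants without writing out the telescoping recurrence), and your last paragraph about the coupling is really about how \cref{lem:gain} and \cref{lem:opt} themselves are proved rather than about \cref{thm:main}, but the logical skeleton is the same; note only that the relevant centrality is $\centr{t-2}{v}{\pol{C}{G}}$, not $\centr{t}{v}{\pol{C}{G}}$.
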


We now proceed towards presenting lemmas which together provide a proof for
\cref{thm:main}. \Cref{lem:bubbleapprox,lem:centr} provide bounds of order
$\Theta(nt^2)$ on the runtime of the pre-processing phases of \algoname.
Therefore for small values of $t$, \algoname\ is more efficient than algorithms
that compute hitting times using the Laplacian, which need $\Omega(n^3)$ steps.

For any vertex $v$, and $0 \le i \le t$, let $\visitattime{v}{i}$ be the
probability that a random walk (in $G$) from $v$ visits $v$ at step $i$ before
reaching a vertex in $\bar{C}_v$ (it holds $\visitattime{v}{0}=1$ and
$\visitattime{v}{1}=0$ for every $v$). For any $t' \le t$, let
%let $\backtotimes{v}{i}$ be the probability that a random
%walk (in $G$) from $v$ returns to $v$ $i$ times in less than $t'$ steps and
%before reaching a vertex in $\bar{C}_v$. Let
\begin{equation}\label{eq:fprob}
%\[
  \mathcal{F}_{t'}(v) = \sum_{i=0}^{t'-1} \visitattime{v}{i}
  %\mathcal{F}(v) = \sum_{i=1}^{t}  \backtotimes{v}{i}
  \enspace.
%\]
\end{equation}
The following lemma shows upper and lower bounds to the change in the bubble
radius of a vertex when an new edge from it is added to the graph.

\begin{restatable}{lemma}{lemgainbounds}\label{lem:gainbounds}
  Let $v \in \pol{}{G}$,  $w \in \bar{C}_v$ and $t'\leq t$. Let $\Gnew$ be
  the graph obtained after adding $e=(v,w)$ to $G$, with weight $m_{e}$.
  The gain $\bubblechange{G}{v}{e}{m_e}{t'}$ is such that
  \[
    \left( \bubble{G}{v}{t'}
    %\expect{G}{T^{t'}_v(\bar{C}_v)}
    - 1 \right) m_e \leq    \bubblechange{G}{v}{e}{m_e}{t'}\\
    \leq \mathcal{F}_{t'}(v) \left( \bubble{G}{v}{t'}
    %\mathcal{F}(v) \left( \bubble{G}{v}{t'}
    %\expect{G}{T^{t'}_v(\bar{C}_v)}
    - 1 \right) m_e \enspace.
  \]
\end{restatable}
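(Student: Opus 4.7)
The plan is to set up a careful coupling between the random walks on $G$ and on $\Gnew$, and then compare their bounded hitting times by applying elementary inequalities term-by-term.

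First, I would couple the two walks as follows. On every vertex other than $v$, the transition law in $\Gnew$ equals that in $G$, so the walks agree. At $v$, the walk on $\Gnew$ can be simulated by flipping an independent coin with bias $m_e$: on heads (probability $m_e$) the walker moves to $w\in\bar C_v$, and on tails (probability $1-m_e$) it samples a successor from the same distribution as in $G$. Thus, letting $N_i$ be the number of visits to $v$ at times $0,1,\dots,i-1$ in the walk on $G$, and $H$, $H'$ the (unbounded) hitting times of $\bar C_v$ in $G$ and $\Gnew$ respectively, we get $\{H'>i\}$ iff no heads occur at any of the $N_i$ visits and $H>i$, which yields
\[
\cP_{\Gnew}(H'>i)=\expect{G}{(1-m_e)^{N_i}\mathbb{1}[H>i]}.
\]

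Next, using the tail formula $\bubble{G}{v}{t'}=\sum_{i=0}^{t'-1}\cP_G(H>i)$ and the analogous identity for $\Gnew$, the gain becomes
\[
\bubblechange{G}{v}{e}{m_e}{t'}=\sum_{i=0}^{t'-1}\expect{G}{\bigl(1-(1-m_e)^{N_i}\bigr)\mathbb{1}[H>i]}.
\]
The lower bound then follows from the observation that $1-(1-m_e)^{N_i}\ge m_e$ whenever $N_i\ge 1$, combined with the fact that the walk starts at $v$, so $N_i\ge 1$ for every $i\ge 1$ (while the $i=0$ term vanishes since $\cP(H>0)=1$ and $N_0=0$). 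Summing gives $\sum_{i=1}^{t'-1}\cP(H>i)=\bubble{G}{v}{t'}-1$, which yields the claimed lower bound $m_e(\bubble{G}{v}{t'}-1)$.

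For the upper bound I would apply Bernoulli's inequality $1-(1-m_e)^{N_i}\le N_i m_e$, then swap the summations:
\[
\sum_{i=0}^{t'-1}\expect{G}{N_i\mathbb{1}[H>i]}=\sum_{j=0}^{t'-2}\sum_{i=j+1}^{t'-1}\cP(X_j=v,H>i).
\]
The Markov property factors each probability as $\cP(X_j=v,H>j)\cdot\cP_v(H>i-j)=\visitattime{v}{j}\,\cP_v(H>i-j)$, so the inner sum equals $\bubble{G}{v}{t'-j}-1$. Using the monotonicity $\bubble{G}{v}{t'-j}\le\bubble{G}{v}{t'}$ and extending the outer sum to $j=t'-1$ (which adds a nonnegative term) produces the bound $\mathcal{F}_{t'}(v)(\bubble{G}{v}{t'}-1)\,m_e$.

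The main technical obstacle is the bookkeeping in the upper-bound computation: correctly exchanging the double sum, invoking the Markov property at the right stopping event $\{X_j=v,H>j\}$ to isolate $\visitattime{v}{j}$, and verifying the boundary indices so that the identification with $\mathcal{F}_{t'}(v)$ is clean. Once that rearrangement is in place, the two bounds drop out from the elementary inequalities on $1-(1-m_e)^{N_i}$ without further work.
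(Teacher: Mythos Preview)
Your proof is correct and rests on the same coupling as the paper's, but organizes the computation differently. The paper conditions on the \emph{first} step $i$ at which the inserted edge is traversed (event $\mathcal{E}_i$): given $\mathcal{E}_i$, the $\Gnew$-walk hits $\bar C_v$ in one more step while the coupled $G$-walk restarts from $v$ with remaining budget $t'-i+1$, which yields the exact identity
\[
\bubblechange{G}{v}{e}{m_e}{t'}=\sum_{i=1}^{t'}\bigl(\bubble{G}{v}{t'-i+1}-1\bigr)\,\cP(\mathcal{E}_i)\,;
\]
both bounds then follow by keeping only the $i=1$ term (lower bound, since $\cP(\mathcal{E}_1)=m_e$ and all summands are nonnegative) or by combining $\bubble{G}{v}{t'-i+1}\le\bubble{G}{v}{t'}$ with $\sum_i\cP(\mathcal{E}_i)\le m_e\,\mathcal{F}_{t'}(v)$ (upper bound). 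Your tail-sum route replaces this first-passage decomposition with the elementary inequalities $m_e\le 1-(1-m_e)^{N_i}\le N_i\,m_e$ and then, via the Markov-property factorization at returns to $v$, recovers the very same intermediate quantity $\sum_j\visitattime{v}{j}\bigl(\bubble{G}{v}{t'-j}-1\bigr)$ that the paper's identity encodes. The two arguments are equivalent in strength; yours is a bit more mechanical and sidesteps having to track the first-traversal event carefully, while the paper's gives a clean exact expression before bounding.
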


Decreasing the BR of $v$ decreases the BRs of vertices in $C_v$ close to $v$,
and thus the whole network. \Cref{lem:allvertices} quantifies this change.

\begin{restatable}{lemma}{lemallvertices}\label{lem:allvertices}
 Let $e = (v,w)$ be the edge with weight $m_e$ added to $G$ to obtain $\Gnew$.
 For any other vertex $u \in \pol{C_v}{G}$, it holds
\[
  \bubblechange{G}{u}{e}{m_e}{t} =
  \sum_{i=1}^{t-2} \left( \bubblechange{G}{v}{e}{m_e}{t - i}
   %P_G^i(u\rightsquigarrow v)
  \cP\left( \towithin{u}{v}{=i}{G} \right)  \right)\enspace.
 \]
 %where $P_G^i(u\rightsquigarrow v)$ is the probability that a path starting at
 %$u$ sees $v$ the first time at the $i$-th step without first traversing a node of different color.
\end{restatable}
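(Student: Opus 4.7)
}

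The plan is to couple the random walks in $G$ and $\Gnew$ starting from $u$, decompose the bounded hitting-time expectation $\bubble{\cdot}{u}{t}$ according to whether and when the walk first visits $v$ without having previously touched $\bar{C}_u$, and then invoke the strong Markov property at that first visit to $v$. Since $u \in \pol{C_v}{G}$ has the same color as $v$, we have $\bar{C}_u = \bar{C}_v$, so ``color other than $u$'' and ``color other than $v$'' coincide throughout the argument.

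Because $G$ and $\Gnew$ differ only in the outgoing distribution at $v$, walks from $u$ in the two graphs can be coupled to agree step-by-step until the first visit to $v$; in particular the event $\towithin{u}{v}{=i}{G}$ involves only transitions out of vertices other than $v$ and therefore has the same probability in $\Gnew$. Let $\tau_v$ be the first hitting time of $v$ by the walk from $u$, consider the events $A_i \doteq \{\towithin{u}{v}{=i}{G}\}$ for $i = 1, \dotsc, t$, and let $H$ be the complementary event that the walk either reaches $\bar{C}_u$ before $v$ or runs for $t$ steps without ever visiting $v$. On $H$, the value of $\ttime{t}{u}{\bar{C}_u}$ is determined entirely by the prefix of the walk up to $\min(t, T_u(\bar{C}_u))$, which occurs before any step at $v$; hence $H$ contributes zero to $\bubble{G}{u}{t} - \bubble{\Gnew}{u}{t}$.

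On $A_i$, the strong Markov property at $\tau_v = i$ lets me replace the remainder of the walk by a fresh walk from $v$, independent of the past, in the ambient graph. Since the total budget is $t$ steps, the remaining budget after reaching $v$ is exactly $t - i$, and the pointwise identity $\ttime{t}{u}{\bar{C}_u} = i + \ttime{t-i}{v}{\bar{C}_v}$ holds on $A_i$; so for each $Z \in \{G, \Gnew\}$ we obtain $\expect{Z}{\ttime{t}{u}{\bar{C}_u} \mid A_i} = i + \bubble{Z}{v}{t-i}$. Summing the contributions of $A_1,\dotsc,A_t$ and $H$ and subtracting, the difference $\bubble{G}{u}{t} - \bubble{\Gnew}{u}{t}$ collapses to
\[
\sum_{i=1}^{t} \cP\bigl(\towithin{u}{v}{=i}{G}\bigr) \cdot \bubblechange{G}{v}{e}{m_e}{t-i}.
\]
The $i = t$ term vanishes because $\bubble{Z}{v}{0} = 0$ for both $Z$, and the $i = t-1$ term vanishes because $v \notin \bar{C}_v$ forces $\bubble{G}{v}{1} = \bubble{\Gnew}{v}{1} = 1$; dropping both recovers the stated sum from $i = 1$ to $t - 2$.

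The main obstacle is the clean invocation of the strong Markov property in the presence of the hard step cap $t$. I would formalize it by expressing the $t$-bounded walk as a walk stopped at $\min(t, T_u(\bar{C}_u))$ and showing that, conditionally on $A_i$, the continuation after step $i$ is a walk from $v$ of length at most $t - i$ governed by the unmodified transition kernel of the ambient graph; this is exactly what is needed for $\bubble{Z}{v}{t-i}$ to appear. A secondary point is to verify that the coupling keeps $\cP(\towithin{u}{v}{=i}{\cdot})$ identical across $G$ and $\Gnew$, which follows because, up to the first visit to $v$, no step in the walk uses the transitions out of $v$.
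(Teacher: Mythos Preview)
Your proposal is correct and follows essentially the same approach as the paper: decompose $\bubble{Z}{u}{t}$ via the law of total expectation (equivalently, the strong Markov property) according to the first time the walk from $u$ hits $v$ before $\bar{C}_u$, observe that the first-hit probabilities and the complementary contribution coincide in $G$ and $\Gnew$ because only outgoing edges at $v$ change, and then drop the top index using $\bubblechange{G}{v}{e}{m_e}{1}=0$. The only cosmetic difference is that you partition up to $i=t$ and discard both $i=t$ and $i=t-1$, whereas the paper partitions up to $i=t-1$ and discards only $i=t-1$; the arguments are otherwise identical.
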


Recall that our greedy choice is to identify a node $v$ that maximizes the gain
$\bubblechange{G}{\pol{}{G}}{(v,w)}{m_{v}}{t}$ where $w$ is any vertex in
$\bar{C}_v$.  \Cref{lem:allvertices} suggests that a good candidate $v$ is a
vertex that is likely to be reached by short random walks from many other
vertices in $\pol{C_v}{G}$, a property that is captured by the bounded RWCC
$\centr{t-2}{v}{\pol{C_v}{G}}$ (\cref{sec:prelims:rwcc}).

% 20201013 MR: moved (adapted) to the prelims
%the concept of {\it random walk closeness centrality} \cite{rwclosenessWhiteSmyth}. We adapt the definition of the standard random walk closeness centrality to bounded random walks so that the contribution to the centrality of $v$ by vertices that do not reach $v$ in less than $t$ steps (in expectation) is zero.
%
%\textbf{Random walk closeness centrality (bounded form).}
%For a vertex $v\in V$ the bounded random walk closeness centrality measure with respect to subset $S\subseteq V$ is defined as:
%
%\begin{equation}
%     {\mathcal I}^t (v; S)=(1/\vert S\vert)\sum_{ r \in S }(t-\expect{G}{T^t_r(v)}).
%\end{equation}
%
%
 %In our algorithm we take $S=\pol{C}{G}$ where $C$ is either red or blue. In Lemma \ref{lemma:centr} we discuss the complexity of approximating the bounded random walk closeness centrality measure.

%\smallskip

Now, we first quantify the gain for adding an edge from any vertex with RWCC $c$
(\cref{lem:gain}). Then we show that under mild conditions on the return time of
vertices we get a constant approximation by greedily choosing a vertex with
maximum ${\rm RWCC }\times m_v$ (\cref{lem:opt}).

\begin{restatable}{lemma}{lemgain}\label{lem:gain}
  Let $v \in \pol{}{G}$. Let $w \in \bar{C}_v$, and assume to add the edge
  $e=(v,w)$ with weight $m_e$.
  %Assume further that $v$ is such that
  %$\centr{t-2}{v}{\pol{C_v}{G}} \ge c$.
  %{\mathcal I}^{t-2}(v; \mathcal{P}_{C})\geq c$.
  It holds
  \[
    \bubblechange{G}{\pol{C_v}{G}}{e}{m_e}{t} \ge \frac{m_e}{2}
    \centr{t-2}{v}{\pol{C_v}{G}} \enspace.
  \]
\end{restatable}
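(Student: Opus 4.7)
The plan is to unfold $\bubblechange{G}{\pol{C_v}{G}}{e}{m_e}{t}$ into a double sum via Lemma~\ref{lem:allvertices}, lower bound each inner gain at $v$ using Lemma~\ref{lem:gainbounds}, and then match what remains to the second (sum-over-$i$) form of $\centr{t-2}{v}{\pol{C_v}{G}}$ given in \cref{sec:prelims:rwcc}. Concretely, since $v \notin \pol{C_v}{G}$ by the definition of $C_v$, every $u$ in that set satisfies the hypothesis of Lemma~\ref{lem:allvertices}, so
\[
  \bubblechange{G}{u}{e}{m_e}{t} = \sum_{i=1}^{t-2}\bubblechange{G}{v}{e}{m_e}{t-i}\,\cP\bigl(\towithin{u}{v}{=i}{G}\bigr),
\]
and Lemma~\ref{lem:gainbounds} applied at horizon $t-i$ lower bounds each inner factor by $(\bubble{G}{v}{t-i}-1)\,m_e$. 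Averaging over $u \in \pol{C_v}{G}$, it suffices to prove that $\bubble{G}{v}{t-i}-1 \ge (t-2-i)/2$ for $i=1,\dots,t-2$, since then weighting by $\cP(\towithin{u}{v}{=i}{G})$ and averaging over $u$ recovers $\centr{t-2}{v}{\pol{C_v}{G}}$ up to the factor $m_e/2$.

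The main technical step, and the one I expect to be the real obstacle, is propagating the parochiality assumption $\bubble{G}{v}{t} \ge t/2$ (which follows from $v \in \pol{}{G}$ and $\badthres = t/2$) to every shorter horizon $t' \le t$. I would do this with a pointwise coupling: for $T = T_v(\bar{C}_v)$ and any realization of $T$, a case split ($T\le t'$, $t'<T\le t$, $T>t$) shows $\min\{t',T\} \ge (t'/t)\min\{t,T\}$. Taking expectations gives
\[
  \bubble{G}{v}{t'} \;\ge\; \tfrac{t'}{t}\,\bubble{G}{v}{t} \;\ge\; \tfrac{t'}{2},
\]
so $\bubble{G}{v}{t-i} \ge (t-i)/2$ and therefore $\bubble{G}{v}{t-i}-1 \ge (t-i-2)/2$, exactly the bound needed above.

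Chaining the three steps then yields
\[
  \bubblechange{G}{\pol{C_v}{G}}{e}{m_e}{t} \;\ge\; \frac{m_e}{2}\cdot\frac{1}{\card{\pol{C_v}{G}}}\sum_{u\in\pol{C_v}{G}}\sum_{i=1}^{t-2}(t-2-i)\,\cP\bigl(\towithin{u}{v}{=i}{G}\bigr),
\]
and the right-hand side is precisely $(m_e/2)\,\centr{t-2}{v}{\pol{C_v}{G}}$ by the alternative closed form of the RWCC stated in \cref{sec:prelims:rwcc}. The trivial terms $i=t-2$ contribute $0$ on both sides, so no extra bookkeeping is required. This completes the argument.
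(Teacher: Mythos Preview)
Your proof is correct and follows the same route as the paper: expand via \cref{lem:allvertices}, lower bound each $\bubblechange{G}{v}{e}{m_e}{t-i}$ using the left inequality of \cref{lem:gainbounds}, and match the resulting double sum to the explicit form of $\centr{t-2}{v}{\pol{C_v}{G}}$. The only difference is that the paper isolates the horizon-propagation bound $\bubble{G}{v}{t'} \ge t'/2$ as a separate technical lemma proved by direct expansion, whereas your pointwise inequality $\min\{t',T\} \ge (t'/t)\min\{t,T\}$ gives the same conclusion more cleanly.
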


This lemma suggests that inserting edges from a vertex $v$ with the highest
value of $m_v \centr{t-2}{v}{\pol{C}{G}}$ \emph{may} result in a larger
improvement in the objective function than if we chose a different source. In
the next lemma we compare the effect of choosing such sources to the effect of
an optimal choice.

\begin{restatable}{lemma}{lemopt}\label{lem:opt}
  Consider the set $\pol{C}{G}$ where $C$ is either color. Among all vertices in
  $\pol{C}{G}$ let $v$ and $\optnode$ be
  \begin{align*}
    \optnode &= \argmax_{u \in \pol{C}{G}}
    \bubblechange{G}{\pol{C}{G}}{e_u}{m_{u}}{t},\\
    v &= \argmax_{u \in \pol{C}{G}} m_{u} \centr{t-2}{u}{\pol{C}{G}},
  \end{align*}
  where   $e_u$  is any potentially inserted  edge connecting $u$ to
  $\bar{C}_u$, and $m_u$ is its weight.\footnote{Our assumption on the oracle
    giving the weight ensures that $m_u$ only depends on $u$, not on the target
  of $e_u$.}
%Define
%\begin{align*}
%  \Delta_\optnode \doteq
%  \bubblechange{G}{\pol{C}{G}}{e_\optnode}{m_\optnode}{t}, \\
%  \Delta_{v} \doteq \bubblechange{G}{\pol{C}{G}}{e_v}{m_v}{t} \enspace.
%\end{align*}
%where $e_{opt}$ and $e_v$ are respectively any potentially inserted edge
%connecting $opt$ and $v$ to the other color, and $m_{opt}$ and $m_v$ are the
%probabilities of these edges.
  It holds
  \[
    \bubblechange{G}{\pol{C}{G}}{e_\optnode}{m_\optnode}{t}
    %\Delta_\optnode
    \leq (4 \gamma(G) + 1) \bubblechange{G}{\pol{}{G}}{e_v}{m_v}{t},
  \]
  where $\gamma(G) = \max_{u \in G} \mathcal{F}_t (u)$.
  %where $\gamma(G) = \max_{u \in G} \mathcal{F} (u)$.
\end{restatable}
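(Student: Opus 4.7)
The plan is to upper-bound OPT's gain using \cref{lem:gainbounds,lem:allvertices} and to lower-bound the greedy gain using \cref{lem:gain}, then to combine the two bounds via the maximality property defining $v$. First, I would decompose
$\bubblechange{G}{\pol{C}{G}}{e_\optnode}{m_\optnode}{t} = \frac{1}{\card{\pol{C}{G}}}\left[\bubblechange{G}{\optnode}{e_\optnode}{m_\optnode}{t} + \sum_{x \in \pol{C_\optnode}{G}}\bubblechange{G}{x}{e_\optnode}{m_\optnode}{t}\right]$.
The self-term $\bubblechange{G}{\optnode}{e_\optnode}{m_\optnode}{t}$ will be controlled by the upper estimate of \cref{lem:gainbounds} together with $\bubble{G}{\optnode}{t}\leq t$, yielding at most $\gamma(G)(t-1)m_\optnode$. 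For each $x\in\pol{C_\optnode}{G}$, \cref{lem:allvertices} will rewrite $\bubblechange{G}{x}{e_\optnode}{m_\optnode}{t}$ as $\sum_{i=1}^{t-2}\bubblechange{G}{\optnode}{e_\optnode}{m_\optnode}{t-i}\cP(\towithin{x}{\optnode}{=i}{G})$, and each inner term will again be bounded by \cref{lem:gainbounds} to give $\bubblechange{G}{\optnode}{e_\optnode}{m_\optnode}{t-i} \leq \gamma(G)(t-i-1)m_\optnode$.

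Using the elementary inequality $t-i-1 \leq 2(t-2-i)+1$ valid for $1\leq i\leq t-2$, the double sum over $x$ and $i$ will collapse into $\card{\pol{C_\optnode}{G}}\left[2\centr{t-2}{\optnode}{\pol{C_\optnode}{G}}+1\right]$: the first summand directly by the defining formula of the bounded RWCC in \cref{sec:prelims:rwcc}, the second because the inner probabilities satisfy $\sum_i\cP(\towithin{x}{\optnode}{=i}{G}) \leq 1$. On the greedy side, \cref{lem:gain} gives $\bubblechange{G}{\pol{C_v}{G}}{e_v}{m_v}{t}\geq\frac{m_v}{2}\centr{t-2}{v}{\pol{C_v}{G}}$, and the non-negativity of the per-vertex gains (a walk in $\Gnew$ from $x\in C_v$ can only reach $\bar{C}_x$ sooner than in $G$) extends the bound, up to a $(\card{\pol{C}{G}}-1)/\card{\pol{C}{G}}$ factor, also to $\bubblechange{G}{\pol{C}{G}}{e_v}{m_v}{t}$. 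The maximality of $m_v\centr{t-2}{v}{\pol{C}{G}}$ then transfers this lower bound onto $\frac{m_\optnode}{2}\centr{t-2}{\optnode}{\pol{C}{G}}$.

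Finally, dividing the upper bound on OPT by the lower bound on the greedy gain produces a leading factor $2\gamma(G)/(1/2)=4\gamma(G)$ on the $\centr{t-2}{\optnode}{\cdot}$ contribution, with an additive $+1$ absorbing the residual boundary terms coming from the self-piece $\bubblechange{G}{\optnode}{e_\optnode}{m_\optnode}{t}/\card{\pol{C}{G}}$ and from the unit bound on the hitting probabilities. The hard part will be precisely this last bookkeeping step: making the additive constants land exactly at $(4\gamma(G)+1)$ rather than something slightly larger, and correctly reconciling the single-vertex discrepancies between $\pol{C}{G}$, $\pol{C_v}{G}$, and $\pol{C_\optnode}{G}$ that surface when switching between the argmax statement and the applications of \cref{lem:gain,lem:allvertices}.
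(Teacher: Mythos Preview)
Your approach is essentially the paper's: upper-bound the per-vertex gain via \cref{lem:allvertices} and the upper estimate of \cref{lem:gainbounds}, collapse into the bounded RWCC through an inequality of the form $t-1-i \le 2(t-2-i)$ plus a boundary residual, invoke the maximality of $v$, and finish with \cref{lem:gain}. The only cosmetic difference is that the paper isolates the $i=t-2$ summand (bounding $\bubblechange{G}{\optnode}{e_\optnode}{m_\optnode}{2}\le 1$) rather than the self-term---your explicit separation of the $\optnode$-contribution is in fact cleaner, and your caution about the final ``$+1$'' bookkeeping is well-placed, since the paper's own argument handles that absorption just as loosely.
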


If the probability of getting back to $u$ in less than $t$ steps is less than
$\alpha$ for some constant $\alpha$ then $\gamma(G) \leq \alpha$. % spacesaver
%$\nicefrac{\alpha}{t}$ for some constant $\alpha$ then $\gamma(G) \leq \alpha$.
This assumption is realistic since $t$ is usually small and the return time to
$u$ is often much larger than $t$.

\iffalse%
\begin{proof}
By \cref{lem:gainbounds}
 we have that
 $\bubblechange{G}{u}{(u,w_{u})}{m_{uw_u}}{t}\leq {\mathcal
 F}(u)\left(\expect{G}{T^{t'}_u(C)}\right.$ %chktex 40
 $\left.- (b+1)\right) m_{uw_u}\leq {\mathcal F}(u) (t'- b-1) m_{uw_u} $. % chktex 40
 Using \cref{lem:allvertices} we have
 $\bubblechange{G}{\pol{C}{G}}{(u,w_u)}{m_{uw_u}}{t}=\frac{1}{\pol{C}{G}}\sum_{x\in \mathcal{P}_{C}}$ $\sum_{i=1}^{t}\bubblechange{G}{u}{(u,w_u)}{m_{uw_u}}{t-i}P^i(x\rightsquigarrow u)$.

 Thus,
 $\bubblechange{G}{\pol{C}{G}}{(u,w_u)}{m_{uw_u}}{t}\leq \frac{{\mathcal F}(u)m_{uw_u}}{\vert \pol{C}{G}\vert }\sum_{x\in \mathcal{P}_{C}}\sum_{i=1}^{t}  (t'-i- b-1) P^i(x\rightsquigarrow u)\leq {\mathcal F}(u)m_{uw_u} {\mathcal I}^t(u;\pol{C}{G})\leq {\mathcal F}(u)m_{v,w_v} {\mathcal I}^t(v;\pol{C}{G})$.

We know by previous lemma that

$m_{vw_v}{\mathcal I}^t(v;{\mathcal P}_c)\leq 2$ $\bubblechange{G}{\pol{C}{G}}{v}{m_{vw_v}}{t}+(b+1)(m_{vw_v})$. Thus,

\noindent
$\bubblechange{G}{\pol{C}{G}}{(u,w_u)}{m_{uw_u}}{t}\leq 2 {\mathcal
F}(u)\left(\bubblechange{G}{\pol{C}{G}}{(v,w_v)}{m_{vw_v}}{t}\right.$ % chktex 40

\noindent
$+\left.(b+1)(m_{vw_v})\right).$ % chktex 40

\end{proof}
\fi

Finally, we show that the gain function %obtained by inserting edges following our
%criteria
is monotonic and sub-modular.
% We now present our last lemma whose proof is omitted and can be obtained by writing the definitions  which demonstrates  sub-modularity of our objective function.

% 20210104 MR: Overleaf idiotically believes that the syntax is wrong in the following block, but it isn't, as it is very easy to check.
\begin{restatable}{lemma}{lemsubmodular}\label{lem:submodular}
  Let $C$ be either blue or red and $v, u \in \pol{C}{G}$, and $w_v, w_u \in
  \bar{C}_v$, such that $e_v = (v, w_v)$ and  $e_v = (u, w_u)$  are not existing
  edges. Let $\Sigma = \{e_v, e_u\}$. It holds
  \ifextended%
    \begin{equation}\label{eq:monotonicity}
  \else
  \[
  \fi
    \bubblechange{G}{\pol{C}{G}}{e_v}{m_{e_v}}{t} \leq
  \bubblechange{G}{\pol{C}{G}}{\Sigma}{{\{m_e\}}_{e \in \Sigma}}{t},
  \ifextended%
    \end{equation}
  \else
    \]
  \fi
  and
    \begin{align}
      \bubblechange{G}{\pol{C}{G}}{\Sigma}{{\{m_e\}}_{e \in \Sigma}}{t} \leq &
    \bubblechange{G}{\pol{C}{G}}{e_v}{m_{e_v}}{t} \nonumber \\
    & + \bubblechange{G}{\pol{C}{G}}{e_u}{m_{e_u}}{t} \enspace.%
    \ifextended\label{eq:submod}\else\nonumber\fi
  \end{align}
\end{restatable}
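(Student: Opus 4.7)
The plan is to reduce both inequalities to trajectory-level comparisons. First I rewrite the bounded hitting time, for any graph $G'$ obtained from $G$ by adding some subset of $\Sigma$, as
\[
  \bubble{G'}{x}{t} = \sum_{i=0}^{t-1} \cP\left(\nottowithin{x}{\bar{C}_x}{\leq i}{G'}\right),
\]
and expand $\cP(\nottowithin{x}{\bar{C}_x}{\leq i}{G'})$ as a sum, over trajectories $\tau = (x_0 = x, x_1, \ldots, x_i)$ that stay in $C_x \cup \{x\}$, of the walk probabilities $\cP_{G'}(\tau)$. Because $w_v, w_u \in \bar{C}_x$, no such admissible trajectory ever uses a newly inserted edge. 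Therefore, for any $A \subseteq \Sigma$, the only effect of the new edges on an admissible $\tau$ is to rescale the outgoing probabilities of the source of each $e \in A$ by $(1 - m_e)$, so
\[
  \cP_{G+A}(\tau) = \cP_G(\tau) \prod_{(s_e, \cdot) \in A} (1 - m_e)^{k_{s_e}(\tau)},
\]
where $k_{s_e}(\tau)$ counts the visits of $\tau$ to $s_e$ among positions $0, 1, \ldots, i-1$.

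For monotonicity, set $a_u = 1 - m_{e_u} \in [0,1]$. Every admissible $\tau$ satisfies $\cP_{G+\Sigma}(\tau) = \cP_{G+e_v}(\tau) \cdot a_u^{k_u(\tau)} \leq \cP_{G+e_v}(\tau)$, so $\cP(\nottowithin{x}{\bar{C}_x}{\leq i}{G+\Sigma}) \leq \cP(\nottowithin{x}{\bar{C}_x}{\leq i}{G+e_v})$ for every $i$. Summing over $i$ gives $\bubble{G+\Sigma}{x}{t} \leq \bubble{G+e_v}{x}{t}$, and averaging over $x \in \pol{C}{G}$ yields the monotonicity inequality.

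For submodularity, set also $a_v = 1 - m_{e_v}$. The identity
\[
  \cP_G(\tau) - \cP_{G+e_v}(\tau) - \cP_{G+e_u}(\tau) + \cP_{G+\Sigma}(\tau) = \cP_G(\tau) \left(1 - a_v^{k_v(\tau)}\right)\left(1 - a_u^{k_u(\tau)}\right) \geq 0
\]
holds pointwise in $\tau$, since each factor is nonnegative. Summing over admissible $\tau$ gives $\cP(\nottowithin{x}{\bar{C}_x}{\leq i}{G}) + \cP(\nottowithin{x}{\bar{C}_x}{\leq i}{G+\Sigma}) \geq \cP(\nottowithin{x}{\bar{C}_x}{\leq i}{G+e_v}) + \cP(\nottowithin{x}{\bar{C}_x}{\leq i}{G+e_u})$, and summing over $i$ and averaging over $x$ rearranges, after subtracting $2\,\bubble{G}{x}{t}$ from both sides, into the desired submodularity inequality.

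The main obstacle, which is bookkeeping rather than conceptual, is to justify that the admissibility restriction ``$\tau$ stays in $C_x \cup \{x\}$'' automatically rules out trajectories that cross a new edge, so that the effect on an admissible $\tau$ is the clean multiplicative rescaling $(1 - m_e)^{k_{s_e}(\tau)}$ and nothing more. Once this is in place, both claims collapse to the elementary nonnegativity of $1 - a^k$ and $(1 - a^k)(1 - b^\ell)$ for $a, b \in [0,1]$ and integers $k, \ell \geq 0$.
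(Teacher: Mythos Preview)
Your argument is correct and takes a genuinely different route from the paper's proof. The paper works at the level of conditional expectations: it partitions walks from each $w \in \pol{C}{G}$ according to which of $v$ or $u$ is hit first, uses the Markov property to express the bubble radius as a first-passage decomposition through $v$ (or $u$), and then proves the key diminishing-returns inequality $\bubble{G_v}{w}{t} - \bubble{G_{vu}}{w}{t} \le \bubble{G}{w}{t} - \bubble{G_u}{w}{t}$ by a further conditioning on whether the walk from $v$ reaches $u$ before $\bar{C}_v$. Your approach instead drops to the level of individual trajectories via the tail-sum identity $\bubble{G'}{x}{t} = \sum_{i=0}^{t-1}\cP(\nottowithin{x}{\bar{C}_x}{\le i}{G'})$ and the observation that, since the inserted edges land in $\bar{C}$, any length-$i$ path staying in $C$ never traverses them; the entire effect of inserting $A \subseteq \Sigma$ on such a path is the multiplicative factor $\prod_{e \in A}(1-m_e)^{k_{s_e}(\tau)}$.

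What this buys you is that both inequalities become \emph{pointwise} in $\tau$: monotonicity is just $a_u^{k_u(\tau)} \le 1$, and submodularity is the factorization $1 - a_v^{k_v} - a_u^{k_u} + a_v^{k_v}a_u^{k_u} = (1-a_v^{k_v})(1-a_u^{k_u}) \ge 0$. This is cleaner than the paper's chain of conditional-expectation identities, and it extends transparently to any finite $\Sigma$ (the same factorization gives full submodularity, not just the two-edge case). The paper's approach, by contrast, stays closer to standard hitting-time machinery and makes the Markov structure explicit, which matches the flavor of the surrounding lemmas. Your implicit assumption $v \neq u$ (needed so the two rescalings act on disjoint rows of the transition matrix and $m_{e_u}$ is unaffected by inserting $e_v$) is the same one the paper's proof makes.
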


We are now ready to prove \cref{thm:main}.

\begin{proof}[Proof of \cref{thm:main}]
  \Cref{lem:submodular}  shows the monotonicity and submodularity of the
  objective function. Thus, a greedy algorithm that picks, iteratively, the $k$
  best choices over all parochial vertices of color $C$ as the sources of the
  added edges, will result in a $(1 + \nicefrac{1}{e})$-approximation.
  \Cref{lem:gain,lem:opt} show that by choosing a vertex $v$  maximizing $m_v
  \centr{t-2}{v}{\pol{C}{G}}$ among all parochial vertices of color $C$, we
  obtain a vertex such that the gain when adding an edge from this source is a
  $4 \gamma(G)+1$-approximation to the greedy choice. Thus, the correctness of our
  algorithm is concluded by putting these lemmas together.
\end{proof}

\begin{algorithm}[t]
\algrenewcommand\algorithmicindent{1.0em}
\begin{algorithmic}[1]

%\State$G_c\gets$ the induced subgraph of $\badc{C}{G}$ in $G$.
\State\textbf{Input}: Graph $G=(V,E)$, desired insertions $k_C$, oracle
$\mathcal{W}_G:V \times 2^{V \times V} \to [0,1]$, $C\in\{R,B\}$.
\State\textbf{Output}: Set $\Sigma_C$ of $k_C$ edges to be inserted, with their weights.
\State$\Sigma_C \gets \emptyset$
\For{$i=1$ \textbf{to} $k_C$}
    \State$P \gets$ \texttt{computeParochials}($G \cup \Sigma_C$, $C$) %chktex 36
    \State$\centrdict \gets$ \texttt{computeRWCentrality}($P$, $G \cup \Sigma_C$) % chktex 36
    \State$v_i \gets {\rm argmax}_{v \in P} \centrdict(v)\times
    \mathcal{W}_G(v, \Sigma_C) $% chktex 36
    \State$u_i\gets $ arbitrary  in $ \bar{C}_{v_i}$
    \State$\Sigma_C\gets \Sigma_C\cup\{(v_i,u_i)\}$
\EndFor%
\State\textbf{return} $\Sigma_C$ 
\end{algorithmic}
\caption{\algoname}\label{alg:repbublik}
\end{algorithm}

We can now give the details to \algoname. The algorithm takes as input the graph
$G$, the number $k_C$ of desired edge insertions, the oracle $\mathcal{W}$ that
determines the weights of the new edges, and the set of nodes $C$. It
first creates the empty set $\Sigma_C$ that will store the edges to be added and then enters
a for loop to be repeated for $k_C$ times. At every iteration of the loop, it first
computes the BR of every node in $C$ in the graph (denoted in the pseudocode as $G \cup \Sigma_C$) obtained by adding to $G$ the edges currently in $\Sigma_C$ (with their weights obtained from the oracle $\mathcal{W}_G$) (in practice, the BR is computed using the approximation
algorithm outlined in \cref{lem:bubbleapprox}). Thanks to this computation, the algorithm obtains (line 5) the set $P$ of parochial nodes in this graph (at the first iteration of the loop $P=\badc{C}{G}$). It then obtains the centralities values $\centr{t-2}{v}{P}$ of every node $v \in P$ (in practice, using the approximation algorithm outlined in
\cref{lem:centr}), storing them in a dictionary $\centrdict$ (line 6). % spacesaver  
%
%\warning[from=Cristina]{We need to properly address lines 4 and 5 of Algorithm 1
%  because it basically allows us to not compute the centrality after each
%  addition. I think it is worth to discuss why it is valid, we can look at it
%once everything else has been addressed.}
%
The algorithm then selects the node $v_i \in P$ associated to the
maximum quantity $\centrdict(v_i)\times \mathcal{W}_G(v_i, \Sigma_C)$, and
arbitrarily picks a node $u_i$ of the opposite color of $v_i$ (i.e., of the color other than $C$). The directed edge $(v_i,u_i)$ is added to the set $\Sigma_C$ (lines 7--9). After $k_C$ iterations of the loop,
the algorithm returns $\Sigma_C$, together with the weights obtained from the oracle.

\algoname would require a re-computation of the BRs and of the centralities of all vertices, at every iteration of the loop, which would require to run a very large number of random walks, making it computationally very expensive. We now propose a more practical alternative \algonameplus, at the price of losing the approximation guarantees. \algonameplus\ only computes $\bad{C}{G}$ and $\centrdict$ before entering the for loop, and uses the same values throughout its execution, but trades off the consequences of this choice by adding a penalty factor to the objective function involved in the selection of the source vertices for the edges to be added. Specifically, \algonameplus\ chooses $v_i$ (line 7) by maximizing the quantity $\nicefrac{\centrdict(v) \times \mathcal W_G(v, \Sigma_C)}{\eta_v}$, where $\eta_v$ is a penalty factor equals to one plus the number of edges with source $v$ in $\Sigma_C$ (thus at iteration 1, $\eta_v = 1$ for every node). This penalty factor favours the insertion of edges from nodes that have not yet been altered. Consequently, it indirectly \textit{(1)} handles the possibility that nodes with new edges are no longer parochial, thus we want to avoid to keep adding edges to them; and \textit{(2)} avoids that the new edges are added from a restricted set of nodes, limiting the positive effect of the insertions on $\bubblechange{G}{\badc{C}{G}}{{\Sigma_C}}{{\{m_{e}\}}_{e \in \Sigma }}{t'}$.

\graphicspath{{../../}{./}}
\makeatletter
\def\input@path{{../../}{./}}
\makeatother

\begin{figure*}[t]
\begin{center}
\begin{subfigure}[b]{0.22\textwidth}
    \includegraphics[width=\textwidth]{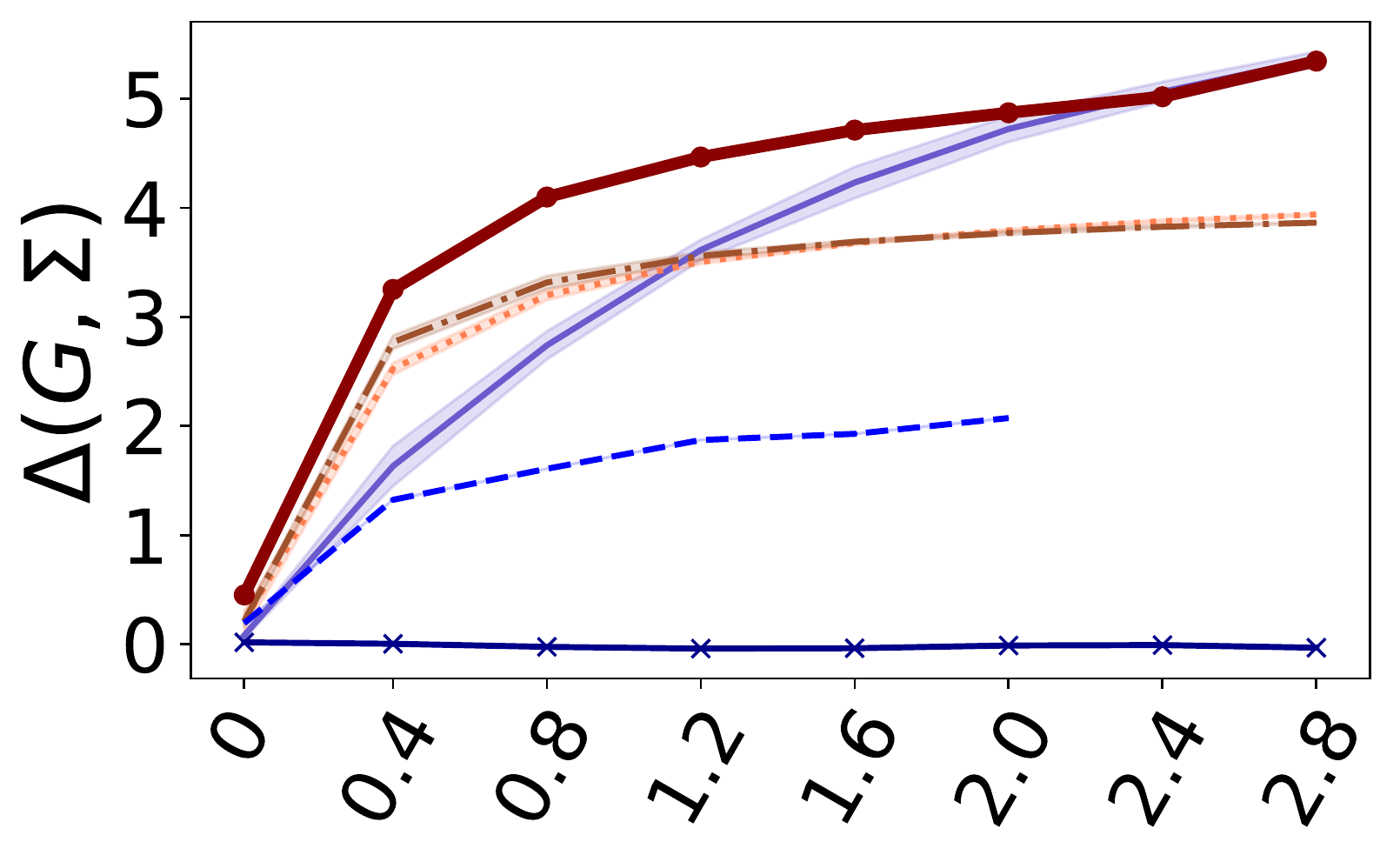}
    \includegraphics[width=\textwidth]{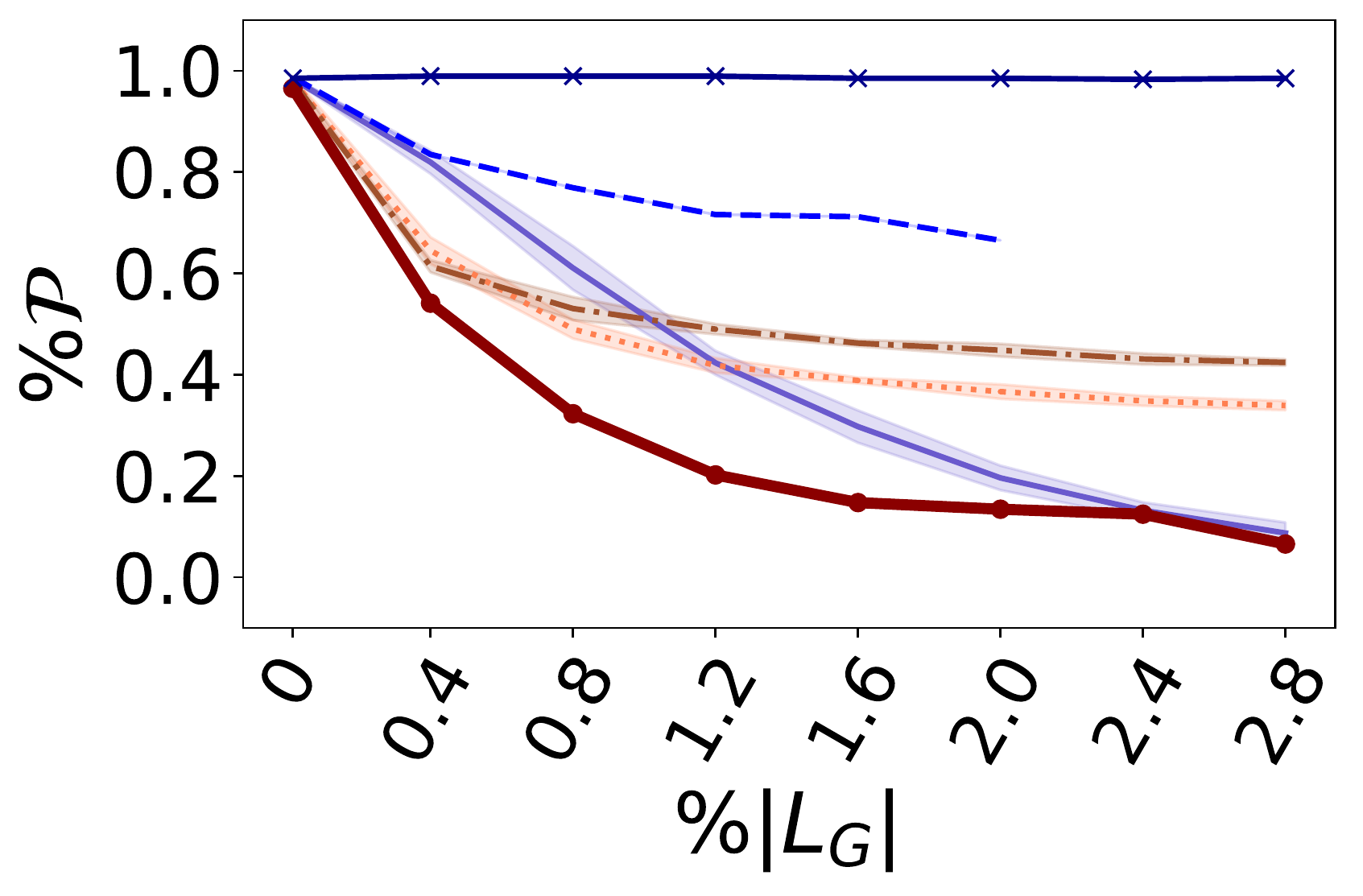}
    \ifextended%
    \else
    %\vspace{-10pt}
    \fi
    \caption{Abortion}\label{fig:exp:algo:abortion:blue}
\end{subfigure}%
\begin{subfigure}[b]{0.22\textwidth}
    \includegraphics[width=\textwidth]{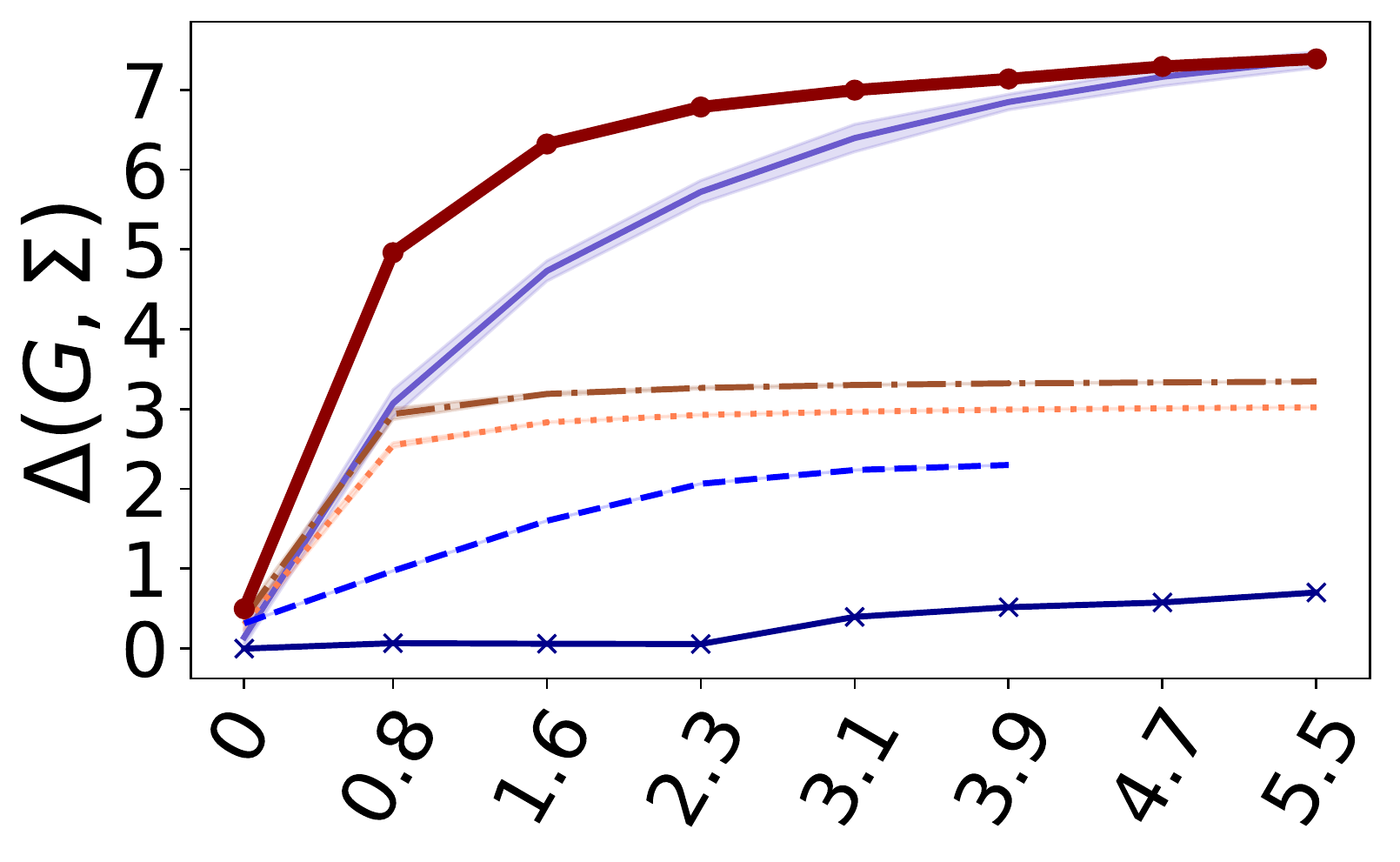}
    \includegraphics[width=\textwidth]{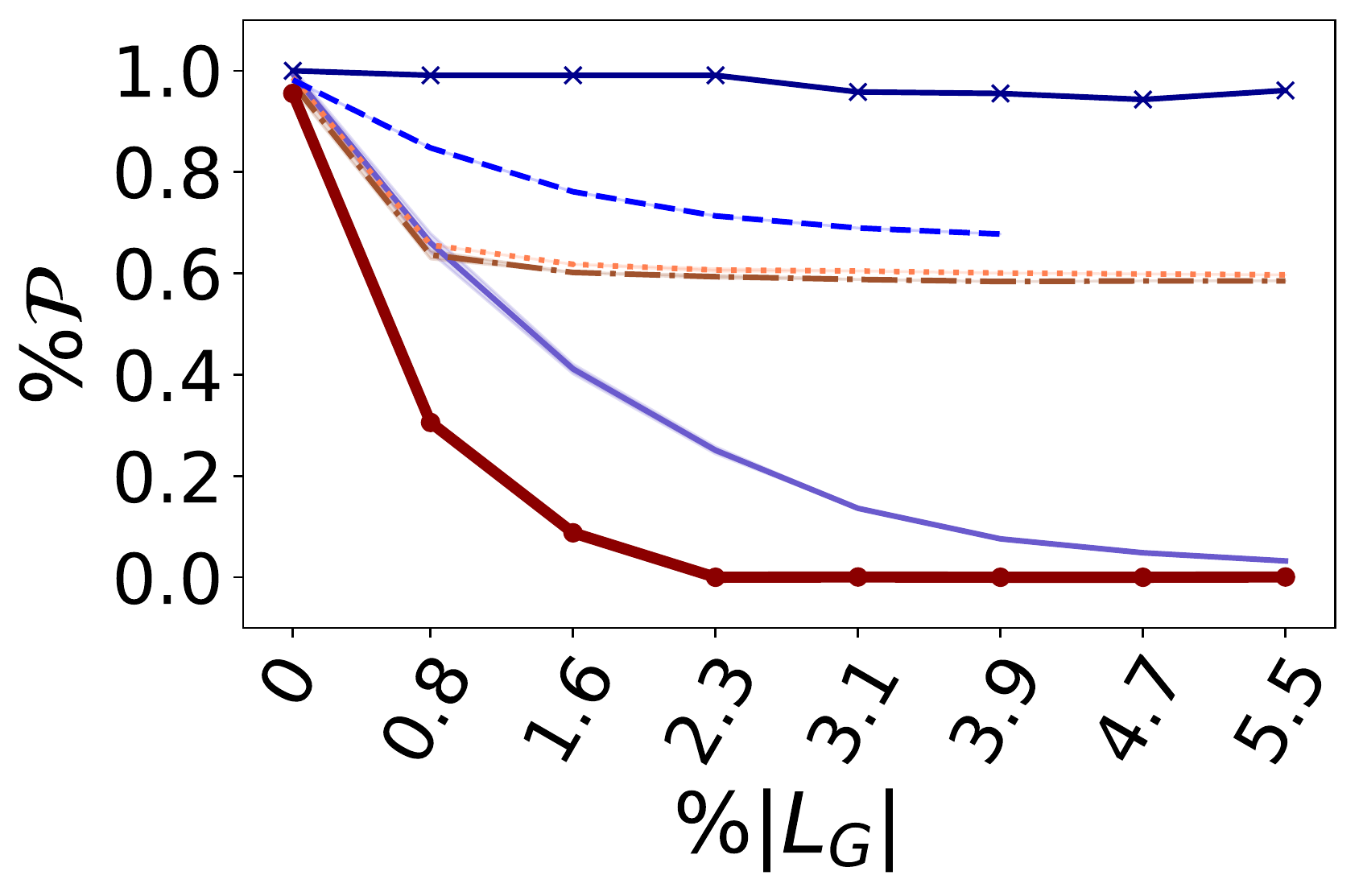}
    \ifextended%
    \else
    %\vspace{-10pt}
    \fi
    \caption{MaTe}\label{fig:exp:algo:mate:blue}
\end{subfigure}
%\begin{subfigure}[b]{0.2\textwidth}
%    \includegraphics[width=\textwidth]{FairRandomWalks/WSDM21/imgs/evaluation/blue_tech_mil.pdf}
    %\caption{MiHi}\label{fig:exp:algo:mihi:blue}
%\end{subfigure}
%\begin{subfigure}[b]{0.23\textwidth}
%    \includegraphics[width=\textwidth]{FairRandomWalks/WSDM21/imgs/evaluation/red_polblogs.pdf}
%    \caption{PolBlogs}\label{fig:exp:algo:polblogs:red}
%\end{subfigure}
%\begin{subfigure}[b]{0.23\textwidth}
%    \includegraphics[width=\textwidth]{FairRandomWalks/WSDM21/imgs/evaluation/red_math_ast.pdf}
%    \caption{AsMa}\label{fig:exp:algo:asma:red}
%\end{subfigure}
%\begin{subfigure}[b]{0.23\textwidth}
%    \includegraphics[width=\textwidth]{FairRandomWalks/WSDM21/imgs/evaluation/red_math_tech.pdf}
%    \caption{MaTe R}\label{fig:exp:algo:mate:red}
%\end{subfigure}
\begin{subfigure}[b]{0.22\textwidth}
    \includegraphics[width=\textwidth]{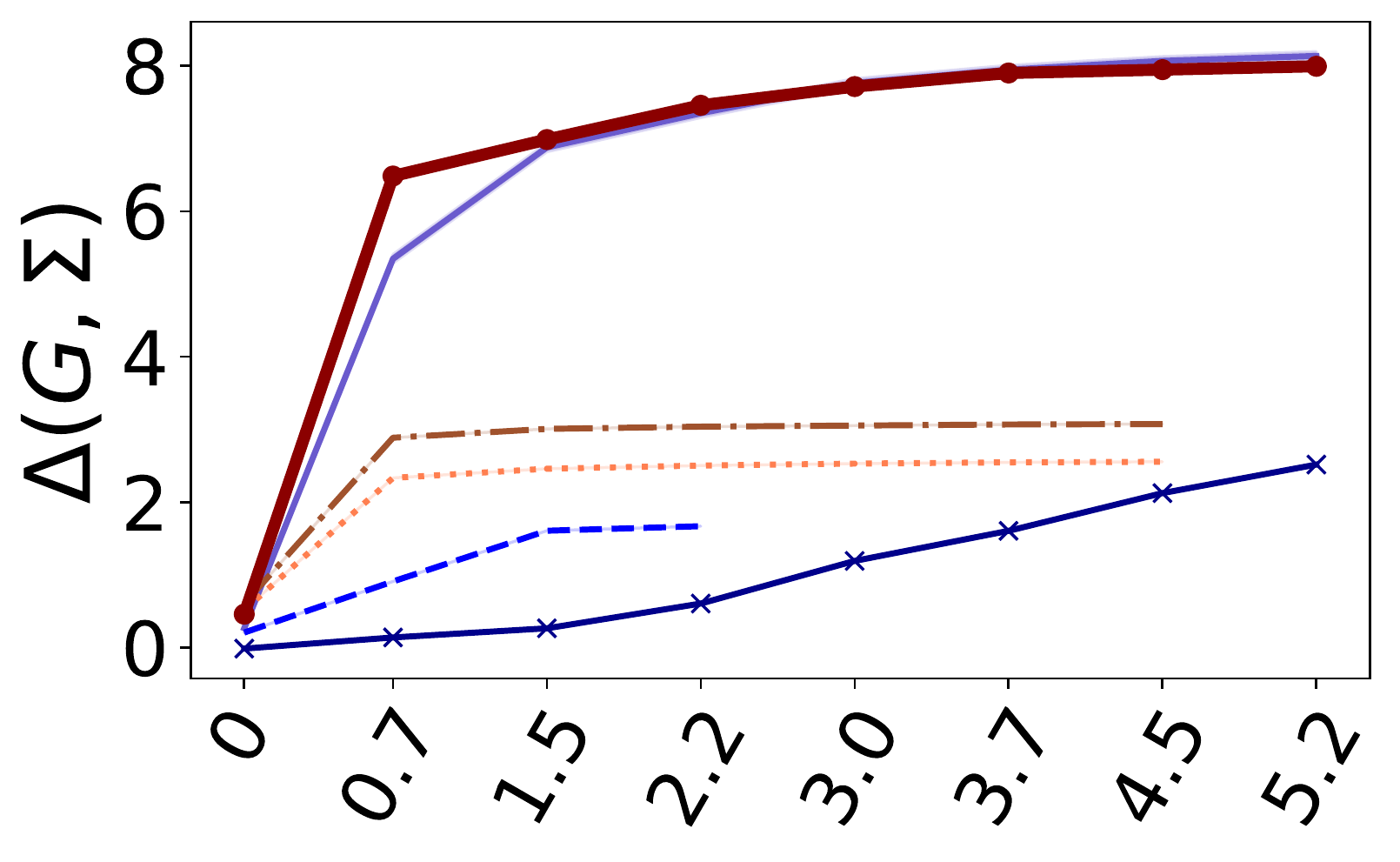}
    \includegraphics[width=\textwidth]{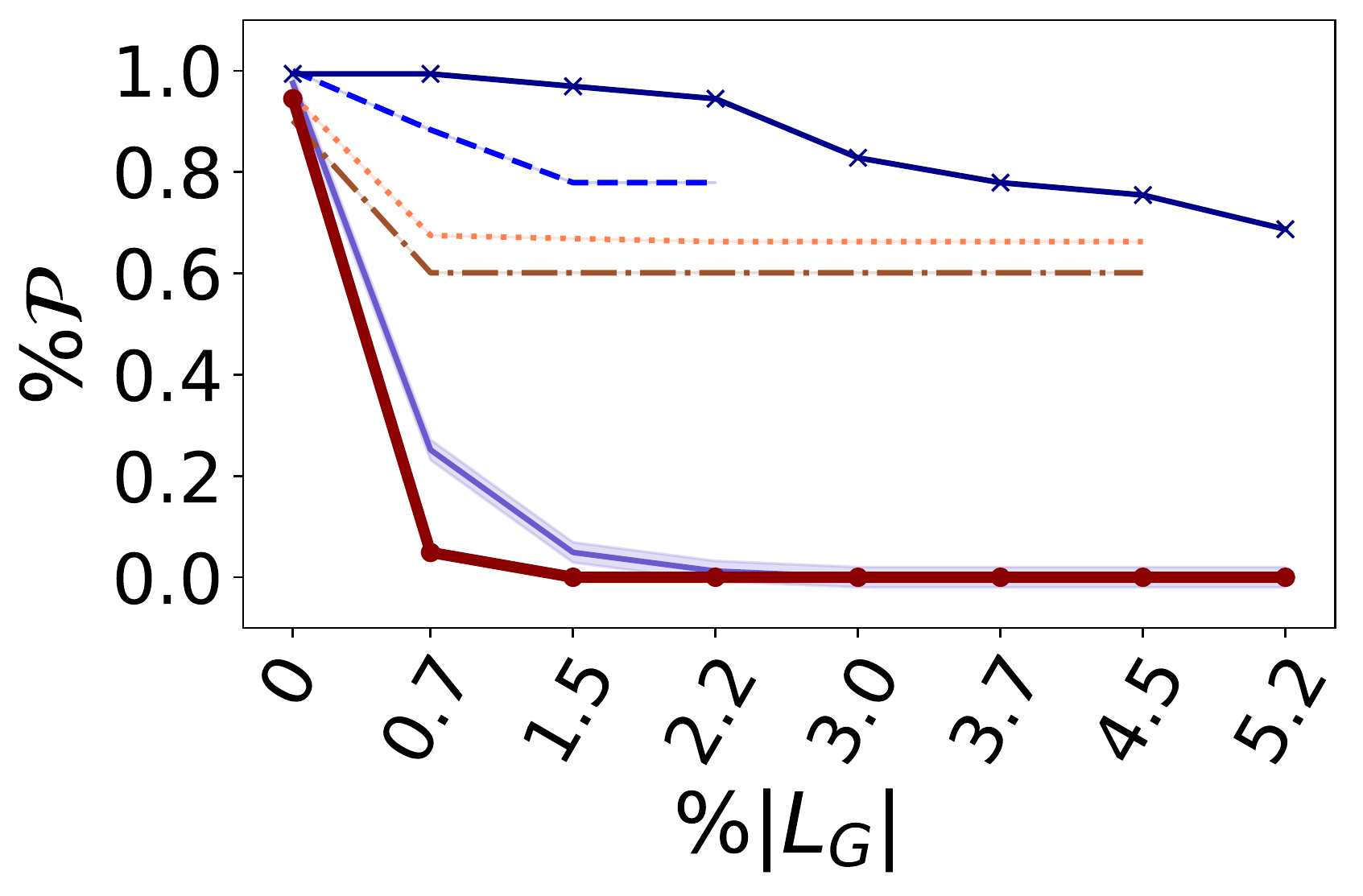}
    \ifextended%
    \else
    %\vspace{-10pt}
    \fi
    \caption{MiHi}\label{fig:exp:algo:mihi:red}
\end{subfigure}
%\begin{subfigure}[b]{0.22\textwidth}
%    \includegraphics[width=\textwidth]{FairRandomWalks/WSDM21/imgs/new_delta/mean_graph_math_ast.pdf}
%    \includegraphics[width=\textwidth]{FairRandomWalks/WSDM21/imgs/new_delta/bad_nodes_graph_math_ast.pdf}
%    \caption{MaAs}\label{fig:exp:algo:mate:blue}
%\end{subfigure}
\begin{subfigure}[b]{0.22\textwidth}
    \includegraphics[width=\textwidth]{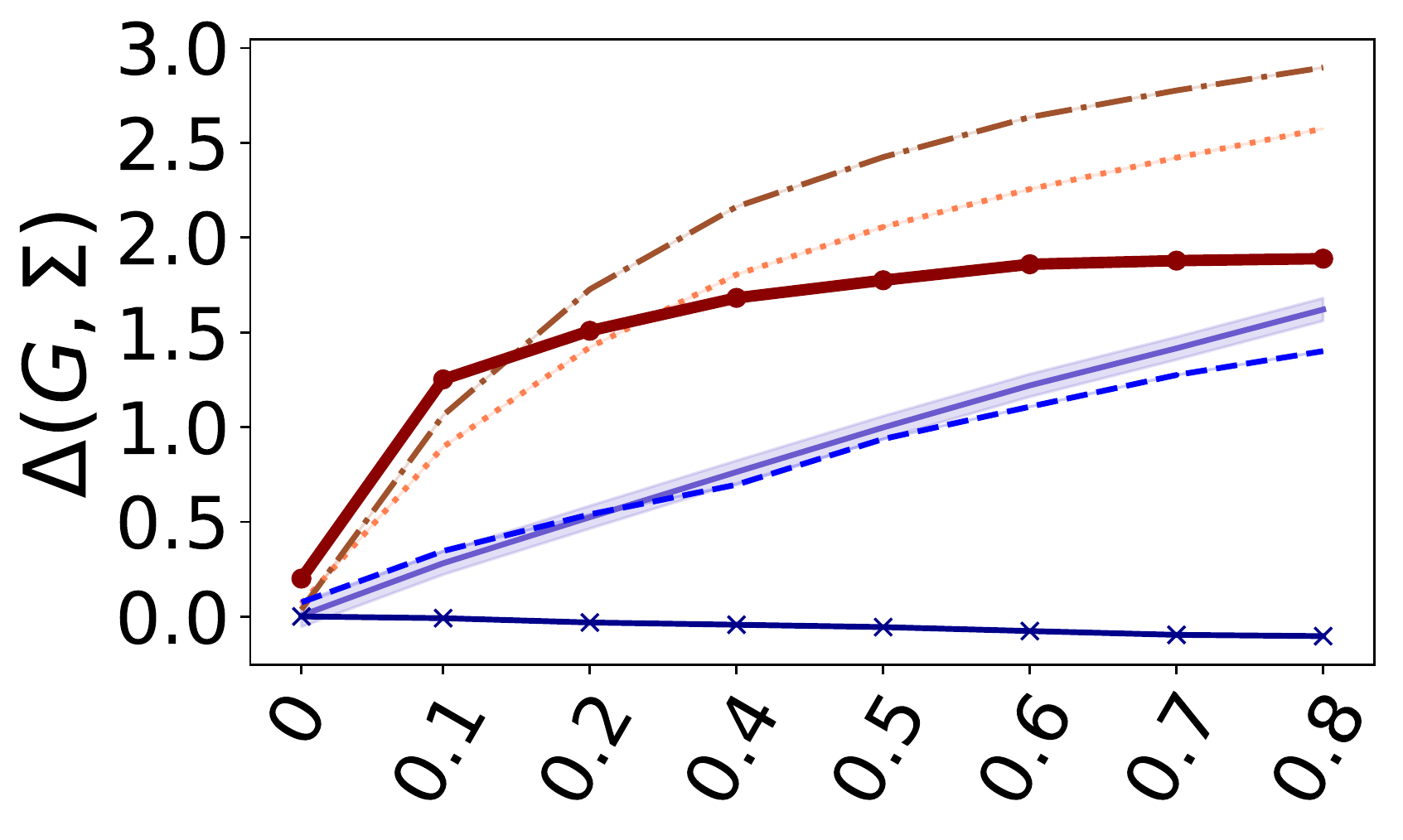}
    \includegraphics[width=\textwidth]{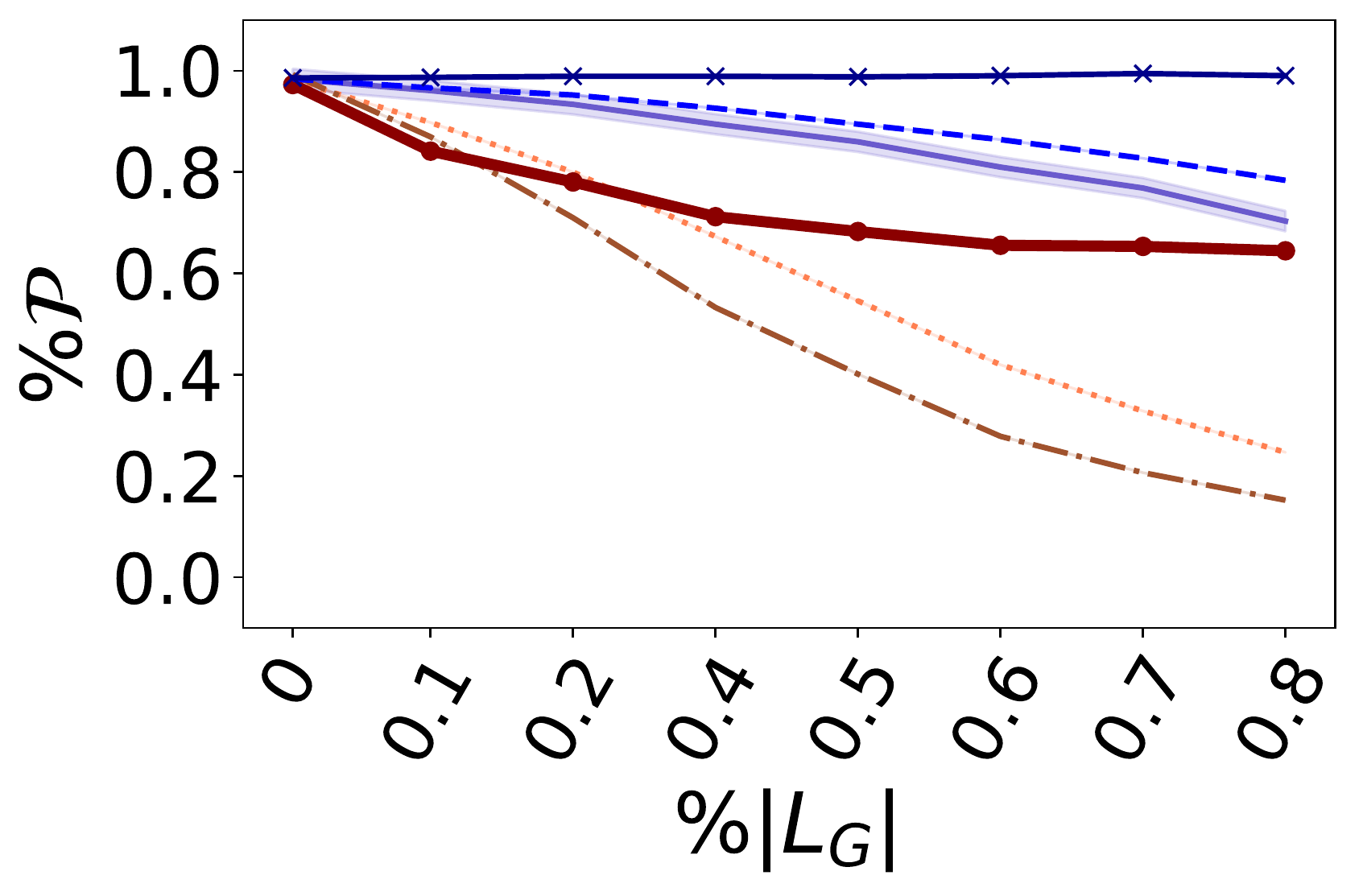}
    \ifextended%
    \else
    %\vspace{-10pt}
    \fi
    \caption{PolBlogs}\label{fig:exp:algo:polblogs:blue}
\end{subfigure}
\begin{subfigure}[b]{0.63\textwidth}
    \includegraphics[width=\textwidth]{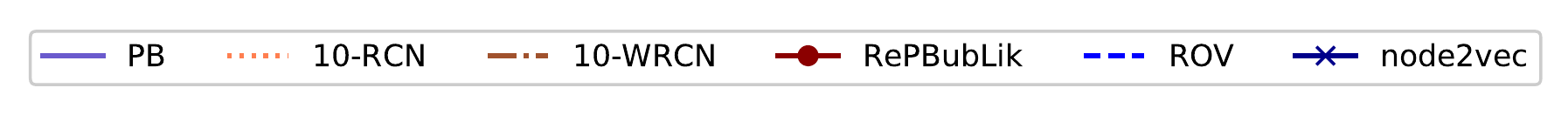}
    %\includegraphics[width=\textwidth]{FairRandomWalks/WSDM21/imgs/new_delta/bad_nodes_graph_polblogs.pdf}
    %\caption{PolBlogs}\label{fig:exp:algo:polblogs:blue}
\end{subfigure}
\end{center}
\ifextended%
\else
%\vspace{-10pt}
\fi
\caption{The first row shows the $\Delta(G, \Sigma)$ (y-axis) for increasing value of
$k$, reported in terms of $\%\mathcal{L}_G$, the union of possible edges across  $\badc{C}{G}$ and $\bar{C}$ for $C \in {R,B}$, (x-axis) for each algorithm. Higher values of
$\Delta$ show more significant reduction of the structural bias. In the second
row, we show the percentage of nodes that are still parochial, $\%\mathcal{P} =
\frac{\card{\pol{}{G}}-\card{\pol{}{\Gnew}}}{\card{\pol{}{G}}}$
after $k$ additions.}\label{fig:exp:total}
\Description{Results of the experiments for different graphs. See the caption
and the text for description of these results.}
\end{figure*}

\section{Experimental evaluation}\label{sec:exper}

The goal of our experimental evaluation is to understand how the addition of the
set $\Sigma = \Sigma_R \cup \Sigma_B$ of $K=k_R+k_B$ edges output by \algonameplus, run separately with $C=R$ and $B$, affects the structural bias of the
network, by computing the gain in the structural bias reduction. In particular, we measure the
gain with $\Delta(G, \Sigma)$, introduced in \cref{sec:algo}, used here with a
simpler notation.  We also measure the change
$\card{\bad{G}}-\card{\bad{\Gnew}}$  after adding $\Sigma$. 

\textit{Baselines.}
We compare \algonameplus\ to three different baselines (i.e., simplified variants of \algonameplus) and to two existing algorithms, described in the following. 
The first baseline, \textit{PureRandom} (PR) selects the source, and the target, nodes of the new edges uniformly at
random from the set $\badc{C}{G}$ and $\bar{C}$, respectively. The second baseline \textit{Random Top-$N$ Central Nodes
($N$-RCN)}, given a parameter $N\in (0,100)$, sorts the nodes in $\badc{C}{G}$
by descending centrality, and picks, uniformly at random, $k_C$ edges
with source in the top-$N$ percent of nodes in $\badc{C}{G}$. The last baseline,
\textit{Random Top-$N$ Weighted Central Nodes ($N$-RWCN)}, differs from $N$-RCN
as the nodes in $\badc{C}{G}$ are sorted in descending order by
$\centrdict(v) \times m_{v,u}$.

%The first baseline, \textit{PureRandom} (PR) selects $k_C$ edges uniformly at
%random without replacement from the set $\cands_C= \{(u,v) \notin E \ :\ u \in \badc{C}{G} \wedge v \in \good{\bar{C}}\}$ of candidate edges for
%each color $C$. 
%The second baseline \textit{Random Top-$N$ Central Nodes
%($N$-RCN)}, given a parameter $N\in (0,100)$, sorts the nodes in $\badc{C}{G}$
%by descending centrality (like PCN), and picks, uniformly at random, $k_C$ edges
%with source in the top-$N$ percent of nodes in $\badc{C}{G}$. The last baseline,
%\textit{Weighted Top-$N$ Random Central Nodes ($N$-WRCN)}, differs from $N$-RCN
%as the nodes in $\badc{C}{G}$ are sorted in descending order by
%$\nicefrac{\centrdict(v)}{d(v)}$.

We compare \algonameplus\ also to two existing methods, ROV~\cite{garimella2017reducing}, and node2vec~\cite{grover2016node2vec}.  The \textit{ROV} algorithm outputs a set of $k$ edges to be added to $G$ to minimize the controversy score (RWC)~\cite{garimella2018quantifying}. The RWC is a metric that characterizes how controversial a topic is by capturing how well separated the two colors are. ROV considers as candidates the edges between the high-degree vertices of each color~\citep[Algorithm 1]{garimella2017reducing}. These edges are sorted by descending impact on the graph controversy score, and the top-$k$ edges are added to the graph. The objective of the comparison between ROV and \algonameplus\ is to verify whether an algorithm developed to minimize the RWC can be used to minimize the structural bias.
\textit{node2vec} is a graph embedding technique that encodes a network in a low-dimensional space retaining characteristics like the nodes' similarity~\citep{grover2016node2vec}. The generation of the embedding is based on random walks. One of the main applications of node2vec is to employ the embedding as the feature space to train link recommendation algorithms. The goal of comparing node2vec to \algonameplus\ is to understand how the predictions of widely-used link recommendation algorithms affect the network's structural bias. In the experiments, we create for each network a 128-dimensional space, then we train a logistic regression (\textit{avg.} AUC 85\%) over these features, and we predict the existence probabilities of edges from $\bad{G}$. We add to the graph the top $k$ edges according to these probabilities.

\textit{Datasets.}
We create graphs obtained from \textit{Wikipedia}%\footnote{\url{https://github.com/CriMenghini/bipartitioned-networks}}
, \textit{Amazon}\footnote{\url{https://snap.stanford.edu/data/amazon-meta.html}} and
\textit{PolBlogs}\footnote{\url{http://www-personal.umich.edu/~mejn/netdata/}}. \Cref{tab:info} shows the relevant statistics.

From \textit{Wikipedia} we consider four bi-partitioned subgraphs related to controversial topics: \textit{politics, abortion, guns} and \textit{sociology}~\cite{menghini2020wikipedias}. %Each of the networks has two partitions, corresponding respectively to \textit{democrats vs. republican, pro-life vs. pro-choice, control vs. right}, and  \textit{individualism vs. collectivism}.
Each node in the graph is a page, and is assigned to one color according to Wikipedia's categorization. Directed edges denote links, and are weighted using Wikipedia's clickstream data.\footnote{\url{https://dumps.wikimedia.org/other/clickstream/}}

The \textit{Amazon} dataset contains metadata about \textit{books}~\cite{leskovec2007dynamics}.
%For each product we know the list of similar items, the product categories and the sales rank.
Given two book categories, the vertices are all the items in those categories, colored accordingly. There is a  directed edge $(u,v)$ if $v$ appears in the list of items similar to $u$. The edge is weighted by $v$'s sales rank.\footnote{Amazon sales rank is a metric of the relationship among products within one category based on their sales performance. It
expresses how well a product is selling relative to other products in the same category.}  We built three graphs by considering pairs of the following categories: \textit{Mathematics \& Technology (MaTe)}, \textit{ History of Technology \& Military Science (MiHi)}, and \textit{Mathematics \& Astronomy  (MaAs)}.

The \textit{Political Blogs} dataset is a directed network of hyperlinks between weblogs on US politics~\cite{adamic2005political}. Each node represents a blog and is colored according to its political leaning.
Links between blogs were automatically extracted from a crawl of the front page of the blog and represent the edges of the graph. Each edge $(v,u)$ has weight proportional to the out-degree of $v$.

\begin{table}[ht]
\centering
\resizebox{\columnwidth}{!}{
\begin{tabular}{lccccccc}  %{lllrrrrr}
\toprule
\multicolumn{8}{c}{\textit{Wikipedia}} \\
\cmidrule(r){1-8}
Topic    & $\card{R}$ & $\card{B}$  &  $\card{E}_{R \rightarrow B}$ &
$\card{E}_{B \rightarrow R}$ & $\card{E}$  & \%$\badc{R}{G}$ &  \%$\badc{B}{G}$\\
\midrule
\textit{Abort.}       & 208     & 413  &  80  & 170 & 1911 & 85.56 & 89.20\\
\textit{Guns}       & 142     & 118  &  72  & 79 & 723  & 82.95 & 71.69\\
\textit{Pol.}      & 10347    & 10129  &  17452  & 16484 & 141486  & 25.97 & 42.36\\
\textit{Sociol.}      & 602   &  2283 & 284  & 192 & 10514 &  91.32 & 96.36\\
\midrule
\multicolumn{8}{c}{\textit{Amazon}} \\
\cmidrule(r){1-8}
Topic    & $\card{R}$ & $\card{B}$  &  $\card{E}_{R \rightarrow B}$ &
$\card{E}_{B \rightarrow R}$ & $\card{E}$  & \%$\badc{R}{G}$ &
\%$\badc{B}{G}$\\
\midrule
%\textit{Thri vs. Sci-Fi}      & 4837    & 5447  &  218  & 258 & 33217\\
\textit{MaTe}       & 827     & 566  &  25  & 42 & 675 &  90.91 & 79.63\\
\textit{MiHi}       & 446     & 405  &  66  & 63 & 482 &  58.33 & 63.46\\
\textit{MaAs}      & 827   &  294 & 11  & 6 & 680 &  97.31 & 95.15\\
\midrule
\multicolumn{8}{c}{\textit{PolBlogs}} \\
\cmidrule(r){1-8}
Topic   & $\card{R}$ & $\card{B}$  &  $\card{E}_{R \rightarrow B}$ &
$\card{E}_{B \rightarrow R}$ & $\card{E}$  & \%$\badc{R}{G}$ &
\%$\badc{B}{G}$\\
\midrule
\textit{Politics}      & 545    & 488  &  902  & 781 & 17348 &  87.71 & 90.37\\
\bottomrule
\end{tabular}}
\caption{Networks' statistics. The notation is consistent with the rest of the
paper.}\label{tab:info}
\end{table}

\textit{Setup.}
Given a network, we run \algoname\ and the other algorithms on that network for increasing values of $K$, with $K=1,2,4,6,\dotsc,400$ or $2000$ for larger graphs (\textit{Sociology} and \textit{Politics}). These values of $K$ represent only a small percentage of the set of possible edges to insert and correspond to the total number of edges to add to the graph. Once we set the value of K, accordingly, we allocate $k_B$ and $k_R$ of the $K$ edge insertions to each color proportionally to the sum of the BRs of the parochial vertices in each color.
In particular, we define $Y_C = \sum_{v \in \badc{C}{G}} \bubble{G}{v}{t}$, for $C \in {R,B}$, then $k_B = \left\lceil k \frac{Y_B}{Y_B+Y_R}\right\rceil$ and $k_R = K - k_B$. This allocation strategy is a simple but reasonable heuristic that ensures that more edges are added from nodes whose color is more parochial.

We assign the weight $m_{v,u}=1/(d(v)+1)$ to the added edge $(v,u)$, where $d(v)$ is the out-degree of $v$ before the insertion, and then we re-normalize the weights of the other edges by multiplying each of them by $1-m_{v,u}$. Furthermore, we set $\badthres=5$\ and $\goodthres=2$. Moreover, for the algorithms picking the top-N central nodes $N=10$. To account for variability of the algorithm, we run them 10 times. The variance of the results is low, overall.

The code for our experiments is available from \url{https://github.com/CriMenghini/RePBubLik}.

\textit{Experiment results.}
In \cref{fig:exp:total}, the plots in the first row show how the structural bias
is affected by the insertion of an incrementally larger set of edges, while the
ones on the second row show the reduction in the number of parochial nodes. Each
curve in the plot illustrates the gain by a different algorithm. We can draw the
following observations. (1) \algonameplus\ performs better than the
baselines and the competitors, especially after the insertion of a few edges, as
they obtain much larger gain with fewer insertions, i.e., the average BR of
parochial nodes decreases faster requiring less modifications. (2) N-RCN, N-WRC,
and ROV after a certain point become flat. (3) Overall, \algonameplus\  is the best
algorithm.
%(4) The structural bias goes below $\badthres=5$ even if still more than 50\% of the nodes is parochial, or stays above $\badthres=5$ even if less than 50\% of the nodes is parochial.
(4) The values of \algonameplus and PR converge, at different speed, to the same
value when we add more edges. (5) node2vec, in the best cases, shows little
improvement of the structural bias that, in the remaining cases, stays flat or
even increases. We now explain these behaviours using the plots on the second
row of \cref{fig:exp:total}.

(1) \algonameplus\ chooses edges that directly affect the BR of central nodes and,
with a chain effect, the BR of nodes connected to them. More central are the
nodes we attach the edges to, higher the structural bias drop is. In fact, it
follows, as shown for all the networks, that the addition of even small set of
edges is very effective. Additionally, we observe that the structural bias
reduction corresponds to a significant drop of the number of parochial nodes.

(2)  N-RCN, N-WRC, and ROV attach edges only to a subset of $\bad{G}$ and as $k$
increases, so does the probability of adding multiple edges to the same nodes.
These facts imply respectively that, especially on disconnected graphs (see MiHi
in \cref{fig:exp:algo:mihi:red}), the addition of edges may affect few nodes,
and that even the insertion of more edges does not modify the set of nodes on
which the new edges have effect. Thus, the curves of N-RCN, N-WRCN and ROV reach
an early saturation that expresses the scarce impact of subsequent edge
additions. This explanation is confirmed by the percentage of parochial nodes,
which does not decrease after the saturation point. Furthermore, the ROV shows a
stepping behaviour due to it selecting edges between high-degree central nodes
that minimize the RWC without imposing diversity constraints on nodes. And
resulting in many  selected edges being attached to the same node. Last, we see
that on \textit{Polblogs} the best algorithms are N-RCN, N-WRCN\@. This surprising
superiority of the random approaches can be explained by the fact that
\textit{Polblogs} is a connected graph, thus edges added to the top-central
nodes potentially affect all the nodes in $\bad{G}$. Thus, even when N-RCN and
N-WRCN add multiple new edges to the same set of nodes, $\Delta$ continues to
increase.

(3) \algonameplus\  shows a consistent behaviour, indeed it increases the gain faster than
other methods, requiring fewer insertions. The penalty factor $\eta$ allows the
algorithm to diversify the set of nodes to which the new edges attach, raising
the chances of lowering the BR of a larger number of parochial nodes, thus
increasing the gain. This feature is important especially on disconnected
graphs, where the vertices in tiny connected components always have lower
centrality compared to those in huge ones. More importantly, we observe that the
size of $\bad{G}$ is often reduced to 0: \algonameplus\  is able to ``heal'' all the bad
vertices, and if we measured the structural bias on the obtained graph it would
be zero.

%(4) These situations verify because we are evaluating an average. In fact, the former happens if there are nodes whose MBR is very low, and the latter when even few nodes have very high MBR, for instance when in a disconnected component.
(4) The variants of \algoname: \algonameplus\ and PR, pick edges from the same
candidate set, thus the more edges they can pick, the more likely they choose
edges with similar effect, thus the average parochial nodes' BR converges. This is the main explanation why the random algorithm performs so well.

(5) Generally, link recommendation algorithms tend to suggest edges between
similar nodes. Node2vec captures this similarity through the nodes'
neighborhood. In this context, graphs partitions have high within- and low
between-density. Nodes in the same partition then lie close in the
embedding space. Edges suggested by node2vec with high probability connect nodes
close to each other in the embedding, which often are in the same partition.
Thus, node2vec has a hard time reducing the structural bias, and in some cases
increases it.

\ifextended%
Plots for \textit{guns}, \textit{sociology}, \textit{politics} and \textit{MaAs}
show similar behaviour and can be found in \cref{sec:appendix}.
%\question[from=Matteo,date=12/30]{Where are these plots?}
\else
Due to space constraint we omit the presentation of the plots for \textit{guns},
\textit{sociology}, \textit{politics} and \textit{MaAs}, which show similar
behaviour. But they can be found in the extend online version~\citemissing.% we see in the reported networks. %For the sake of completeness, we report them in a longer version of the paper along with  other plots picturing more insights about algorithms behaviour.
\fi

%\input{FairRandomWalks/WSDM21/tables/summary}

%! TEX root = repbublik.tex
\section{Conclusion}\label{sec:concl}
We presented \algoname, an algorithm that reduces the structural bias of a graph
by adding $k$ edges. Thanks to the monotonicity and submodularity of the
objective function, \algoname\ is able to return a constant-factor approximation
using a greedy approach based on a task-specific variant of the random walk
closeness centrality. The results of our experimental evaluation show that the
edge insertions suggested by \algoname\ result in a much quicker decrease of the
structural bias than existing methods and reasonable baselines.

The functionality of \algoname \ relies on the existence of an oracle receiving the network and a page in it as input and outputting the transition probabilities of potentially added links to the input page. % spacesaver
% 20210110 MR: the following is not true: we used the degree in the experiments, but in general we say that the oracle can do whatever it wants as long as it uses only information that is local to v (which we really did not specify what it is). 
%In this work, we made a simple assumption that  any such probability only depends solely on the number of links within a page (i.e., the degree of the corresponding vertex) %
We leave the question of designing an algorithm which learns such probabilities from data as future direction of this work.

\begin{acks}
Shahrzad Haddadan was supported by NSF Award CCF-1740741.
Part of Cristina Menghini's work was done while visiting Brown University and is supported by the ERC Advanced Grant 788893 AMDROMA.
Matteo Riondato is supported in part by \grantsponsor{NSF}{National Science
Foundation}{http://www.nsf.gov} award IIS-\grantnum{NSF}{2006765}. Eli Upfal was
supported in part by NSF awards RI-\grantnum{NSF}{1813444}, and
CCF-\grantnum{NSF}{1740741}. We thank an anonymous reviewer for correcting one
of our lemmas.
\end{acks}
% 20201215 MR: disabling the pre-generated biblio for the non-extended version
% because we are actually very close to the allowed length (9 pages).
%\ifextended%
\bibliographystyle{ACM-Reference-Format}
\bibliography{fairrandomwalks}
%\else
%\input{biblio.tex}
%\fi
\ifextended%
\appendix%
%! TEX root = repbublik.tex
\section{Missing proofs}\label{sec:appendix}
We present here the proofs missing from the main body. For convenience, we
repeat the statements of the lemmas.

\lemcentr*

\begin{proof}[Proof of \cref{lem:centr}]
  We can write
  \[
    \centr{t'}{v}{S} = t' - \frac{1}{\card{S}} \sum_{ w \in S }
    \expect{G}{\ttime{t'}{w}{v}} \enspace.
  \]
  We apply Chebyshev's inequality to the r.v.\ $\nicefrac{1}{z} \sum_{i=1}^z
  \bar{h}_{w_{i}}$, to bound the deviation from its expectation
  \[
    \frac{1}{\card{S}} \sum_{ w \in S } \expect{G}{\ttime{t'}{w}{v}} \enspace.
  \]
  To get an upper bound to the variance of this r.v., we use the fact that the
  r.v.'s $\bar{h}_{w_i}$, $i=1,\dotsc,z$, are independent, and, from Popoviciu's
  inequality, the fact that each has a variance at most $\nicefrac{t'^2}{4}$, as
  $\bar{h}_{w_i} \in [0,t']$.
\end{proof}

The following result is used in the proof of \cref{lem:restart}.

\begin{lemma}[Markov inequality for bounded random variables]\label{lem:markov}
  Let $X$ be a random variable satisfying $0 \leq X \leq t$. We have:
  \[
    \cP(X \leq k) \leq \frac{t - \expect{}{X}}{t - k} \enspace.
  \]
\end{lemma}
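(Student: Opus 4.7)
The plan is to reduce this to the standard Markov inequality by considering the non-negative random variable $Y \doteq t - X$. Since $X \leq t$ almost surely, we have $Y \geq 0$, so standard Markov applies to $Y$.

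First I would rewrite the event $\{X \leq k\}$ as $\{t - X \geq t - k\} = \{Y \geq t - k\}$, which is a valid rewrite for any $k$. Assuming $k < t$ (the only regime in which the claimed bound is non-trivial, since otherwise the right-hand side is undefined or $\geq 1$), the threshold $t - k$ is strictly positive, so standard Markov gives
\[
  \cP(Y \geq t - k) \leq \frac{\expect{}{Y}}{t - k} \enspace.
\]
Then I would compute $\expect{}{Y} = t - \expect{}{X}$ by linearity of expectation, and substitute to obtain
\[
  \cP(X \leq k) = \cP(Y \geq t - k) \leq \frac{t - \expect{}{X}}{t - k},
\]
which is the desired inequality.

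There is essentially no obstacle here: the whole content is the observation that the two-sided boundedness $0 \leq X \leq t$ lets us apply Markov to the ``reflected'' variable $t - X$. The only care needed is to note the boundary cases $k = t$ (where the statement is trivial since $\cP(X \leq t) = 1$ and the ratio equals $+\infty$ when $\expect{}{X} < t$, or $1$ when $\expect{}{X} = t$) and $k > t$ (vacuous, as then $\cP(X \leq k) = 1$ but the claimed bound is negative, so the lemma is implicitly assuming $k < t$). No further machinery is required.
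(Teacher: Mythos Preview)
Your proof is correct. The paper takes a slightly different route: it decomposes the expectation directly as
\[
  \expect{}{X} = \int_{0}^{k} x\, p(x)\, dx + \int_{k}^{t} x\, p(x)\, dx \leq k\bigl(1 - \cP(X \geq k)\bigr) + t\, \cP(X \geq k),
\]
solves for $\cP(X \geq k) \geq \frac{\expect{}{X} - k}{t - k}$, and then passes to the complement. Your reflection trick $Y \doteq t - X$ followed by the standard Markov inequality is arguably cleaner: it makes transparent that this lemma is nothing more than ordinary Markov applied to the non-negative variable $t - X$, and it avoids the mild sloppiness in the paper's last step where $\cP(X \leq k) = 1 - \cP(X \geq k)$ is asserted without comment on possible mass at $k$. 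The paper's decomposition, on the other hand, is self-contained and does not invoke Markov as a black box. Both arguments are two lines long and equally valid.
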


\begin{proof}
  It holds
  \begin{align*}
    \expect{}{X} &= \int_{0}^{k} x p(x) dx  +\int_{k}^{t} x p(x) dx \\
    &\leq k \left( 1 - \cP(X \geq k) \right) + t \cP(X \geq k) \enspace.
  \end{align*}
  Thus,
  \[
    \cP(X \geq k) \geq \frac{\expect{}{X} - k}{t - k},
  \]
  and
  \[
    \cP(X \leq k) = 1 - \cP(X \geq k) \leq 1 - \frac{\expect{}{X} - k}{t - k} =
    \frac{t - \expect{}{X}}{t - k} \enspace. \qedhere
  \]
\end{proof}

\lemrestart*

\begin{proof}[Proof of \cref{lem:restart}]
  Assume first that $\bubble{G}{v}{t} \geq t (1-\nicefrac{1}{8r})$. Consider a
  set of $r$ independent random walkers, $w_1, \dotsc, w_r$, each starting from
  $v$. We can see the trace of the partial walks taken by our random walker with
  restarts as the union of the traces of these walkers. The event $\mathcal{E}'
  \doteq \text{``}\mathcal{T}_v \le \nicefrac{t}{2}\text{''}$ is a strict subset
  of the event $\mathcal{E}'' \doteq \text{``there is (at least) a walker } w_i$
  for which $T_v^t \le \nicefrac{t}{2}$'', as the condition in $\mathcal{E}'$
  implies the condition in $\mathcal{E}''$, but not vice versa. Thus,
  $\cP(\mathcal{E}') < \cP(\mathcal{E}'')$. By  \cref{lem:markov} we have, for
  each walker, that
  \[
    \cP\left(T^t_v\leq \frac{t}{2} \right) \leq
    \frac{t - \expect{}{\ttime{t}{v}{S}}}{t - \frac{t}{2}} \le
    \frac{\frac{t}{8r}}{\frac{t}{2}} \leq \frac{1}{4r}.
  \]
  Thus, using the union bound over the $r$ walkers, we get $\Pr({\mathcal
  E}'') \le \nicefrac{1}{4}$. Equivalently $\cP\left(\mathcal{T}_v\leq \nicefrac{t}{2}
  \right) \leq \nicefrac{1}{4}$.

  For the case when $\bubble{G}{v}{t} \leq b$,
  using Markov inequality we get $\cP\left(\mathcal{T}_v > 4br \right) \leq
  \nicefrac{1}{4}$.
\end{proof}

\lemreduction*

\begin{proof}[Proof of \cref{lem:reduction}]
  We show an approximation-preserving polynomial time  reduction from the
  minimum set cover problem to \cref{prob:zerobias}. Our reduction  does
  \emph{not} change the cost of the optimal solution, thus maintaining, in
  addition to NP-hardness, the APX-hardness.

  Let $U=\{u_1, u_2,\dots , u_n\}$ be a domain and let $S_1,S_2,\dotsc, S_m$\\
  $\subseteq U$ be an instance of the set cover problem. We construct an instance
  of \cref{prob:zerobias} as follows. Fix $t \ge 3$. Let $V$ be union of the
  following sets: $U$,  $S={\{s_i\}}_{i=1}^m$ representing the sets, $T =
  \bigcup_{j=1}^m T_j$ where each $T_j$ is a set of $\lceil t/2 \rceil -1 $
  distinct vertices, and $\{g\}$. Assume all vertices except $g$ have color red
  and $g$ is blue. For each $i\in [n]$ and $j \in [m]$, place an edge from $u_i$
  to $s_j$ if and only if $u_i\in S_j$. For each $j\in [m]$, using the vertices
  in $T_j$, place a path of length $\lceil t / 2\rceil - 1$  going from $s_j$ to
  $g$. For each $1\leq j\leq m$, it holds $\bubble{G}{s_j}{t} = \lceil
  \nicefrac{t}{2} \rceil - 1$, and for each $1\leq i\leq n$,
  \[
    \bubble{G}{u_i}{t} = \frac{1}{\card{\{j\ :\ u_i \in S_j\}}} \sum_{j\ 
    \text{s.t.}\ u_i\in S_j} \bubble{G}{s_j}{t}+1 = \lceil \nicefrac{t}{2}
    \rceil \enspace.
  \]
  Clearly the bubble radius of vertices in $T$ is strictly less than
  $\nicefrac{t}{2}$. Thus the parochial vertices are all and only those in $U$.
  Assume there is a polynomial-time algorithm for \cref{prob:zerobias}. For any
  (optimal) solution $\Sigma \subseteq V \times V$, it holds
  $\bubble{G_{\textrm{new}}}{u_i}{t} < \nicefrac{t}{2}$ if and only if $\Sigma$
  contains an edge whose source is in $\{u_i\} \cup
  \bigcup_{j\ \text{s.t.}\ u_i \in S_j} (\{s_j\} \cup T_j)$, for each $i \in
  [n]$. The source vertices of the edges in $\Sigma$ must be distinct, as any
  solution containing two edges originating from the same vertex cannot be
  optimal. Denote with $Z$ the set of the source vertices of the edges in
  $\Sigma$. Consider now the solution $\Sigma'$ obtained by changing (in
  polynomial time) $\Sigma$ as follows: \textit{1.} each edge in $\Sigma$ whose
  source is in $T_i$ is modified to have source $s_i$, for each $i \in [n]$; and
  \textit{2.}  each edge in $\Sigma$ whose source is $u \in U$ is changed to
  have source $s_j$ where $j$ is such that $u \in S_j$. Clearly $\Sigma'$ is
  still an (optimal) solution to \cref{prob:zerobias}. Let $\mathsf{OPT}$ be the
  set of source vertices of the edges in $\Sigma'$. Clearly it must be
  $\mathsf{OPT} \subseteq S$. We now show that $\Sigma'$ is an (optimal)
  solution to \cref{prob:zerobias} if and only if $\mathsf{OPT}$ is such
  that $\{ S_j \:\ s_j \in \mathsf{OPT} \}$ is a minimum set cover for the
  considered instance. It is evident that $\{ S_j\ :\ s_j \in \mathsf{OPT} \}$
  is a set cover, which can be obtained in polynomial time from $\Sigma'$. We
  now show that this set cover is minimal. Consider now any set cover $Y
  \subseteq \{S_1,\dotsc, S_m\}$, and consider the set of edges $\{(s_i, g)\ :\
  S_i \in Y\}$. Adding these edges to $G$ would result in all the vertices in
  $U$ to no longer be parochial. This holds in particular for any \emph{minimal}
  set cover  $Y$, from which we can create an (optimal) solution $\Sigma_Y$ to
  \cref{prob:zerobias}. Thus we found a bijection between (optimal)
  solutions to \cref{prob:zerobias} and minimal set covers for the
  considered instance, and computing one from the other can be done in
  polynomial time, showing the NP-hardness of \cref{prob:zerobias}. The
  APX-hardness follows because, for any minimum set cover $Y$, the corresponding
  optimal solution $\Sigma_Y$ to \cref{prob:zerobias}, built as above, is
  such that $\card{\Sigma_Y} = \card{Y}$, thus if we had a constant-factor
  polynomial-time approximation algorithm for \cref{prob:zerobias} we
  would have an algorithm with the same properties for the minimum set cover
  problem.
\end{proof}

\lemgainbounds*

\begin{proof}[Proof of \cref{lem:gainbounds}]

  Consider the probability space of all random walks starting from $v$ in $\Gnew$ and $G$.  We introduce a coupling between these two probability spaces as follows:  consider a walk in $\Gnew$ and couple every step of it to an identical step in $G$. 
  If a walk in $\Gnew$ never traverses $(v,w)$ then the gain function is zero as it gets coupled to the identical walk in $G$. 
  Assume that the walk in $\Gnew$ traverses $(v,w)$ at the $i$th step without first visiting a vertex in $\bar{C}_v$. Before traversing $(v,w)$, the two identical walks in $\Gnew$ and $G$ have the same probabilities and the above coupling works. We partition the state space by conditioning on the step $i$ as follows:

  Let $\mathcal{E}_i$, $1 \le i \le t'$,  be  the  event  that the walk in $\Gnew$ traverses $(v,w)$ at step $i$. Consider all such walks,
  at step $i-1$ these walks  need one more steps to reach the other color, and they are coupled to walks in $G$ which in expectation need $\bubble{G}{v}{t'-i+1}$ steps to reach $\bar{C}_v$. Thus, assuming $\mathcal{E}_i$,  the gain in bubble radius is equal to $\bubble{G}{v}{t'-i+1}-1$.
  %Furthermore only when the walk traverses $(v,w)$ the \emph{gain} may be non-zero.
  %%is non-zero. 
  % I think we don't need this extension what do you think? [I propose the following lines to be commented out]
  
    \iffalse 
  By extension, let $\mathcal{E}_{t'+1}$ be the event that a walk in $\Gnew$ never traverses $(v,w)$.

  We can express
  $\bubble{\Gnew}{v}{t'}$ using the law of total expectation as
  \begin{align*}
    \bubble{G}{v}{t'}-\bubble{\Gnew}{v}{t'} \doteq \expect{G}{\ttime{t'}{v}{\bar{C}_v}}- \expect{\Gnew}{\ttime{t'}{v}{\bar{C}_v}} \\
        =\sum_{i=1}^{t'} \left(\expect{G}{\ttime{t'}{v}{\bar{C}_v }\mid  \mathcal{E}_i}-\expect{\Gnew}{\ttime{t'}{v}{\bar{C}_v} \mid      \mathcal{E}_i}\right) \cP({\mathcal E}_i) \enspace.
  \end{align*}

  It holds $\expect{\Gnew}{\ttime{t'}{v}{\bar{C}_v} \mid \mathcal{E}_{t'+1}} = \expect{G}{\ttime{t'}{v}{\bar{C}_v} \mid \mathcal{E}_{t'+1}}$, where by the conditioning in the expectation over $G$ we mean to consider only the walks in $G$ corresponding to walks in $\Gnew$ for which $\mathcal{E}_{t'+1}$ holds. It also holds
  $\expect{G}{\ttime{t'}{v}{\bar{C}_v }\mid  \mathcal{E}_i}=i-1+\bubble{G}{v}{t'-i+1} $.

  %, where by this conditioning we mean to consider, in the expectation, only the walks in $G$ corresponding to walks in $\Gnew$ for which $\mathcal{E}_i$ holds. 
  Furthermore,   $\expect{\Gnew}{\ttime{t'}{v}{\bar{C}_v} \mid \mathcal{E}_i} = i$,
  from the definition of $\mathcal{E}_i$ and the fact that the new edge goes
  from $v$ to $w \in \bar{C}_v$, so 
  \fi 
  
 Using the law of total expectation and summing over all $1\leq i\leq t'$, we can write
     \begin{equation}\label{eq:4}
    \bubblechange{G}{v}{(v,w)}{m_{vw}}{t'} = \sum_{i=1}^{t'} \left(
    \bubble{G}{v}{t'-i+1} - 1 \right) \cP(\mathcal{E}_i) \enspace.
    \end{equation} 
The left hand side follows from the fact that $\mathbb{P}({\mathcal E}_1)=m_{vw}$ and that $\bubble{G}{v}{j} \ge 1$ for any $1 \le j \le t'$. 
 The right-hand side is concluded from the fact that $\bubble{G}{v}{t'-i+1}\leq\bubble{G}{v}{t'} $ and that
  \[
  %\begin{align*}
    \sum_{i=1}^{t'} \cP (\mathcal{E}_i) = \sum_{i=0}^{t'-1} \visitattime{v}{i}  m_{vw}= \mathcal{F}_{t'}(v)m_{vw} \enspace.
   %\leq \sum_{i=1}^{t} \cP (\mathcal{E}_i) &\leq\sum_{i=1}^{t}  \cP (\text{return to } v \text{ at step } i)m_{vw} \\
   %&\leq \mathcal{F}(v) m_{vw} \enspace.
  %\end{align*}
  \]
  \iffalse
  and then obviously $\expect{G}{\ttime{t'}{v}{\bar{C}_v}} - i \le
  \expect{G}{\ttime{t'}{v}{\bar{C}_v}} - 1$ for $1 \le i \le t'$, which combined
  with~\eqref{eq:4} gives the right-hand side of the thesis.
  \fi 
 \end{proof}

\lemallvertices*

\begin{proof}[Proof of \cref{lem:allvertices}]
  %We denote the event of not reaching $v$ from $u$ in less than $t$ steps by
  %$u\not\rightsquigarrow v$, and the probability of not reaching $v$ in $t$
  %steps  by $P^{<  t}(u\not \rightsquigarrow v)$.

  %By definition, the bubble radius of $u$ in $G$ is:
  Using the law of total expectation, for any graph $Z$, it holds
  \begin{align*}
    \bubble{Z}{u}{t} =& \left( \sum_{i=1}^{t-1} \left( i + \bubble{Z}{v}{t-i}
    \right) \cP\left( \towithin{u}{v}{=i}{Z} \right) \right) \\
    & + \expect{Z}{\ttime{t}{u}{\bar{C}_v} \mid \nottowithin{u}{v}{< t}{Z}}
    \cP\left( \nottowithin{u}{v}{< t}{Z} \right) \enspace.
    %\expect{G}{T^{t-i}_v(\bar{C}_v)}\right) P_G^i(u\rightsquigarrow v)\right) \\ 
    %& ~ + \expect{G}{\ttime{t}{u}{\bar{C}_v} \mid u\not\rightsquigarrow v}P_G^{<  t}(u\not \rightsquigarrow v).
  \end{align*}
  %Similarly,
  %\begin{align*}
  %  \bubble{G_{\textrm{new}}}{u}{t} &= \left(\sum_{i=1}^{t-1} \left( i+
  %   \expect{G_{\textrm{new}}}{T^{t-i}_v(\bar{C}_v)}\right) P_{G_{\mathrm{new}}}^i(u\rightsquigarrow v)\right) \\
  %   & ~~ +  \expect{G_{\mathrm{new}}}{T^{t}_u(\bar{C}_v)\vert
  %   u\not\rightsquigarrow v}P_{G_{\mathrm{new}}}^{<  t}(u\not \rightsquigarrow v).
  %\end{align*}
  Between $G$ and $\Gnew$, we are only adding an outgoing edge from $v$ and
  modifying the weights of the edges outgoing from $v$, so
  \begin{align*}
    \expect{G}{\ttime{t}{u}{\bar{C}_v} \mid \nottowithin{u}{v}{< t}{G}} =
    \expect{\Gnew}{\ttime{t}{u}{\bar{C}_v} \mid \nottowithin{u}{v}{<
    t}{\Gnew}},\\
    \cP\left(\nottowithin{u}{v}{< t}{G} \right) = \cP\left(\nottowithin{u}{v}{<
    t}{\Gnew} \right),\ \text{and}\ \cP\left( \towithin{u}{v}{=i}{G} \right) = \cP\left(
    \towithin{u}{v}{=i}{\Gnew} \right) \enspace.
  \end{align*}

  Therefore,
\begin{align*}
  \bubblechange{G}{u}{(v,w)}{m_{v}}{t} &\doteq
  \bubble{G}{u}{t}-\bubble{G_{\textrm{new}}}{u}{t}\\
   &= \sum_{i=1}^{t-1} \left( \bubblechange{G}{v}{(v,w)}{m_{v}}{t-i}\right)
   \cP\left( \towithin{u}{v}{=i}{G} \right) \\
   %P_G^i(u\rightsquigarrow v)\\
   &= \sum_{i=1}^{t-2} \left( \bubblechange{G}{v}{(v,w)}{m_{v}}{t-i}\right)
   \cP\left( \towithin{u}{v}{=i}{G} \right)
   %P_G^i(u\rightsquigarrow v)
   \enspace.
 \end{align*}
 The last step follows from the fact that $\bubblechange{G}{v}{(v,w)}{m_v}{1}
 =0$ because $\bubble{Z}{u}{1}=1$ for every vertex $u$ of any graph $Z$.
\end{proof}

We need the following technical result in successive proofs.

\begin{lemma}\label{lem:parochiallowert}
  If $\bubble{G}{v}{t} \geq \nicefrac{t}{2}$ then, for any
  $t' \leq t$, it holds
  \[
    \frac{t'}{2} \leq \bubble{G}{v}{t'} \leq t' \enspace.
  \]
\end{lemma}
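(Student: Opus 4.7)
The plan is to treat the upper and lower bound separately. The upper bound is immediate: since $\ttime{t'}{v}{\bar{C}_v} = \min\{t', T_v(\bar{C}_v)\} \leq t'$ pointwise, taking expectations yields $\bubble{G}{v}{t'} \leq t'$.

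For the lower bound, the main idea is to pass to the tail-sum representation of the bounded hitting time and exploit the monotonicity of the survival function. Set $q_s \doteq \cP(T_v(\bar{C}_v) \geq s)$ for $s \geq 1$; by elementary identities,
\[
  \bubble{G}{v}{t'} = \sum_{s=1}^{t'} q_s \quad \text{and} \quad \bubble{G}{v}{t} = \sum_{s=1}^{t} q_s \enspace.
\]
The sequence $\{q_s\}$ is non-increasing in $s$, so $q_{t'}$ is a lower bound on each $q_s$ with $s \leq t'$, which gives $q_{t'} \leq \frac{1}{t'} \sum_{s=1}^{t'} q_s = \frac{\bubble{G}{v}{t'}}{t'}$.

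Next I would split $\bubble{G}{v}{t}$ at index $t'$ and bound the tail using the same monotonicity:
\[
  \bubble{G}{v}{t} = \bubble{G}{v}{t'} + \sum_{s=t'+1}^{t} q_s \leq \bubble{G}{v}{t'} + (t - t')\, q_{t'} \leq \frac{t}{t'} \bubble{G}{v}{t'} \enspace.
\]
Combining with the hypothesis $\bubble{G}{v}{t} \geq t/2$ then gives $\bubble{G}{v}{t'} \geq (t'/t) \bubble{G}{v}{t} \geq t'/2$, as desired.

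The only subtle point (and the obstacle to avoid) is the temptation to use the crude inequality $\bubble{G}{v}{t} \leq \bubble{G}{v}{t'} + (t - t')$, which is too weak: it yields $\bubble{G}{v}{t'} \geq t' - t/2$, which fails to match $t'/2$ whenever $t' < t$. Replacing the trivial additive bound on the tail by the multiplicative one obtained through the monotonicity of $q_s$ is what makes the argument go through.
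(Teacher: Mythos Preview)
Your proof is correct and follows a somewhat different route from the paper's. Both arguments implicitly reach the same intermediate bound $\bubble{G}{v}{t} \leq \bubble{G}{v}{t'} + (t-t')\,\cP(T_v(\bar{C}_v) \geq t')$, but you arrive there via the tail-sum identity $\bubble{G}{v}{t'} = \sum_{s=1}^{t'} q_s$ and the monotonicity of the survival probabilities, whereas the paper works with the probability-mass expansion $\bubble{G}{v}{t'} = t'\,\cP(T_v(\bar{C}_v) \geq t') + \sum_{i<t'} i\,\cP(T_v(\bar{C}_v)=i)$ and algebraically isolates $\cP(T_v(\bar{C}_v) \geq t')$ from the hypothesis before substituting back. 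Your extra step $q_{t'} \leq \bubble{G}{v}{t'}/t'$ turns the additive bound into the clean multiplicative inequality $\bubble{G}{v}{t'} \geq (t'/t)\,\bubble{G}{v}{t}$, which is slightly more general (it holds without the assumption $\bubble{G}{v}{t} \geq t/2$) and makes the conclusion a one-line substitution; the paper's manipulation is a bit longer but yields the same lower bound together with an explicit nonnegative remainder $(1-t'/t)\sum_{i<t'} i\,\cP(T_v(\bar{C}_v)=i)$.
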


\begin{proof}
  The rightmost inequality is straightforward from the definition of
  $\bubble{G}{v}{t'}$. Expanding the definition of $\bubble{G}{v}{t}$,
  it holds, for any $t' < t$,
  \begin{align*}
    \bubble{G}{v}{t} &= t \cP \left( \towithin{v}{\bar{C}_v}{\ge t}{G} \right)
    %P^{\geq t}(v \rightsquigarrow \bar{C}_v)
    + \sum_{i=1}^{t-1} i \cP \left( \towithin{v}{\bar{C}_v}{= i}{G} \right) \\
    %P^i(v \rightsquigarrow \bar{C}_v )\\
    &\leq  t \cP \left( \towithin{v}{\bar{C}_v}{\ge t}{G} \right)
   % P^{\geq t}(v \rightsquigarrow \bar{C}_v)
    + t \cP \left( \towithin{v}{\bar{C}_v}{t' \le \cdot \le t}{G} \right)
    %P^{t \geq \cdot \geq t'}(v \rightsquigarrow \bar{C}_v)
    + \sum_{i=1}^{t'-1} i \cP \left( \towithin{v}{\bar{C}_v}{= i}{G} \right) \\
    %P^i( v \rightsquigarrow \bar{C}_v) \\
    & = t \cP \left( \towithin{v}{\bar{C}_v}{\ge t'}{G} \right)
    %P^{\geq t'}(v \rightsquigarrow \bar{C}_v)
    + \sum_{i=1}^{t'-1} i \cP \left( \towithin{v}{\bar{C}_v}{= i}{G} \right)
    %P^i(v \rightsquigarrow \bar{C}_v)
    \enspace.
  \end{align*}
  Thus, since the l.h.s.~is at least $\nicefrac{t}{2}$, 
  \[
    \frac{t}{2} \leq t \cP \left( \towithin{v}{\bar{C}_v}{\ge t'}{G} \right)
    %P^{\geq t'}(v \rightsquigarrow \bar{C}_v)
    + \sum_{i=1}^{t'-1} i \cP \left( \towithin{v}{\bar{C}_v}{= i}{G} \right),
    %P^i( v\rightsquigarrow \bar{C}_v),
  \]
  i.e.,
  \[
    \frac{1}{2} - \frac{1}{t} \sum_{i=1}^{t'-1} i \cP \left(
    \towithin{v}{\bar{C}_v}{= i}{G} \right)
    %P^i(v \rightsquigarrow \bar{C}_v)
    \leq \cP \left( \towithin{v}{\bar{C}_v}{\ge t'}{G} \right)
    %P^{\geq t'}(v \rightsquigarrow \bar{C}_v)
    \enspace.
  \]
  By expanding $\bubble{G}{v}{t'}$ in a similar way, and plugging in the last
  inequality above, we get
  \begin{align*}
    \bubble{G}{v}{t'} &= t' \cP \left( \towithin{v}{\bar{C}_v}{\ge t'}{G} \right)
    %\expect{G}{T^{t'}_v(\bar{C}_v)} &= t' P^{\geq t'}(v \rightsquigarrow
    %\bar{C}_v)
    + \sum_{i=1}^{t'-1} i \cP \left( \towithin{v}{\bar{C}_v}{= i}{G} \right)\\
    %P^i(v \rightsquigarrow \bar{C}_v)\\
    &\geq  t' \left( \frac{1}{2} - \frac{1}{t} \sum_{i=1}^{t'-1} i
    \cP \left( \towithin{v}{\bar{C}_v}{= i}{G} \right)
    %P^i(v \rightsquigarrow \bar{C}_v)
    \right) + \sum_{i=1}^{t'-1} i \cP \left( \towithin{v}{\bar{C}_v}{= i}{G}
    \right)\\
    %P^i( v \rightsquigarrow \bar{C}_v)\\
    & = \frac{t'}{2} + \underbracket{\left( 1 - \frac{t'}{t} \right)
      \sum_{i=1}^{t'-1} i \cP \left( \towithin{v}{\bar{C}_v}{= i}{G}
    \right)}_{\ge 0}\\
    %P^i(v \rightsquigarrow \bar{C}_v)
    &\geq \frac{t'}{2} \enspace. \qedhere
  \end{align*}
\end{proof}

\lemgain*

\begin{proof}[Proof of \cref{lem:gain}]
  Using \cref{lem:allvertices}, we get
  \begin{align}
    &\bubblechange{G}{\pol{C_v}{G}}{e}{m_e}{t} = \nonumber\\
    &\frac{1}{\card{\pol{C_v}{G}}}
    \sum_{u \in \pol{C_v}{G}} \sum_{i=1}^{t-2}
    \bubblechange{G}{v}{e}{m_e}{t-i}
    \cP\left( \towithin{u}{v}{=i}{G} \right) \enspace.\label{eq:gaintech}
    %P^i(u \rightsquigarrow v) \enspace.
  \end{align}
  It holds from \cref{lem:gainbounds,lem:parochiallowert} that
  \[
    \bubblechange{G}{v}{e}{m_e}{t'} \geq \left( \frac{t'}{2} - 1 \right)
    m_e\ \text{for every}\ 1\leq t' \le t\enspace.
  \]
  Using this fact, we can continue from~\eqref{eq:gaintech} as follows
  \begin{align*}
    &\bubblechange{G}{\pol{C_v}{G}}{e}{m_e}{t} \\
    &\ge \frac{1}{\card{\pol{C_v}{G}}} \sum_{u \in \pol{C_v}{G}} \sum_{i=1}^{t-2}
    \left( \frac{t-i}{2}-1 \right)  m_e
    \cP\left( \towithin{u}{v}{=i}{G} \right)\\
    %P^i(u \rightsquigarrow v) \geq\\
    &= \frac{m_e}{2}  \underbracket{\frac{1}{\card{\pol{C_v}{G}}} \sum_{u\in \pol{C_v}{G}}
    \sum_{i=1}^{t-2} (t-i-2) \cP\left( \towithin{u}{v}{=i}{G}
  \right)}_{\centr{t-2}{v}{\pol{C_v}{G}}},
    %P^i(u \rightsquigarrow v)
  \end{align*}
  which concludes the proof.
  %The thesis then follows from $\centr{t-2}{v}{\pol{C_v}{G}} \ge c$.
\end{proof}

\lemopt*

\begin{proof}[Proof of \cref{lem:opt}]
  It follows from \cref{lem:gainbounds} that, for any $t'$,
 \begin{align*}
   \bubblechange{G}{\optnode}{e_\optnode}{m_\optnode}{t'}
   &\le \mathcal{F}_{t'}(u) \left( \bubble{G}{\optnode}{t'}
   %&\le \mathcal{F}(u) \left( \bubble{G}{\optnode}{t'}
   %\expect{G}{T^{t'}_\optnode(\bar{C}_\optnode)}
   - 1\right) m_\optnode \\
   &\leq (t'- 1) m_\optnode \mathcal{F}_{t'}(\optnode) \enspace.
   %&\leq (t'- 1) m_\optnode \mathcal{F}(\optnode) \enspace.
 \end{align*}
 By applying \cref{lem:allvertices} first, and then the above inequality, we get
 \begin{align*}
  &\bubblechange{G}{\pol{C}{G}}{e_\optnode}{m_\optnode}{t}\\
  \le& \frac{1}{\card{\pol{C}{G}}}\sum_{u \in \pol{C}{G}} \sum_{i=1}^{t-2}
  \left( \bubblechange{G}{\optnode}{e_\optnode}{m_\optnode}{t-i} \right)
  \cP \left( \towithin{u}{v}{= i}{G} \right)\\
    \le& \frac{1}{\card{\pol{C}{G}}}\sum_{u \in \pol{C}{G}} \sum_{i=1}^{t-3}
  \left( \bubblechange{G}{\optnode}{e_\optnode}{m_\optnode}{t-i} \right)
  \cP \left( \towithin{u}{v}{= i}{G} \right)\\
  &\hspace{2cm}+
  \bubblechange{G}{\optnode}{e_\optnode}{m_\optnode}{2} 
  \cP \left(\towithin{u}{v}{= t-2}{G}\right)\\
   %P_G^i(u\rightsquigarrow v)\\
   %P_G^i(u\rightsquigarrow v)\\
      \le& \frac{1}{\card{\pol{C}{G}}}\sum_{u\in\pol{C}{G}} \sum_{i=1}^{t-3}
      (t - 1 - i) m_\optnode \mathcal{F}_t(\optnode)
   %(t - 1 - i) m_\optnode \mathcal{F}(\optnode)
   \cP \left( \towithin{u}{v}{= i}{G} \right) +1\\
   \le& \frac{1}{\card{\pol{C}{G}}}\sum_{u \in \pol{C}{G}} \sum_{i=1}^{t-2}
   2 (t - 2 - i) m_\optnode \mathcal{F}_t(\optnode)
   %2 (t - 2 - i) m_\optnode \mathcal{F}(\optnode)
  \cP \left( \towithin{u}{v}{= i}{G} \right)+1 \\
   %P_G^i(u\rightsquigarrow v)\\
   \le&\ 2 m_\optnode
   \centr{t-2}{\optnode}{\bar{C}_\optnode}  \mathcal{F}_t(\optnode)+1 \le
   %\centr{t-2}{\optnode}{\bar{C}_\optnode}  \mathcal{F}(\optnode)+1 \le
   %\mathcal I}^{t-2}(\optnode)\leq
   2 m_v \centr{t-2}{v}{\bar{C}_\optnode} \gamma(G)+1\\
   %{\mathcal I}^{t-2}(v)\gamma(G)\leq \\
   \le&\ ( 4 \gamma(G)+1) \bubblechange{G}{\pol{C}{G}}{e_v}{m_v}{t},
   \end{align*}
   where the last step follows from \cref{lem:gain}.
\end{proof}

\lemsubmodular*

\begin{proof}[Proof of \cref{lem:submodular}]
  Let $G_v$ be the graph after adding only the edge $e_v$, $G_u$ be the graph
  after only adding the edge $e_u$, and $G_{vu}$ be the graph after adding both
  edges.

  We first show the monotonicity of the objective function, i.e.,
  that~\eqref{eq:monotonicity} holds. For any $w \in \pol{C}{G}$, it holds
  \begin{align*}
    \bubblechange{G}{w}{e_v}{m_v}{t} &\doteq \bubble{G}{w}{t} -
    \bubble{G_v}{w}{t} \\
    &\le \bubble{G}{w}{t} - \bubble{G_{vu}}{w}{t} \\
    &\doteq \bubblechange{G}{w}{\{e_v, e_u\}}{\{m_v, m_u\}}{t}
  \end{align*}
    %\mathbb{E}_{G}[T_u^{t'}(\bar{C}_u)]- \mathbb{E}_{G_v}[T_u^{t'}(\bar{C}_u]\leq
    %\mathbb{E}_{G}[T_u^{t'}(\bar{C}_u)]- \mathbb{E}_{G_{vw}}[T_u^{t'}(\bar{C}_u)]
  %\]
  because $\bubble{G_v}{w}{t} \geq \bubble{G_{vu}}{w}{t}$, as adding an edge
  from $u$, which is in $C_w$, to a vertex in $\bar{C}_w$ cannot increase the
  bubble radius of $w$. The result generalizes to~\eqref{eq:monotonicity} in a
  straightforward way.
  %Which is immediate knowing that $
  % \mathbb{E}_{G_v}[T_u^{t'}(\bar{C}_u]\geq
  %\mathbb{E}_{G_{vw}}[T_u^{t'}(\bar{C}_u  )]
  %$.
  %\smallskip
  We now show the sub-modularity of the objective function, i.e.,
  that~\eqref{eq:submod} holds. We start by showing that, for $w \in
  \pol{C}{G}$, it holds
  \begin{align*}
    \bubblechange{G}{w}{\{e_v, e_u\}}{\{m_v, m_u\}}{t} \le &
    \bubblechange{G}{w}{e_v}{m_v}{t} \\
    &+ \bubblechange{G}{w}{e_u}{m_u}{t} \enspace.
    %\mathbb{E}_{G}[T_u^{t'}(\bar{C}_u)]- \mathbb{E}_{G_{vw}}[T_u^{t'}(\bar{C}_u)]\leq&     ( \mathbb{E}_{G}[T_u^{t'}(\bar{C}_u)]- \mathbb{E}_{G_v}[T_u^{t'}(\bar{C}_u])\\
    %& + (\mathbb{E}_{G}[T_u^{t'}(\bar{C}_u)]-\mathbb{E}_{G_{w}}[T_u^{t'}(\bar{C}_u)]).
  \end{align*}
  With an expansion of the definition and a slight rearrangement of the terms,
  the above inequality is equivalent to
  \[
    \underbracket{\bubble{G_v}{w}{t} -
    \bubble{G_{vu}}{w}{t}}_{\bubblechange{G_v}{w}{e_u}{m_u}{t}} \le
    \underbracket{\bubble{G}{w}{t} -
    \bubble{G_{u}}{w}{t}}_{\bubblechange{G}{w}{e_u}{m_u}{t}},
   %\mathbb{E}_{G_v}[T_u^{t'}(\bar{C}_u)]- \mathbb{E}_{G_{vw}}[T_u^{t'}(\bar{C}_u)]\leq      \mathbb{E}_{G}[T_u^{t'}(\bar{C}_u)]- \mathbb{E}_{G_{w}}[T_u^{t'}(\bar{C}_u)].
  \]
  i.e., the gain of adding the same edge (in this case $e_u$) is smaller when
  the edge is added to a graph (in this case $G_v$) that has a superset of the
  edges (compared to $G$).

  Consider all the walks from $w$ that either pass through $v$ or $u$. Among
  such walks, let $\mathcal{E}_v$ be the event of seeing $v$ first and
  $\mathcal{E}_u$ be the event of seeing $u$ first. If a walk does not pass
  through either $v$ or $u$, its probability of hitting the other color is the
  same in all graphs, as the graphs differ only in the outgoing edges from these
  two nodes and their weights. For the same reason, $\cP(\mathcal{E}_v)$ and
  $\cP(\mathcal{E}_u)$ do not change across the graphs. Thus,
  \begin{align*}
    &\bubble{G_v}{w}{t} - \bubble{G_{vu}}{w}{t} = \\
    %\mathbb{E}_{G_v}[T_u^{t'}(\bar{C}_u)]- \mathbb{E}_{G_{vw}}[T_u^{t'}(\bar{C}_u)]=\\
    &\left( \expect{G_v}{\ttime{t}{w}{\bar{C}_w} \mid \mathcal{E}_v} -
    \expect{G_{vu}}{\ttime{t}{w}{\bar{C}_w} \mid \mathcal{E}_v} \right)
    %\left(\mathbb{E}_{G_v}[T_u^{t'}(\bar{C}_u)\vert \mathcal{E}_v]- \mathbb{E}_{G_{vw}}[T_u^{t'}(\bar{C}_u)\vert \mathcal{E}_v ]\right)
    \cP(\mathcal{E}_v)\\
    &+ \left( \expect{G_v}{\ttime{t}{w}{\bar{C}_w} \mid \mathcal{E}_u} -
    \expect{G_{vu}}{\ttime{t}{w}{\bar{C}_w} \mid \mathcal{E}_u} \right)
    %\left( \mathbb{E}_{G_v}[T_u^{t'}(\bar{C}_u)\vert \mathcal{E}_w]- \mathbb{E}_{G_{vw}}[T_u^{t'}(\bar{C}_u)\vert \mathcal{E}_w ]\right)
    \cP(\mathcal{E}_u) \enspace.
  \end{align*}
  Similarly,
  \begin{align*}
    &\bubble{G}{w}{t} - \bubble{G_u}{w}{t} = \\
    %\mathbb{E}_{G}[T_u^{t'}(\bar{C}_u)]- \mathbb{E}_{G_{w}}[T_u^{t'}(\bar{C}_u)]=\\
    &\left( \expect{G}{\ttime{t}{w}{\bar{C}_w} \mid \mathcal{E}_v} -
    \expect{G_u}{\ttime{t}{w}{\bar{C}_w} \mid \mathcal{E}_v} \right)
    %\left(\mathbb{E}_{G}[T_u^{t'}(\bar{C}_u)\vert \mathcal{E}_v]- \mathbb{E}_{G_{w}}[T_u^{t'}(\bar{C}_u)\vert \mathcal{E}_v ]\right)
    \cP(\mathcal{E}_v)\\
    &+ \left( \expect{G}{\ttime{t}{w}{\bar{C}_w} \mid \mathcal{E}_u} -
    \expect{G_u}{\ttime{t}{w}{\bar{C}_w} \mid \mathcal{E}_u} \right)
    %\left(
    %\mathbb{E}_{G}[T_u^{t'}(\bar{C}_u)\vert \mathcal{E}_w]- \mathbb{E}_{G_{w}}[T_u^{t'}(\bar{C}_u)\vert \mathcal{E}_w ]\right)
    \cP(\mathcal{E}_u) \enspace.
  \end{align*}
  We want to show that it holds
  \begin{align}
    \expect{G_v}{\ttime{t}{w}{\bar{C}_w \mid \mathcal{E}_v}} -
    \expect{G_{vu}}{\ttime{t}{w}{\bar{C}_w \mid \mathcal{E}_v}} \nonumber\\
    %\mathbb{E}_{G_v}[T_w^{t'}(\bar{C}_w)\vert \mathcal{E}_v]- \mathbb{E}_{G_{vu}}[T_w^{t'}(\bar{C}_w)\vert \mathcal{E}_v]&\leq \\ 
    % &\hspace{-0.9 cm}
    \le \expect{G}{\ttime{t}{w}{\bar{C}_w \mid \mathcal{E}_v}} -
    \expect{G_u}{\ttime{t}{w}{\bar{C}_w \mid \mathcal{E}_v}} \enspace.\label{eq:diffcondonv}
    %\mathbb{E}_{G}[T_w^{t'}(\bar{C}_w)\vert \mathcal{E}_v]- \mathbb{E}_{G_{u}}[T_w^{t'}(\bar{C}_w)\vert \mathcal{E}_v].
  \end{align}
  and
  \begin{align}
    \expect{G_v}{\ttime{t}{w}{\bar{C}_w \mid \mathcal{E}_u}} -
    \expect{G_{vu}}{\ttime{t}{w}{\bar{C}_w \mid \mathcal{E}_u}} \nonumber\\
    %\mathbb{E}_{G_v}[T_w^{t'}(\bar{C}_w)\vert \mathcal{E}_u]- \mathbb{E}_{G_{vu}}[T_w^{t'}(\bar{C}_w)\vert \mathcal{E}_u]&\leq \\
    %&\hspace{-0.9 cm}
    \le \expect{G}{\ttime{t}{w}{\bar{C}_w \mid \mathcal{E}_u}} -
    \expect{G_u}{\ttime{t}{w}{\bar{C}_w \mid \mathcal{E}_u}}
    %\mathbb{E}_{G}[T_w^{t'}(\bar{C}_w)\vert \mathcal{E}_u]- \mathbb{E}_{G_{u}}[T_w^{t'}(\bar{C}_w)\vert \mathcal{E}_u]
    \enspace.\label{eq:diffcondonu}
  \end{align}
  We can write
  \begin{align*}
    &\expect{G}{\ttime{t}{w}{\bar{C}_w} \mid \mathcal{E}_v} \\
    &= \sum_{i=1}^t \left (
    i + \expect{G}{\ttime{t-i}{v}{\bar{C}_v}} \right ) \cP\left(
    \towithin{w}{v}{=i}{G} \mid \mathcal{E}_v \right) \enspace.
    %\mathbb{E}_{G}[T_u^{t'}(\bar{C}_u)\vert \mathcal{E}_v]= \sum_{i=1}^{t'}\left( i+\mathbb{E}_{G}[T_u^{t'-i}(\bar{C}_u)]\right)\mathbb{P}(u\rightsquigarrow_i v)
  \end{align*}
  %where $\mathbb{P}(u\rightsquigarrow_i v)$ is the probability of reaching $v$ from $u$ in exactly $i$ steps which is the same in $G$, $G_v$ and $G_{vw}$.
  The probability on the right is the same on all graphs. Similar expressions
  hold for
  \[
    \expect{G_v}{\ttime{t}{w}{\bar{C}_w} \mid \mathcal{E}_v},\  
    \expect{G_u}{\ttime{t}{w}{\bar{C}_w} \mid \mathcal{E}_v}, \  
    \expect{G_{vu}}{\ttime{t}{w}{\bar{C}_w} \mid \mathcal{E}_v},
  \]
  and when conditioning on $\mathcal{E}_u$. % spacesaver
  %$\mathbb{E}_{G_v}[T_u^{t'}(\bar{C}_u)\vert \mathcal{E}_v]$ and
  %$\mathbb{E}_{G_{vw}}[T_u^{t'}(\bar{C}_u)\vert \mathcal{E}_v]$.
  To prove~\eqref{eq:diffcondonv} and~\eqref{eq:diffcondonu}, we now show
  that, for every $t' \le t$, it holds
  \begin{equation}\label{eq:techv}
    \expect{G_v}{\ttime{t'}{v}{\bar{C}_v}} -
    \expect{G_{vu}}{\ttime{t'}{v}{\bar{C}_v}} \leq
    \expect{G}{\ttime{t'}{v}{\bar{C}_v}} -
    \expect{G_u}{\ttime{t'}{v}{\bar{C}_v}},
    %\mathbb{E}_{G_v}[T_v^{t'}(\bar{C}_v)]- \mathbb{E}_{G_{vw}}[T_v^{t'}(\bar{C}_v)]\leq      \mathbb{E}_{G}[T_v^{t'}(\bar{C}_v)]- \mathbb{E}_{G_{w}}[T_v^{t'}(\bar{C}_v)].
  \end{equation}
  and
  \begin{equation*}%\label{eq:techu}
    \expect{G_v}{\ttime{t'}{u}{\bar{C}_u}} -
    \expect{G_{vu}}{\ttime{t'}{u}{\bar{C}_u}} \!\leq\! % negative spaces to avoid overfull \hbox warnings
    \expect{G}{\ttime{t'}{u}{\bar{C}_u}} -
    \expect{G_u}{\ttime{t'}{u}{\bar{C}_u}}.
    %\mathbb{E}_{G_v}[T_w^{t'}(\bar{C}_w)]- \mathbb{E}_{G_{vw}}[T_w^{t'}(\bar{C}_w)]\leq      \mathbb{E}_{G}[T_w^{t'}(\bar{C}_w)]- \mathbb{E}_{G_{w}}[T_w^{t'}(\bar{C}_w)].
  \end{equation*}
  We focus on showing~\eqref{eq:techv}, as the same steps, with simple
  modifications, can be followed to show the other inquality. For $Z \in \{G,
  G_u, G_v, G_{vu}\}$, let $\mathcal{A}_Z$ be the event that a random walk
  starting at $v$ reaches $u$ in at most $t$ steps before visiting any vertex in
  $\bar{C}_v$, and let $\bar{\mathcal{A}}_Z$ be the complementary event. It holds
  \[
    \cP(\mathcal{A}_{G_v}) = \cP(\mathcal{A}_{G_{vu}}) \le \cP(\mathcal{A}_{G})
    = \cP(\mathcal{A}_{G_{u}}),
  \]
  due to the insertion of $e_v$. It also holds
  \[
    \expect{G_v}{\ttime{t'}{v}{\bar{C}_v} \mid \bar{\mathcal{A}}_{G_v}} =
    \expect{G_{vu}}{\ttime{t'}{v}{\bar{C}_v} \mid \bar{\mathcal{A}}_{G_{vu}}},
  \]
  and
  \[
    \expect{G}{\ttime{t'}{v}{\bar{C}_v} \mid \bar{\mathcal{A}}_{G}} =
    \expect{G_{u}}{\ttime{t'}{v}{\bar{C}_v} \mid \bar{\mathcal{A}}_{G_{u}}},
  \]
  Using the law of total expectation (across $\mathcal{A}_Z$ and
  $\bar{\mathcal{A}}_Z$) and applying these facts, we can
  rewrite~\eqref{eq:techv} as
  \begin{align*}
    &\left( \expect{G_v}{\ttime{t'}{v}{\bar{C}_v} \mid \mathcal{A}_{G_v}} -
    \expect{G_{vu}}{\ttime{t'}{v}{\bar{C}_v} \mid \mathcal{A}_{G_{vu}}} \right)
    \cP(\mathcal{A}_{G_v})\\
    \le &\left( \expect{G}{\ttime{t'}{v}{\bar{C}_v} \mid \mathcal{A}_{G}} -
    \expect{G_{u}}{\ttime{t'}{v}{\bar{C}_v} \mid \mathcal{A}_{G_{u}}} \right)
    \cP(\mathcal{A}_{G}) \enspace.
  \end{align*}
  The differences between parentheses have the same value, as their
  corresponding terms have the same values. The inequality holds because
  $\cP(\mathcal{A}_{G_v}) \le \cP(\mathcal{A}_{G})$ due to the insertion of
  $e_v$ in $G$ to obtain $G_v$.

  %\Cref{eq:techv} holds because the only random walks  in $G_v$ and $G$
  %whose probabilities are affected by the insertion of $e_u$ are those that
  %reach $u$.  The probability of reaching $u$ from $v$ is smaller in $G_v$ than
  %in $G$, due to the insertion of $e_v$, thus~\eqref{eq:techv} holds.

  %\Cref{eq:techu} holds because starting from $u$ all the walks are affected
  %similarly in $G$ and $G_v$ except  those passing through $v$. Thus,
  %using~\eqref{eq:techv} we get the result.
\end{proof}

\begin{figure*}[t]
\begin{center}
\begin{subfigure}[b]{0.22\textwidth}
    \includegraphics[width=\textwidth]{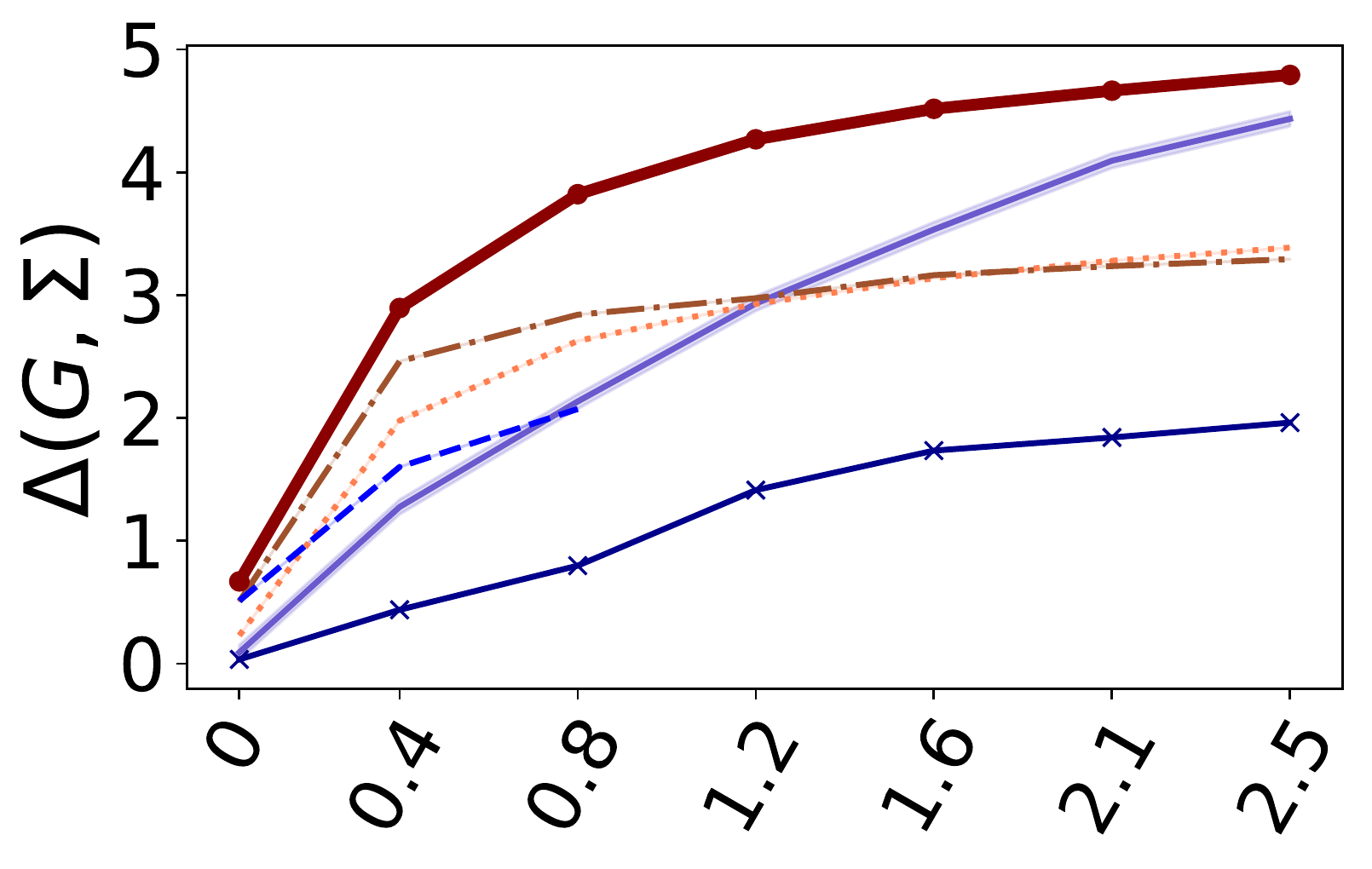}
    \includegraphics[width=\textwidth]{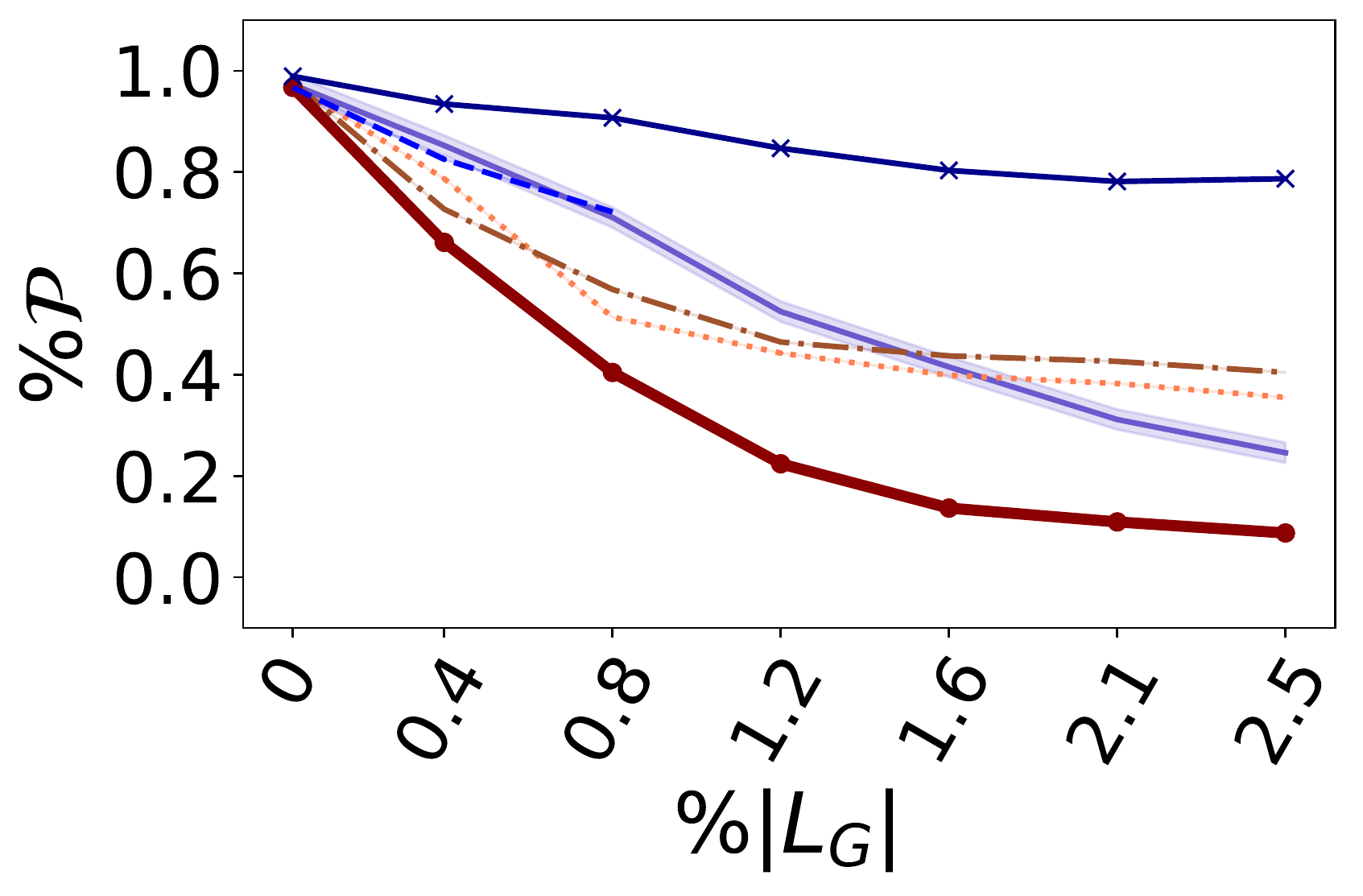}
    \ifextended%
    \else
    %\vspace{-10pt}
    \fi
    \caption{Guns}%\label{fig:exp:algo:abortion:blue}
\end{subfigure}%
\begin{subfigure}[b]{0.22\textwidth}
    \includegraphics[width=\textwidth]{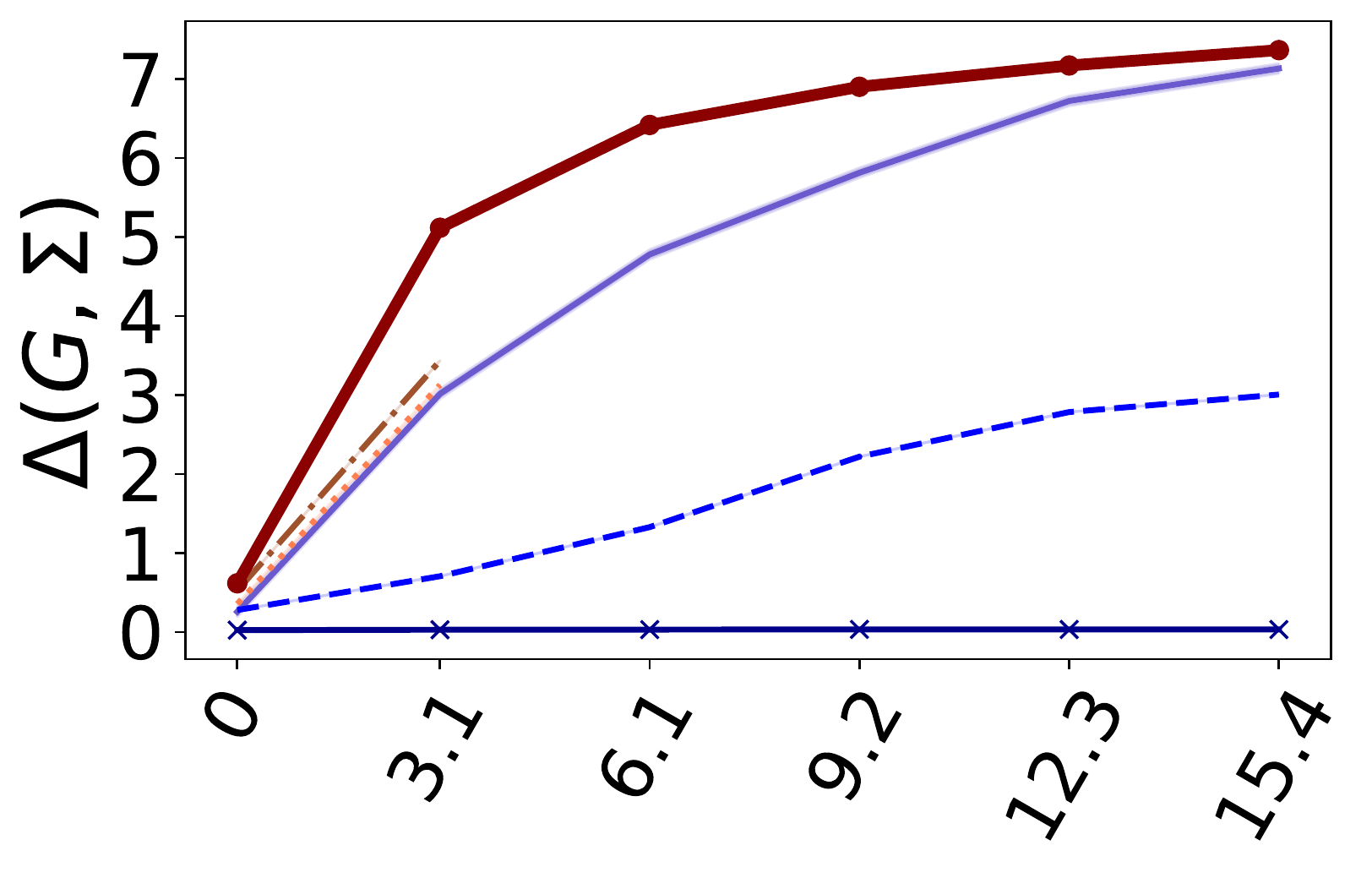}
    \includegraphics[width=\textwidth]{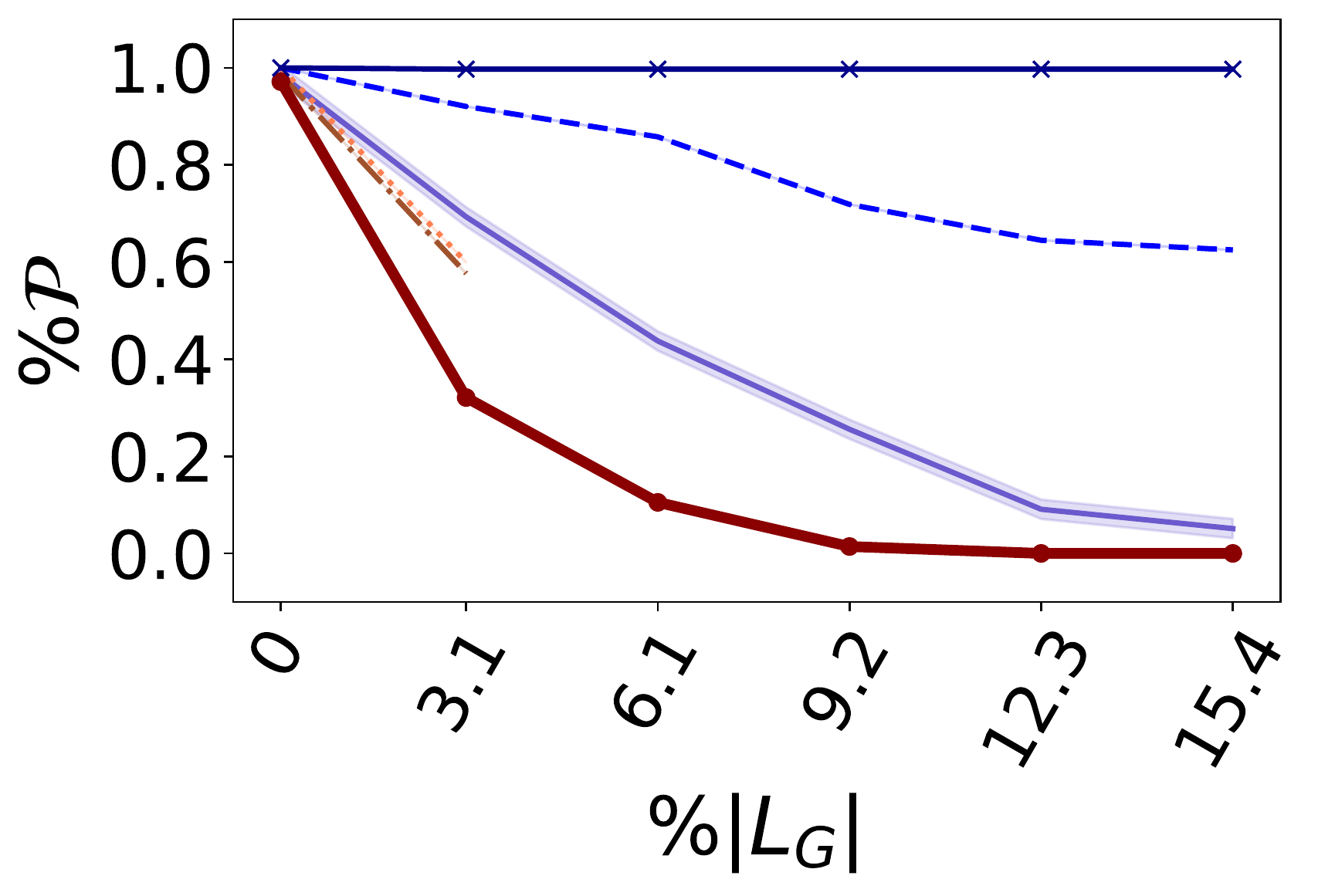}
    \ifextended%
    \else
    %\vspace{-10pt}
    \fi
    \caption{MaAs}%}\label{fig:exp:algo:mate:blue}
\end{subfigure}

\begin{subfigure}[b]{0.22\textwidth}
    \includegraphics[width=\textwidth]{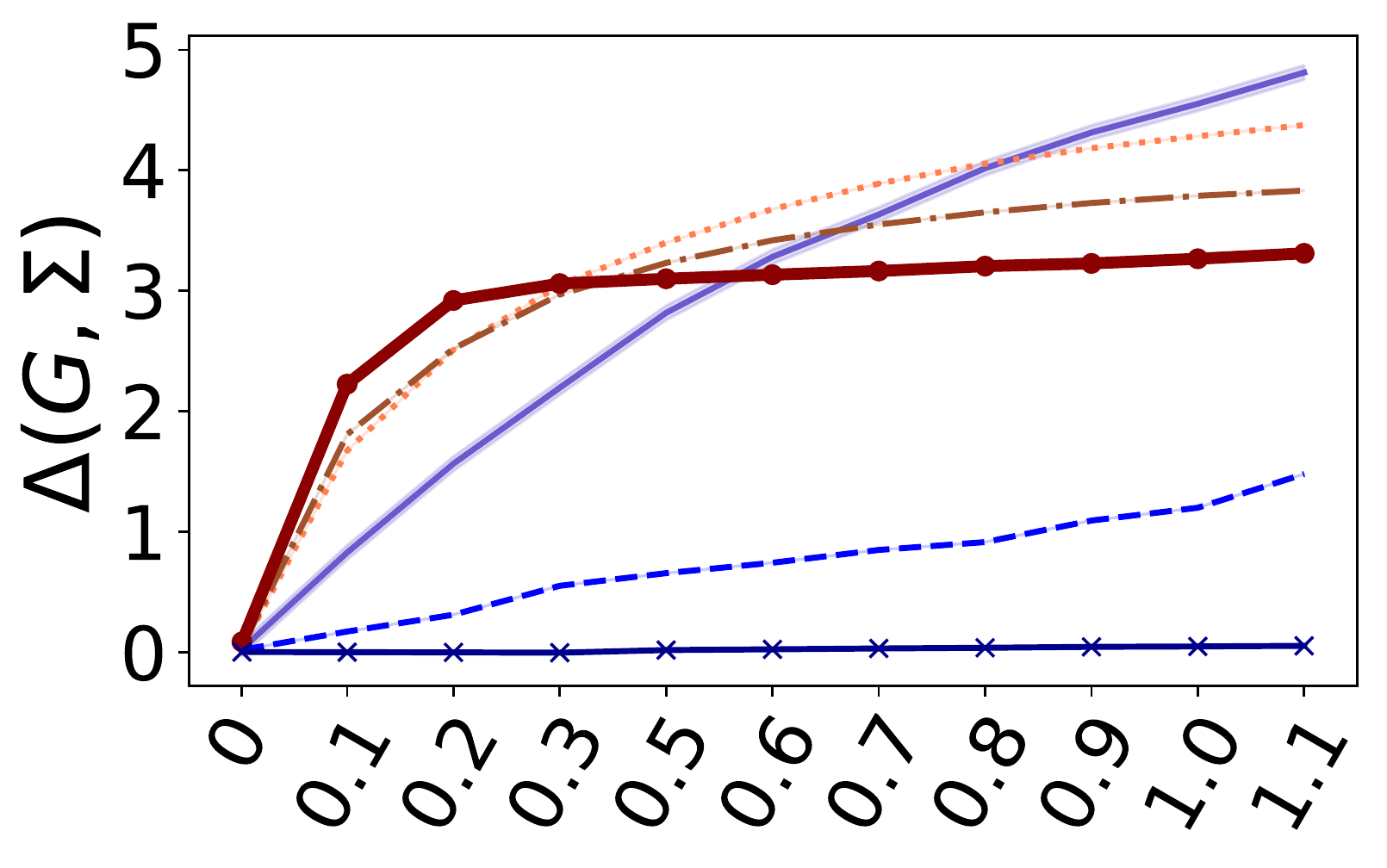}
    \includegraphics[width=\textwidth]{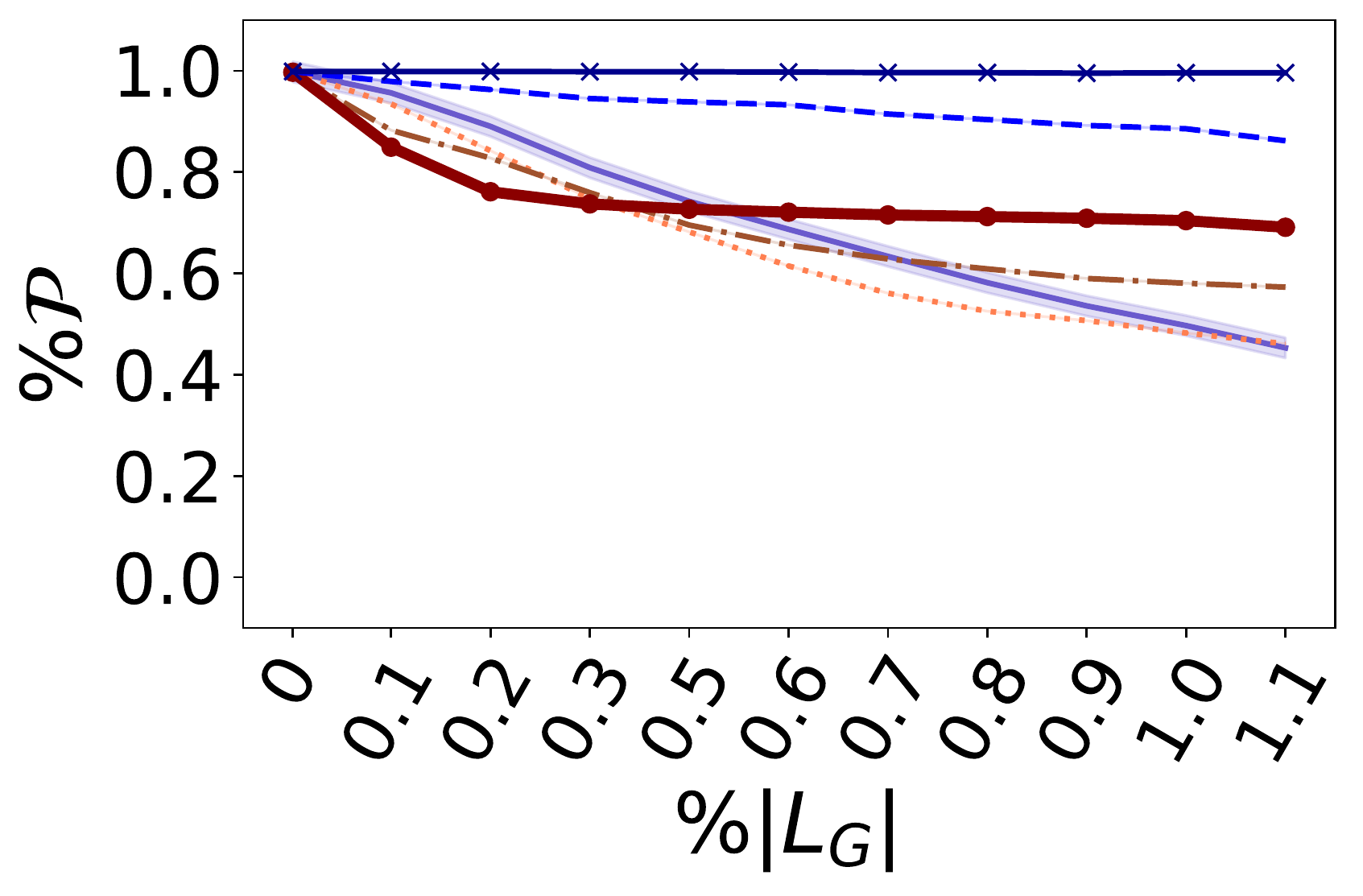}
    \ifextended%
    \else
    %\vspace{-10pt}
    \fi
    \caption{Sociology}%\label{fig:exp:algo:mihi:red}
\end{subfigure}
\begin{subfigure}[b]{0.22\textwidth}
    \includegraphics[width=\textwidth]{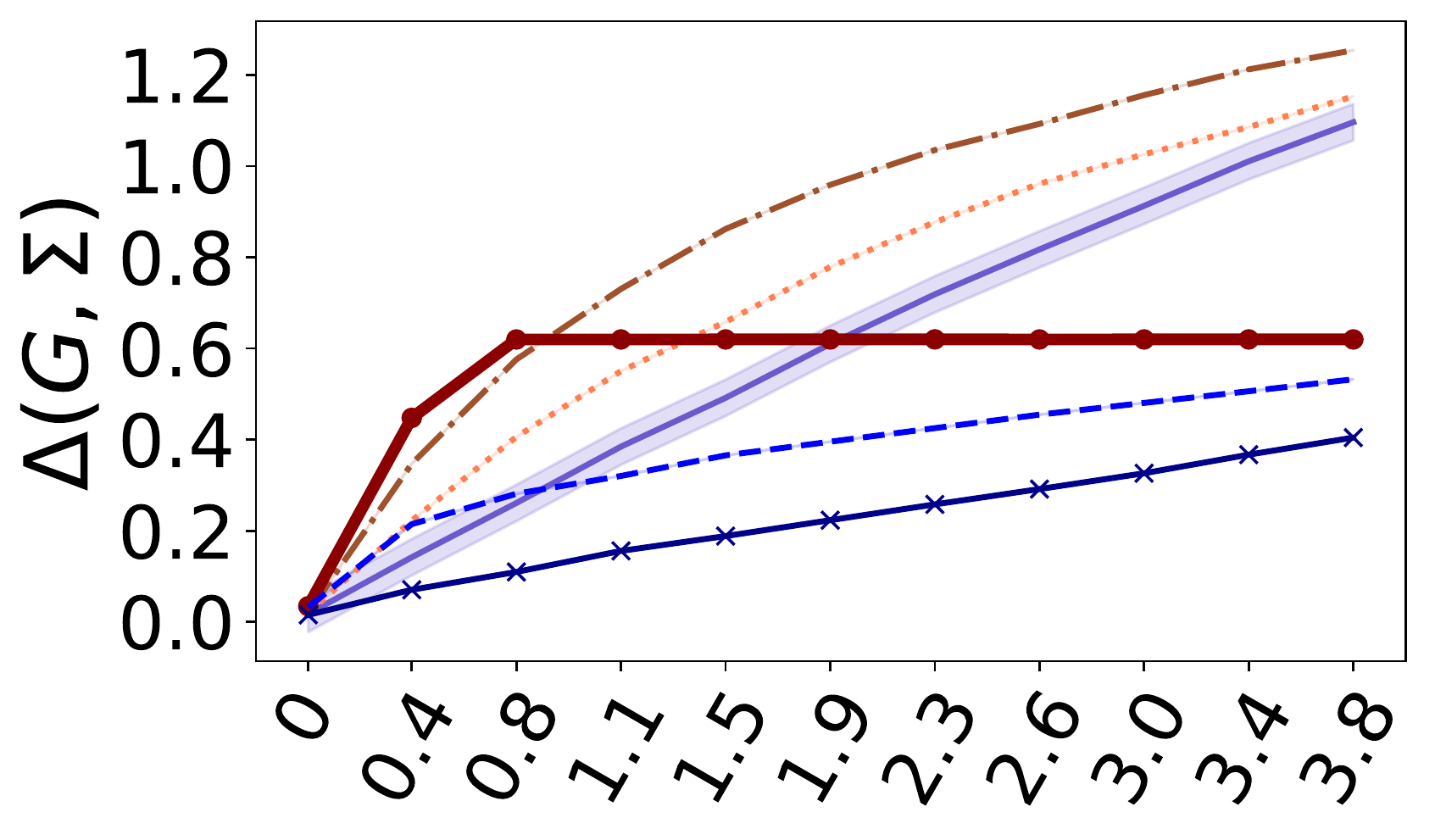}
    \includegraphics[width=\textwidth]{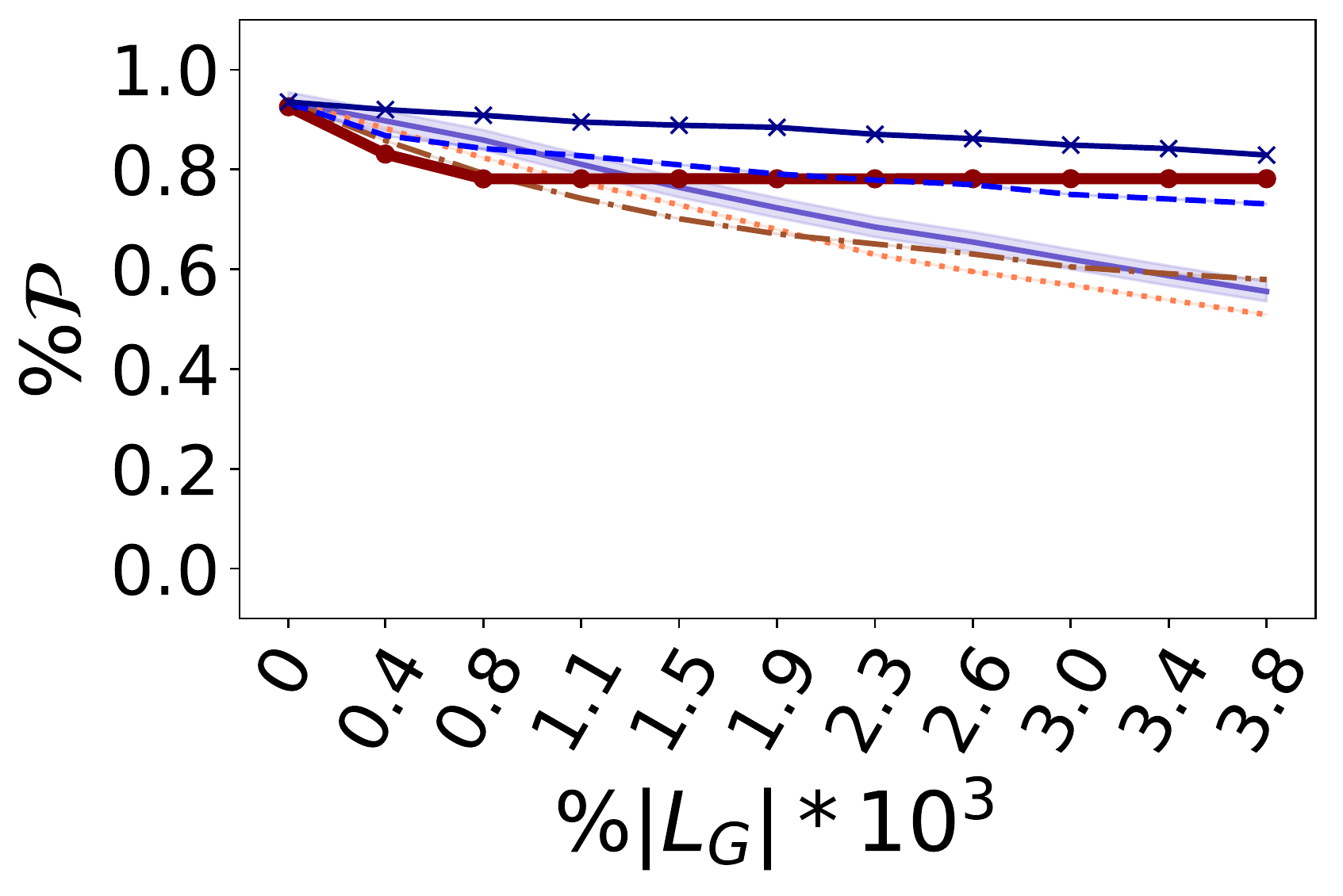}
    \ifextended%
    \else
    %\vspace{-10pt}
    \fi
    \caption{Politics}%\label{fig:exp:algo:polblogs:blue}
\end{subfigure}
\begin{subfigure}[b]{0.63\textwidth}
    \includegraphics[width=\textwidth]{legend.pdf}
\end{subfigure}
\end{center}
\ifextended%
\else
%\vspace{-10pt}
\fi
\caption{The first row shows the $\Delta(G, \Sigma)$ (y-axis) for increasing value of
$k$, reported in terms of $\%\mathcal{L}_G$, the union of possible edges across  $\badc{C}{G}$ and $\bar{C}$ for $C \in {R,B}$, (x-axis) for each algorithm. Higher values of
$\Delta$ show more significant reduction of the structural bias. In the second
row, we show the percentage of nodes that are still parochial, $\%\mathcal{P} =
\frac{\card{\pol{}{G}}-\card{\pol{}{\Gnew}}}{\card{\pol{}{G}}}$
after $k$ additions.}%\label{fig:exp:total}
\Description{Results of the experiments for different graphs. See the caption
and the text for description of these results.}
\end{figure*}

\fi
\end{document}